\newfont{\mycrnotice}{ptmr8t at 7pt}
\newfont{\myconfname}{ptmri8t at 7pt}
\newcommand\asgn{\leftarrow}
\newcommand\fct{\rightarrow}
\newcommand\cC{{\mathcal C}}
\newcommand\ceil[1]{\lceil{#1}\rceil}
\newcommand{\bbar}[1]{\overline{{#1}}}
\newcommand{\Cdiff}{\gamma}
\newcommand\vol{\textnormal{vol}}
\newcommand\volC{\textnormal{vol}_{C}}
\newcommand\PhiC{\Phi_{C}}
\newcommand\boundC{\boundary_{C}}
\newcommand\excess[2]{\textnormal{excess}(#1,#2)}
\newcommand\excessC[2]{\textnormal{excess}_{C}(#1,#2)}
\newcommand\Reals{{\mathbb R}}
\newcommand\RealsP{{\mathbb R}_{\geq 0}}
\newcommand\ApprPR{\textnormal{PageRank}}
\newcommand\PR{\textnormal{PR}}
\newcommand\PRC{\textnormal{PR}_{C}}
\newcommand\boundary{{\partial}}
\newcommand\tO{{\widetilde O}\/}
\newcommand\tOmega{{\widetilde\Omega}}
\newcommand\tTheta{{\widetilde\Theta}}
\newcommand\pd[1]{p({#1})/d({#1})}
\newcommand\volp[1]{\vol\!\left(V^p_{{#1}}\right)}
\newcommand\boundp[1]{\left|\boundary\!\left(V^p_{{#1}}\right)\right|}
\newcommand\Vp[1]{V^p_{{#1}}}
\newcommand\ppd[1]{{p^*}({#1})/d({#1})}
\newcommand\volpp[1]{\vol\!\left(V^{p^*}_{{#1}}\right)}
\newcommand\Vpp[1]{V^{p^*}_{{#1}}}
\newcommand\ppV[1]{p^*\left(V^{p^*}_{{#1}}\right)}
\newtheorem{proposition}{Proposition}
\newtheorem{lemma}[proposition]{Lemma}
\newtheorem{corollary}[proposition]{Corollary}
\newtheorem{theorem}[proposition]{Theorem}
\newtheorem{observation}[proposition]{Observation}
\newtheorem{fact}[proposition]{Fact}
\newcommand{\qed}{\hbox{\rule{6pt}{6pt}}}
\newenvironment{proof}[1][]{\paragraph{Proof{#1}}}{\hfill\qed\medskip\\}
\newenvironment{lemmanew}[1]{\paragraph{Lemma {{#1}}'}\hspace{-2ex}\em }{}
\newcommand\drop[1]{}
\newcommand\confdrop[1]{}
\newcommand\req[1]{(\ref{#1})}
\newcommand\eps{\varepsilon}
\newcommand\packlist{\setlength{\itemsep}{1pt}
\setlength{\parskip}{0pt}\setlength{\parsep}{0pt}}
\newcounter{invari}
\begin{document}
%
\title{Deterministic Edge Connectivity in Near-Linear Time\footnote{A preliminary version \cite{KT14-C} was presented at the 47th ACM Symposium on Theory of Computing (STOC 2015).}}

\author{{\em Ken-ichi Kawarabayashi}\thanks{Ken-ichi Kawarabayashi's
    research is partly supported by JST ERATO Kawarabayashi Large Graph 
Project JPMJER1201 and
    JSPS KAKENHI JP18H05291.}\\National Institute of Informatics, 
   Tokyo, Japan\\
  \texttt{k\_keniti@nii.ac.jp}
  \and {\em Mikkel Thorup}\thanks{
Mikkel Thorup's research is supported by his
Advanced Grant DFF-0602-02499B from the Danish Council for Independent
Research and by his Investigator Grant 16582, Basic Algorithms Research Copenhagen (BARC), from the VILLUM Foundation.}\\
  BARC, University of Copenhagen\\
  \texttt{mikkel2thorup@gmail.com}}

\maketitle

\begin{abstract}
We present a deterministic algorithm that computes the
edge-connectivity of a graph in near-linear time. This is for a simple
undirected unweighted graph $G$ with $n$ vertices and $m$ edges.  This is
the first $o(mn)$ time deterministic algorithm for the problem.  Our
algorithm is easily extended to find a concrete minimum edge-cut.
In fact, we can construct the classic cactus representation of all
minimum cuts in near-linear time.

The previous fastest deterministic algorithm by Gabow from STOC'91
took $\tO(m+\lambda^2 n)$, where $\lambda$ is the edge connectivity,
but $\lambda$ can be as big as $n-1$.
Karger presented a randomized near-linear time Monte Carlo
algorithm for the minimum cut problem at STOC'96, but the returned cut is only minimum 
with high probability.

Our main technical contribution is a near-linear time
algorithm that contracts vertex sets of a simple input
graph $G$ with minimum degree $\delta$, producing
a multigraph $\bbar G$ with $\tO(m/\delta)$ edges
which preserves all minimum cuts of $G$ with at least two vertices on each side.

In our deterministic near-linear time algorithm, we will decompose the
problem via low-conductance cuts found using PageRank a la Brin and
Page (1998), as analyzed by Andersson, Chung, and Lang at
FOCS'06. Normally such algorithms for low-conductance cuts are
randomized Monte Carlo algorithms, because they rely on guessing a
good start vertex. However, in our case, we have so much structure
that no guessing is needed.
\end{abstract}



\section{Introduction}
In this paper we consider classic undirected graphs (i.e., no orientation for edges) where the edges are a set of
unordered pairs of vertices. We refer to them as a {\em simple graphs\/} to
distinguish them from {\em multigraphs} (or pseudographs) allowing parallel edges.
For both cases, the {\em edge-connectivity\/} is the smallest number of edges whose removal
disconnects the graph. This is a classic global reliability measure for
the connectivity of a graph. The set of edges removed are the {\em cut edges\/}
of a {\em (global) minimum cut}, or for short, a {\em min-cut}, and the two components we get
when removing them are the {\em sides of the cut\/}. In this paper, we are
assuming that the graph is connected, which is trivially
checked in linear time.

Our main result is a deterministic near-linear time algorithm to find the
edge connectivity and a global minimum cut of a simple graph. It is
based on a new understanding of the cuts in simple graphs that does not
hold for multigraphs.

\subsection{Previous work}
We will now discuss previous work on global min-cut algorithms. For
the bounds we have $n$
vertices, $m$ edges, and (unknown) edge-connectivity $\lambda$.
In the discussion, we consider
both simple graphs and multigraphs, but our own results are only for 
simple graphs. The discussion also considers {\em
  weighted graphs}, where edges have weights. Then edge-connectivity
is no longer relevant, but the {\em size of a cut\/} is the total
weight of the cut edges. For weighted graphs, parallel edges can be
merged adding up the weights, so weighted graphs may be assumed
simple.


In 1961, Gomory and Hu~\cite{GH61} showed that the global minimum cut problem can
be solved by computing $n-1$ independent minimum $s$-$t$ cuts, that is, cuts with
$s$ and $t$ on different sides. They let $s$ be an arbitrary vertex,
and try with $t$ being any of other vertices. The point is
that to find a minimum cut, they just have to guess a vertex $t$ on the
side that $s$ does not belongs to. The $s$-$t$ cuts
are understood via Menger's classic theorem \cite{menger}. We can thus use
{\em any\/} $s$-$t$ cut algorithm. On a multigraph, if we use the classic augmenting path
algorithm of Ford and Fulkerson \cite{FF56},  we should work in
parallel on all $t$, doing the same number
of augmenting paths to each $t$. After $\lambda$ augmentation rounds,
for some $t$, we find an $s$-$t$ cut of size $\lambda$ which
is also a global min-cut. The  total
time is $O(\lambda nm)$. We could also apply the $O(m^{3/2})$ time $s$-$t$ min-cut
algorithm of Even and Tarjan \cite{ET75},
and solve the global min-cut problem for multigraphs in $O(nm^{3/2})$
time. This improves Ford-Fulkerson when $\lambda=\omega(m^{1/2})$.

The first algorithm to compute a global minimum cut faster than $n$ independent
$s$-$t$ cuts is the $O(\lambda n^2)$ time\footnote{We know $\lambda n=O(m)$, and
this implies $\lambda n^2=O(mn)$} algorithm of Podderyugin \cite{Po73}
for simple graphs from 1973. For many years, this algorithm did not receive
attention until it was rediscovered by Karzanov and Timofeev
\cite{KT86} and by Matula \cite{Ma87}, independently.

In the 1990s, the above bounds for simple graphs were generalized to
multigraphs and weighted graphs. In 1990, Nagamochi and
Ibaraki \cite{NI92A} gave an $O(m+\min\{\lambda n^2,\,pn+n^2\log n\})$
time global min-cut algorithm for multigraphs where $p \leq m$ is the number of pairs
of vertices between which the graphs has an edge. For weighted graphs, they got a general bound of $O(nm+n^2\log n)$. Hao and
Orlin~\cite{HO94} obtained an $O(nm \log (n^2/m))$ time algorithm for the
directed weighted case.  Stoer and Wagner \cite{SW97} and Frank
\cite{frank94}, independently, presented a very simple algorithm finding
a global min-cut of an undirected weighted graph within the
same $O(nm+n^2\log n)$ time bound as in \cite{NI92A}.


The current best deterministic algorithm for simple graphs is from 1991
due to Gabow~\cite{Gab95} who, using \cite{NI92} for preprocessing, 
gets down to $O(m+\lambda^2n \log (n/\lambda))$ time. Gabow \cite[pp. 268-269]{Gab95} also discuss multigraphs with a slightly
worse bound of $O(m+\lambda^2n \log n)$. A linear time
$(2+\eps)$-approximation of the edge-connectivity was presented by Matula \cite{Mat93}.


All the above-mentioned algorithms have been deterministic.
The study of randomized algorithms for the global minimum cut problem was
initiated by Karger~\cite{Kar99}. We distinguish between \emph{Las
  Vegas} algorithms with guaranteed correct answers but expected
running times, and \emph{Monte Carlo} algorithms with bounded running
time but some small probability of incorrect answers. Las Vegas
algorithms can be converted to Monte Carlo algorithms (making an error
if the algorithm doesn't terminate within a certain time bound) so Las
Vegas is generally prefered. If we have a way to certify correct
answers, then we can also convert Monte Carlo to Las Vegas by
rerunning if the certifyer fails.  Karger~\cite{Kar99} presented Las
Vegas algorithms yeilding a $(1+o(1))$-approximation to the global
minimum cut in $˜O(m)$ expected time and an exact global minimum cut
in $˜O(m\sqrt{\lambda})$ expected time.  Karger and Stein~\cite{KS96}
showed that random edge contraction works well for the global minimum
cut problem, leading to a Monte Carlo algorithm running in $O(n^2
\log^3 n)$ time.  Finally, Karger~\cite{Kar00} gave a randomized $O(m
\log^3 n)$ time Monte Carlo algorithm for the global minimum cut
problem. Karger~\cite{Kar00} points out that we are lacking a
correspondingly efficient way to certify the minimality of the
returned cut.

For more detailed history for the global minimum cut problem, we refer the reader to the book by Schrijver \cite{Lex03}. We note that a deterministic
near-linear time min-cut algorithm is known for planar graphs \cite{CFN04:plan-mincut}.

\subsection{Main results}
In this paper, we present a deterministic near linear time algorithm
for computing the edge connectivity and a global minimum cut for a
simple graph. By near-linear time, we mean $\tO(m)$ time where $\tO$ hides log factors. 
This is the first $o(mn)$ time deterministic algorithm
for the problem.  The previous best $\tO(m+\lambda^2
n)$ time bound of Gabow \cite{Gab95} is good if
$\lambda$ is small, but we may have $\lambda=\Omega(n)$.

We note that in this paper, we are not trying to minimize the number
of log factors hidden in the $\tO$-notation, nor do we count them
explicitely, e.g., we will freely use reductions like
$\tO(f(n))\tO(g(n))=\tO(f(n)g(n))$, At the end, we will loosely
estimate the number to $12$, thus estimating the running time of our
algorithm to $O(m\log^{12} n)$. The purpose of this estimate is only
to encourage other researchers to get down to a more reasonable number
of log factors, providing them a concrete bound to improve on. Indeed
Henzinger et al.~\cite{HRW17} have recently improved important parts
of our construction bringing the total running time down to $O(m(\log
n)^2(\log\log n)^2)$. 

In near-linear time we can also compute the \emph{cactus
  representation} of all global minimum cuts introduced in
\cite{DKL76}. To do so we involve the previous fastest $\tO(\lambda
m)$ time algorithm by Gabow~\cite{Gab16} as a black-box.  We note here
that Karger and Panigrahi \cite{DBLP:conf/soda/KargerP09} have presented a
near-linear time Monte Carlo algorithm for constructing the cactus
data structure.

\subsection{Technical Result}
Henceforth, we are only considering unweighted graphs. We are given a
simple graph, and we want to find a min-cut in near-linear time. We
may assume that the minimum degree $\delta$ is at least
polylogarithmic; for otherwise, we can just use Gabow's \cite{Gab95},
algorithm to find the min-cut in $\tO(\lambda m)=\tO(\delta m)=\tO(m)$
time.  For our purposes, it suffices to assume $\delta\geq\log^6 n$.

By a {\em trivial cut\/}, we mean a cut where one side consists of a
single vertex. We note that the minimum degree $\delta$ is an upper bound on the edge-connectivity $\lambda$ since it is the
smallest size of a trivial cut. Finding $\delta$ is trivial, but
we could have $\lambda < \delta$.

What makes simple graphs special is that if you have a non-trivial cut in a
simple graph, then each side needs to have at least $\delta$ vertices. This
observation and its relation to conductance will be discussed below in Section 
\ref{sec:min-cut-low-conductance}.

By {\em contracting a vertex set $U\subseteq V$}, we mean identifying
the vertices in $U$ while removing the edges between them. We may not
check that $U$ is connected, so this may not correspond to edge
contractions. The identity of edges not removed are preserved, and a
cut is preserved if all the cut edges are preserved. 

The basic idea in our min-cut algoritm is do contractions
in the graph while preserving all non-trivial min-cuts, continuing until the graph is so small
that Gabow's algorithm can finish in $\tO(m)$ time where $m$ is the number of edges in the
original simple input graph. The contractions may introduce parallel edges,
so internally, the algorithm works with a multigraph.

Our main technical contribution is to prove
the following theorem:
\begin{theorem}\label{thm:main-tech}
Given a simple input graph $G$ with $n$ vertices, $m$ edges, and minimum degree
$\delta$,  in near-linear time, we can contract vertex sets
producing a multigraph $\bbar G$ which has only
$\bbar m=\tO(m/\delta)$ edges, yet which preserves all non-trivial min-cuts of
$G$.
\end{theorem}
From Theorem \ref{thm:main-tech},  we easily get our near-linear min-cut algorithm:
\begin{corollary}\label{cor:min-cut}
We can find a minimum cut of a simple graph $G$ in near-linear time.
\end{corollary}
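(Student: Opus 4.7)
The plan is to combine Theorem \ref{thm:main-tech} with Gabow's $\tO(\lambda m)$-time deterministic min-cut algorithm \cite{Gab95}, using the minimum degree $\delta$ as a two-sided lever: either $\delta$ is small, in which case the edge-connectivity $\lambda\leq\delta$ is small and Gabow is already near-linear; or $\delta$ is large, in which case Theorem \ref{thm:main-tech} squeezes the graph down to $\tO(m/\delta)$ edges before we invoke Gabow.

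Concretely, first compute $\delta$ and a vertex $v$ of degree $\delta$ in linear time; the corresponding trivial cut $(\{v\},V\setminus\{v\})$ realizes the best trivial cut and has size $\delta$. If $\delta\leq\log^c n$ for the constant $c$ from Theorem \ref{thm:main-tech}, then $\lambda\leq\delta=\tO(1)$ and running Gabow directly on $G$ takes $\tO(\lambda m)=\tO(m)$ time and returns a global minimum cut. Otherwise $\delta\geq\log^c n$, so we may apply Theorem \ref{thm:main-tech} in near-linear time to obtain a multigraph $\bbar G$ with $\bbar m=\tO(m/\delta)$ edges that preserves every non-trivial min-cut of $G$. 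Now run Gabow on $\bbar G$; since the edge-connectivity of $\bbar G$ is at most $\delta$ (the smallest trivial cut of $G$ still corresponds to a cut of $\bbar G$, or at worst we can add one extra edge of weight $\delta$ to cap the answer), the running time is $\tO(\delta\cdot\bbar m)=\tO(\delta\cdot m/\delta)=\tO(m)$.

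Having computed a minimum cut $(\bbar S,\bbar V\setminus\bbar S)$ of $\bbar G$, we lift it back to $G$ by expanding each contracted super-vertex into the vertex set of $G$ it represents, obtaining a cut $(S,V\setminus S)$ of $G$ of the same size. Finally, compare this non-trivial candidate of size $|\bbar E(\bbar S,\bbar V\setminus\bbar S)|$ with the trivial candidate of size $\delta$ and return the smaller one, together with the corresponding side of the cut. By Theorem \ref{thm:main-tech} the contraction preserves all non-trivial min-cuts of $G$, and by construction the trivial case is handled explicitly, so the returned cut is a global minimum cut of $G$.

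The only potential obstacle in this reduction is the bookkeeping needed to lift a cut of $\bbar G$ back to a cut of $G$ in linear time, but this is straightforward: Theorem \ref{thm:main-tech} constructs $\bbar G$ by explicitly identifying vertex sets, so we can store, for each super-vertex of $\bbar G$, the list of original vertices it contains, with the lists forming a partition of $V$ of total size $n$. All genuine difficulty is absorbed by Theorem \ref{thm:main-tech} itself; the corollary is then an immediate reduction to a black-box call of Gabow's algorithm on a graph that is small enough to make the $\lambda$ factor affordable.
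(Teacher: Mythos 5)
Your proposal takes essentially the same route as the paper: reduce via Theorem~\ref{thm:main-tech}, run Gabow's $\tO(\lambda m)$ algorithm on the contracted graph $\bbar G$ with $\tO(m/\delta)$ edges, and compare against the trivial cut of size $\delta$. The extra steps you spell out (handling $\delta\leq\log^c n$ separately, lifting the cut back to $G$ via stored super-vertex membership lists) are sound and implicit in the paper; the paper dispenses with the small-$\delta$ case earlier in the text and treats lifting as obvious.

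One point deserves more care than you give it. Theorem~\ref{thm:main-tech} only promises to preserve \emph{non-trivial} min-cuts of $G$, so if every min-degree vertex gets absorbed into a super vertex, $\bbar G$ can legitimately have edge-connectivity \emph{strictly greater} than $\delta$. Thus the parenthetical ``the smallest trivial cut of $G$ still corresponds to a cut of $\bbar G$'' is false in general, and without a fix Gabow's running time on $\bbar G$ is not bounded by $\tO(\delta\bbar m)$. You do gesture at a repair (``at worst we can add one extra edge of weight $\delta$ to cap the answer''), and adding a dummy vertex joined by $\delta$ parallel edges would indeed cap both the answer and the running time while only adding $O(\delta)$ edges, so the idea works; but the paper's cleaner formulation is simply to run Gabow with an abort threshold of $\delta$ (``asking it to fail if the edge-connectivity is above $\delta$''), which bounds the time directly without modifying $\bbar G$. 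If you keep your variant, state the dummy-vertex construction precisely rather than as a weight on an ``extra edge,'' since the graph is unweighted.

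Minor typo: you wrote $|\bbar E(\bbar S,\bbar V\setminus\bbar S)|$ for the candidate cut size; this uses notation the paper never defines, and it is cleaner just to write $|\boundary \bbar S|$ in $\bbar G$.
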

\begin{proof}
Let $\delta$ be the minimum degree of $G$. We apply the
Theorem \ref{thm:main-tech} to $G$
producing the graph $\bbar G$. We now run Gabow's min-cut algorithm  \cite{Gab95} on $\bbar G$, asking it to fail if the edge-connectivity is
above $\delta$. This takes  $\tO(\delta \bbar m)=\tO(m)$ time,
and now we compare the output with the minimum degree $\delta$.
\end{proof}
Likewise, in near-linear time, we can obtain the cactus representation
of all global minimum cuts from \cite{DKL76} by applying the
cactus algorithm of Gabow~\cite{Gab16} to $\bbar G$. Having produced
the cactus $\bbar C$ of $\bbar G$, we just need to add min-degree vertices as extra
needles so as to get the cactus of the input graph $G$.
A description of this including the definition of the min-cut cactus is given in Section \ref{cactuss}.

We note that Theorem \ref{thm:main-tech} cannot hold if the input
graph is a multigraph. To see this, consider a cycle of length $n\geq
4$, but where every edge is replaced by $k=(\log n)^{\omega(1)}$
parallel edges. Now every edge is involved in a non-trivial min-cut
where each side consists of at least two consecutive vertices, and
therefore no edges can be contracted. This shows that the contractions
of Theorem \ref{thm:main-tech} are very specific to simple
graphs. Also, they can only preserve non-trivial min-cuts, for if we,
for example, take a complete graph, then every edge is in a trivial
min-cut. However, a complete graph can be contracted to a single
vertex since it has no non-trivial min-cut.

\paragraph{Further results}
While the reduction in Theorem \ref{thm:main-tech} of the number of
edges looks like a typical sparsification, it is not, for edges are
contracted, not deleted, and the resulting $\bbar G$ will have much
fewer vertices than $G$. In fact, combining with techniques of
Nagamochi and Ibaraki \cite{NI92}, we can make sure that the number of
vertices remaining after the contractions is $\tO(n/\delta)$. Since
the number of min-cuts in any multigraph is at most quadratic, a nice
consequence is that that the number of min-cuts in a
simple graph with $n$ vertices and minimum degree $\delta$ is at most
$n+\tO((n/\delta)^2)$. 
We are not aware of anyone else that has
observed that a large minimum degree in a simple graph implies few
minimum cuts, though it does apppear that this fact could also
be derived from the cactus representation \cite{DKL76}.

Our contraction technique can be strengthend to also preserve
approximately min-cuts. More precisely, with min-degree $\delta$ and 
edge connectivity $\lambda$, we will strengthen
Theorem \ref{thm:main-tech} to preserve all non-trivial cuts
of size at most $\lambda+(1-\eps)\delta$, where $\eps$ is an arbitrarily small
positive constant. Since $\lambda\leq\delta$, this implies 
that we preserve all $(2-\eps)$-approximate min-cuts.
Formally, we will prove:
\begin{theorem}\label{thm:main-tech-strong}
Let $\eps\in(0,1]$ be a constant. Given a simple input graph $G$
with $n$ vertices, $m$ edges, minimum degree $\delta$, and (unknown) edge connectivity $\lambda$, in $\tO(m)$ time, we can
contract vertex sets producing a multigraph $\bbar G$ which has only
$\tO(n)$ edges and $\tO(n/\delta)$ vertices, yet which preserves all
non-trivial cuts of size below $\lambda+(1-\eps)\delta$.
\end{theorem}
We know from \cite{HW96} that for a multigraph with $n$ vertices and
edge connectivity $\lambda$, the number cuts of size $3\lambda/2$
is $O(n^2)$. Applying this to the contracted graph from Theorem 
\ref{thm:main-tech-strong} with $\eps=1/2$, we get 
\begin{corollary}\label{cor:cut-number} In simple graph with
with $n$ vertices, minimum degree $\delta$, and edge
connectivity $\lambda$, there are at most
$n+\tO((n/\delta)^2)$ cuts
of size at most $3\lambda/2$. 
\end{corollary}
The ability to preserve approximate min-cuts with Theorem \ref{thm:main-tech-strong} has been used for the amortization in a recent efficient
algorithm \cite{GHT16:inc-edge-conn} 
to maintain a min-cut of a simple graph incrementally, paying
only polylogarithmic time per edge insertion.

\subsection{Minimum cuts and low conductance}\label{sec:min-cut-low-conductance}
Our approach to finding a minimum cut involves cuts of low conductance,
defined below. Generally we {\em define a cut by specifying
one side $U\subset V$}. Then the other side $T=V\setminus U$ is implicit.
No side is allowed to be empty. Algorithmically, it will typically be the smaller side that we specify
explicitly. The
edges leaving $U$ are the \emph{cut edges}, and the set of cut edges
is denoted $\boundary U=\boundary T$. The {\em size of the cut\/} is the
number of cut edges $|\boundary U|$. We do not require that any side
of a cut remain connected if the cut edges are removed.

We are also interested in the sum of the degrees of vertices in $U$ called the {\em volume of $U$\/} defined as
\[\vol(U)=\sum_{v\in U} d(v)\]
Edges with both end-points in $U$ are called {\em internal to $U$\/},
and they are counted twice in the volume of $U$.

Now the {\em conductance of $U$\/} is
defined by
\[\Phi(U)=\frac{|\boundary U|}{\min\{\vol(U),\vol(T)\}}=\Phi(T).\]
\begin{observation}\label{obs:min-sparse}
Let $S$ be the smaller side of a min-cut of our simple graph $G$. Then
either the cut is trivial with $S$ consisting of a single vertex, or $S$ has volume at least
$\delta^2$ and the conductance is $\Phi(S)\leq1/\delta$.
\end{observation}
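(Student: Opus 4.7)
The plan is to split into the trivial and non-trivial cases and in the non-trivial case exploit the fact that, because the graph is simple, a small set of vertices cannot absorb enough of the minimum-degree requirement internally, forcing many crossing edges unless the set is already large.

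First I would dispose of the case $|S|=1$, which is exactly the ``trivial'' clause. So assume $|S|\geq 2$, and let $\lambda=|\boundary S|$ be the size of the min-cut. Since the minimum degree $\delta$ bounds the connectivity, we have $\lambda\leq \delta$.

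Next, I would double-count the cut edges from inside $S$. Because $G$ is simple, every vertex $v\in S$ has at most $|S|-1$ neighbors inside $S$, so it contributes at least $\delta-(|S|-1)$ edges to $\boundary S$. Summing over $v\in S$ gives
\[
\delta\;\geq\;\lambda\;=\;|\boundary S|\;\geq\;|S|\bigl(\delta+1-|S|\bigr).
\]
The quadratic $x(\delta+1-x)-\delta=-(x-1)(x-\delta)$ is nonpositive exactly when $x\leq 1$ or $x\geq \delta$; since $|S|\geq 2$, we must have $|S|\geq \delta$. Consequently $\vol(S)\geq |S|\cdot\delta\geq \delta^2$.

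Finally, since $|S|\leq |T|$ (where $T=V\setminus S$) we also have $|T|\geq \delta\geq 2$, so applying exactly the same argument to $T$ yields $\vol(T)\geq \delta^2$. Therefore
\[
\Phi(S)\;=\;\frac{|\boundary S|}{\min\{\vol(S),\vol(T)\}}\;\leq\;\frac{\delta}{\delta^2}\;=\;\frac{1}{\delta}.
\]
There is no real obstacle here; the only non-mechanical step is recognizing that the simple-graph hypothesis forces the factorization $(x-1)(x-\delta)$, which pins $|S|$ below $2$ or above $\delta$ and thereby yields the volume lower bound on both sides of the cut.
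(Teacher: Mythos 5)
Your proof is correct and follows essentially the same route as the paper: bound the crossing edges from each $v\in S$ by $\delta+1-|S|$ using simplicity, sum to get $|S|(\delta+1-|S|)\leq\delta$, and factor the quadratic to force $|S|=1$ or $|S|\geq\delta$. Your extra step verifying $\vol(T)\geq\delta^2$ as well is a harmless bit of caution (the paper takes ``smaller side'' to mean smaller by volume, so $\min\{\vol(S),\vol(T)\}=\vol(S)$ directly), but it does not change the substance.
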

\begin{proof}
The graph has minimum degree
$\delta$ so the min-cut has at most $\delta$ edges. Since $G$ is
simple, a vertex $v\in S$ has at least $\delta-(|S|-1)$ edges leaving
$S$. The total number of edges leaving $S$ is thus at least
$|S|(\delta+1-|S|)$, and for this to be at most $\delta$, we need
$|S|=1$ or $|S|\geq \delta$. In the latter case, we have
$\vol(S)\geq \delta^2$, so $\Phi(S)\leq
1/\delta$.
\end{proof}
\subsection{Certify-or-cut}
In our algorithm, we are going to assume that the
simple input graph $G$ has minimum degree
\[\delta\geq\lg^6 n.\]
By Observation \ref{obs:min-sparse}, this means that any non-trivial
min-cut has very low conductance. With this in mind, we
are going to devise a near-linear time
deterministic ``certify-or-cut'' algorithm that will either
\begin{enumerate}
\item Certify that there are no non-trivial min-cuts. In
particular, this witnesses that any min-degree vertex forms
the side of a global min-cut, or
\item Find a low-conductance cut.
\end{enumerate}
We note that each of the above tasks alone is beyond our current
understanding of deterministic algorithms.  For the first certification task,
recall
the issue mentioned by Karger~\cite{Kar00} that we have no
efficient deterministic way of certifying that a proposed minimum cut is
indeed minimum. Our task
is no easier, for if it was, to certify that a cut of size $k\leq \delta$
is minimum, we could attach a
complete graph on $k$ vertices, where $k-1$ of the vertices are new.
Each new vertex defines a trivial cut of size $k-1$, and the
edge connectivity of the original graph is $k$ if and only if there
is no non-trivial minimum cut in the new graph.

For the second task, we want to find a low-conductance cut, e.g.,
using PageRank \cite{Google98:pagerank} as analyzed by Andersson,
Chung, and Lang \cite{ACL07:pagerank}. However, such algorithms for
low-conductance cuts are randomized Monte Carlo algorithms, because
they rely on guessing a good start vertex. For cut-or-witness,
however, we only have to find a low conductance cut if we fail to
witness the minimality of the trivial cuts, but then we will have so
much structure that no guessing is needed.

Our certify-or-cut algorithm will illustrate some of the basic techniques presented in this paper, including a study of what happens in the endgame of PageRank when most mass has been distributed, yet some vertex is still left out.

\subsection{The overall algorithm}\label{ssec:overall}
We will now sketch the basic ideas by using a more elaborate
certify-or-cut algorithm for finding a minimum cut, and also point
to the issues that arise.

Given a component $C$ of subgraph $H$ of $G$, suppose we can either
\begin{enumerate}
\item certify that $C$ is a so-called ``cluster''
implying that no min-cut of $G$ induces
a non-trivial cut of $C$,
or
\item find a cut of $C$ of conductance $o(1/\log m)$.
\end{enumerate}
Then, starting from $H=G$, we will recursively remove the
low-conductance cuts, until we have a subgraph $H$ of $G$ where all
the components are certified clusters. Inside each cluster $C$, we
will identify a so-called ``core'' $A$ with the property that no
non-trivial min-cut of $G$ makes any cut of $A$ (let us observe that
$A$ may not be all of $C$ because a non-trivial min-cut of $G$ could
induce a trivial cut of $C$). Cores can therefore be contracted
without affecting any non-trivial min-cut of $G$.

The important observation here is that when removing the
low-conductance cuts, most edges survive in $H$. 
The same observation was used in Spielman and Teng's spectral sparsifiers \cite{ST11:spectral}, though they used randomization to find the low-conductance
cuts. The reason that only few edges get
removed by recursive low-conductance cuts is that we
can amortize the edges removed over the edges incident to the smaller
side where smaller is measured in terms of volume, that is, number of
incident edges. Each edge incident to the smaller side pays $o(1/\log
m)$ (because of the low-conductance cuts), and it can end on the
smaller side at most $\lg m$ times, where $\lg=\log_2$. The total
fraction of edges cut is thus $o(1)$, so most edges remain when we are
done removing low-conductance cuts, certifying that each remaining component of $H$ is a cluster. This is important because we
want many edges to be contracted when we contract the cores of the
clusters in $H$.

We now point out the issues we have to address.
The first issue is that as edges get removed, the degrees of
the remaining vertices will decrease, and then the minimum degree
could fall below $\lg n$, so we can no longer use
Observation  \ref{obs:min-sparse} to conclude that a non-trivial cut
has conductance $o(1/\log m)$. Our fix to this issue will be to not
only remove cut edges, but also ``trim'' the resulting components, removing
all vertices that have lost $3/5$ of their original edges. As we shall see,
this will only increase the number of edges removed by a factor $5$, so most
edges will still  remain in the final clusters.

The second issue happens when we contract the cluster cores in a graph $\bbar G$ that preserves
all the non-trivial min-cuts of $G$. This  may introduce
parallel edges, and hence Observation~\ref{obs:min-sparse} fails
completely, e.g., consider a path of length $4$ where consecutive
vertices are connected by $\delta$ parallel edges. A non-trivial
min-cut with two vertices on each side has conductance $1/2$. We
will, however, argue that if a vertex is dominated by parallel edges,
then it is somehow done and can be ignored.

Handling the above
two complications will also force us to adopt a more complicated notion of
a cluster, but our algorithm will still follow the basic pattern of the
above sketch.

　
The goal is to contract cluster cores until $\bbar G$ has only
$\tO(m/\delta)$ edges, yet preserves all non-trivial min-cuts from $G$,
as desired for Theorem \ref{thm:main-tech}. To find a minimum cut
of $G$, we finish by applying Gabow's algorithm \cite{Gab95} as described
in Corollary \ref{cor:min-cut}.

\subsection{Recent improvement of Henzinger et al.} After this work was announced at STOC'15 
\cite{KT14-C}, Henzinger et al.~\cite{HRW17} have improved the
concrete running time from $O(m\log^{12} n)$ to $O(m(\log
n)^2(\log\log n)^2)$, which also improves the $O(m\log^3 n)$ time
bound of Karger's Monte Carlo algorithm \cite{Kar00}. The algorithm of
Henzinger et al.~uses the overall approach developped in this
paper. However, instead of using a diffusion based PageRank like us to find
low-conductance cuts, they use an interesting local flow based
algorithm.  We note that diffusion and flow based algorithms have
different advantages in different settings, and we hope that our novel
use and analysis of the PageRank diffusion will inspire other algorithmic
applications.

\subsection{Notations}\label{sec:notation} 
As a generic notation, if we have some graph parameter like the edge
connectivity $\lambda$, we may use a subscript to specify which graph
it is measured on as in $\lambda_H$ for the edge connectivity of the
graph $H$. We may also put $H$ in paranthesis, e.g., we use
$V(H)$ and $E(H)$ to denote the vertices and edges in $H$, and let
$n(H)=|V(H)|$ and $m(H)=|E(H)|$ denote the number of vertices and edges in $H$.

To simplify calculations, we will make use of $O$-, 
$o$-, and $\tO$-notation to hide constants
and log factors when they are not important to our results. Starting with
$O$-notation, by definition, $f_i(n)=O(g_i(n))$ means that there
constants $n_i$ and $c_i$ such that $n\geq n_i$ implies $f_i(n)\leq
c_i g(n)$.  If we have this for $i=1,2$, then
$f_1(n)f_2(n)=O(g_1(n)g_2(n))$ since $n\geq\max\{n_1,n_2\}$ implies
$f_1(n)f_2(n)=c_1c_2\,g_1(n)g_2(n)$. Instead of making $f_1$ and $f_2$
explicit, we can write this rule as
$O(g_1(n))O(g_2(n))=O(g_1(n)g_2(n))$.  Likewise, we have
$O(g_1(n))+O(g_2(n))=O(g_1(n)+g_2(n))$ and $O(O(g(n)))=O(g(n))$, all
illustrating how $O$-notation simplifies calculations. When we say
that $n$ is large enough, we mean that it is bigger than any of the
$n_i$ used in our analysis.

The assumption of a large enough $n$ becomes more important when we
combine $O$-notation with $o$-notation. By definition,
$f_i(n)=o(g_i(n))$ means that for any constant $c_i$ there is a
constant $n_i$ such that $n\geq n_i$ implies $f_i(n)< g(n)/c_i$.
Suppose we have $f(n)=o(g(n))$ and $g(n)=O(h(n))$. Then there exists
an $n_0$ such that $n\geq n_0$ implies $f(n)<h(n)$. More precisely,
from $g(n)=O(h(n))$ we get that there are constants $n_1$ and $c_1$
such that $n\geq n_1$ implies $g(n)\leq c_1 h(n)$. Next, from
$f(n)=o(g(n))$, we get that there is a constant $n_2$ depending on
$c_1$ such that $n\geq n_2$ implies $f(n)< g(n)/c_1$.  Thus $n\geq
n_0=\max\{n_1,n_2\}$ implies $f(n)<h(n)$. Our most common use of $n$
being sufficiently large is that $o(1)$ becomes
smaller than any concrete constant.

As usual, we have the derived notations $f(n)=\Omega(g(n))\iff g(n)=O(f(n))$,
$f(n)=\Theta(g(n))\iff f(n)=O(g(n))\wedge f(n)=\Omega(g(n))$, and $f(n)=\omega(g(n))\iff g(n)=o(f(n))$.

Finally, we have the $\tO$-notation, where $f_i(n)=\tO(g_i(n))$ means
that there constants $n_i$ and $c_i$ such that $n\geq n_i$ implies
$f_i(n)\leq g(n)\lg^{c_i} n$.  As for the $O$-notation, we get
simplyfying rules like $\tO(g_1(n))\tO(g_2(n))=\tO(g_1(n)g_2(n))$,
$\tO(g_1(n))+\tO(g_2(n))=\tO(g_1(n)+g_2(n))$ and
$\tO(\tO(g(n)))=\tO(g(n))$. As for $O$-notation, we are going to use
the derived notations $f(n)=\tOmega(g(n))\iff g(n)=\tO(f(n))$ and
$f(n)=\tTheta(g(n))\iff f(n)=\tO(g(n))\wedge f(n)=\tOmega(g(n))$. We
are only going to use $\tO/\tOmega/\tTheta$-notation to hide $\lg n$
factors where $n$ denotes the number of vertices in the simple input
graph for which we want to find a minimum cut.

\subsection{Contents}
This paper is structured as follows. First we will show how to
implement the certify-or-cut algorithm described above, since it
introduces most of the interesting new ideas in a quite clean form. To
do so, we will first describe our view of PageRank in Section
\ref{sec:PageRank}, which includes a new theorem on the endgame.  Next
we describe the certify-or-cut algorithm in Section
\ref{sec:certify-or-cut}. After this warm-up, we are ready to present the recursive
set-up for our contraction based min-cut algorithm in Section \ref{sec:min-cut}.
To complete the min-cut algorithm, we present the use of low-conductance
cuts in Section \ref{sec:cluster-or-cut}.  In Section \ref{sec:approx}, we show how we can also preserve approximate
min-cuts.
Finally, in Section \ref{sec:pagerank-anal}
we prove the PageRank theorems claimed in Section \ref{sec:PageRank}.
A cactus construction is given in Section \ref{cactuss}.

\section{Sparse cuts by PageRank}\label{sec:PageRank}
We are going to find sparse cuts using the PageRank algorithm from 
\cite{ACL07:pagerank}.  We will be running it on a multigraph with
$m$ edges. This will be a subroutine of our min-cut algorithm, applied
to different minors of the original simple input graph.

The PageRank algorithm is operating with a mass distributions $p\in \RealsP^V$ assigning
non-negative mass to the vertices. Given a subset $U$ of the
vertices, $p(U)=\sum_{v\in U} p(v)$ denotes the \emph{total mass} on the subset.
We refer to $p(U)/\vol(U)$ as the \emph{density} on $U$. For an individual
vertex $v$, the density is $p(v)/d(v)=p(v)/\vol(\{v\})$.

We start with some initial mass distribution $p^\circ\in\Reals^V$ on
the vertices. Most of the time, the total mass is normalized to 1,
corresponding to a probability distribution.

The algorithm has a parameter $\alpha$ called the \emph{teleportation} constant,
and we assume $\alpha\leq 1/3$. In our min-cut algorithm, we will have $\alpha=\tOmega(1)$,
but this is not assumed for the  results in this section.

The algorithm operates by moving
mass between two mass distributions: a {\em residual mass\/} $r$ which is
initialized as the initial distribution $p^\circ$, and a {\em settled mass\/} $p$
which is initially zero on all vertices. Generally we say that the
{\em density of mass\/} on a vertex is the mass divided by the degree

The algorithm works by {\em pushing} residual mass from vertices. To push
the residual mass on $u$, we first settle a fraction
$\alpha$ of the residual mass on $u$, and then we spread half the
remaining residual mass evenly to the neighbors of $u$. This is
described in Algorithm \ref{alg:push}.
\begin{algorithm}\label{alg:push}
\caption{Push$(\alpha,u)$}
$p(u)\asgn p(u)+\alpha r(u)$\;
\lFor{$(u,v)\in E$}{
$r(v)\asgn r(v)+(1-\alpha)r(u)/(2d(u))$}
$r(u)\asgn (1-\alpha)r(u)/2$.
\end{algorithm}
The overall algorithm is flexible in that we can apply pushes to the
vertices in any order we want. To control the amount of work done,
\cite{ACL07:pagerank} introduces a parameter $\eps$, and they only
push from a vertex $u$ if the \emph{residual density} $r(u)/d(u)$ is at least
$\eps$. The resulting {\em PageRank\/} algorithm is
described in Algorithm \ref{alg:apprpr}.
\begin{algorithm}\label{alg:apprpr}
\caption{\ApprPR$(\alpha,\eps,p^\circ)$}
$r\asgn p^\circ$;$\quad p\asgn 0^V$\;
\lWhile{$\exists u: r(u)/d(u)\geq \eps$}{Push$(\alpha,u)$}
\end{algorithm}
As noted in \cite{ACL07:pagerank}, the time to do a push at $u$ is
$d(u)$ and it settles $\alpha r(u)\geq \alpha d(u)\eps$ of the
residual mass. If we thus start with a total residual mass at most 1,
the total amount of work is $O(1/(\alpha \eps))$.
This does assume, however, that $p^\circ$ is
presented in such a way that we have direct access to vertices density
$\eps$ or more. In fact, we typically assume that the vertices with positive mass are
listed in order of non-increasing initial density. Then, for any
$\eps$, we find those with initial density above $\eps$ as a prefix of
this list.

As $\eps$ approaches $0$, the residual mass vanishes, and then, as proved
in  \cite{ACL07:pagerank}, the
settled mass approaches a unique limit denoted $\PR(\alpha,p^\circ)$ that we
refer to as the {\em limit mass distribution}. The limit mass distribution
will play an important role in our analysis, but algorithmically, we will
only run the PageRank from Algorithm \ref{alg:apprpr}
with $\eps=1/\log^{O(1)}n$.
From \cite{ACL07:pagerank}, we get that pushes
maintain the following invariant:
\begin{equation}\label{eq:inv-push}
\PR(\alpha,p^\circ)=p+\PR(\alpha,r).
\end{equation}
From \cite{ACL07:pagerank} we know that
$\PR(\alpha,\cdot)$ is a {\em non-negative linear transformation\/} $\Reals^n\rightarrow
\Reals^n$, that is, for any teleportation constant $\alpha$, there is 
an $n\times n$ matrix $M_\alpha$ with
real non-negative entries such that for any initial distribution
vector $p^{\circ}$, we get the limit distribution  $\PR(\alpha,p^\circ)=p^{\circ} M_\alpha$. 
For any $\sigma\in\Reals$,
let $\bbar\sigma$ be the distribution where all vertices have density
$\sigma$. From \cite{ACL07:pagerank} we know that
$\bbar\sigma$ is a fix-point for $\PR(\alpha,\cdot)$, that is,
$\PR(\alpha,\bbar\sigma)=\bbar\sigma$, and we call it a {\em
  stationary\/} distribution.

Mass can only be moved and settled via pushes.
Consider an edge $(u,v)\in E$. Viewing it as
directed from $u$ to $v$, we get a positive flow when we push from $u$,
pushing $(1-\alpha)r(u)/(2d(u))$
mass over $(u,v)$ to $v$ while settling $\alpha r(u)$ mass at $u$. Likewise we get
a negative flow over $(u,v)$ when we push from $v$. Hence
\begin{fact}\label{fact:flow}  Recall that $p(.)$ is the settled mass. After any sequence of pushes for any
$(u,v)\in E$, the total net flow of mass over $(u,v)$ is $\frac{1-\alpha}{2\alpha}
\left(\pd u- \pd v\right)$.
\end{fact}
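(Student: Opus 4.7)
The plan is to exploit the rigid proportionality between the mass settled at $u$ during a push and the mass sent across each of $u$'s outgoing edges during that same push. Concretely, the push operation at $u$ settles $\alpha r(u)$ on $u$ while sending $(1-\alpha)r(u)/(2d(u))$ over the edge $(u,v)$. Eliminating the (time-varying) residual $r(u)$, the contribution to the $u\to v$ flow in any single push at $u$ equals exactly $\frac{1-\alpha}{2\alpha\, d(u)}$ times the contribution to the settled mass at $u$ in that same push.

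Next I would sum this identity over the entire sequence of pushes. Let $p^+(u)$ denote the total mass settled at $u$ across the whole run (so $p^+(u) = p(u)$ in the notation of the fact statement, since pushes only add to $p$). Since every push at $u$ contributes to both quantities in the fixed ratio above, the total mass ever sent from $u$ to $v$ along the edge $(u,v)$ is $\frac{1-\alpha}{2\alpha\, d(u)} \, p(u)$. By the symmetric argument, the total mass sent from $v$ to $u$ is $\frac{1-\alpha}{2\alpha\, d(v)} \, p(v)$. The net flow from $u$ to $v$ is the difference
\[
\frac{1-\alpha}{2\alpha}\left(\frac{p(u)}{d(u)}-\frac{p(v)}{d(v)}\right),
\]
which is exactly the claim.

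There is essentially no obstacle here, since the fact is a bookkeeping statement that holds push-by-push and extends to arbitrary sequences by linearity. The only minor subtlety worth explicitly noting is that pushes from $u$ never alter $p(v)$ for $v \ne u$ and never add to the $v\to u$ flow, so the flow contributions from pushes at $u$ and pushes at $v$ can be separated and summed independently. No assumption on the order of pushes, on $\eps$, or on convergence is needed — the identity is purely a per-edge conservation law derived from the definition of \textsc{Push}.
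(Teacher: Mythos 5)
Your proof is correct and follows essentially the same argument the paper sketches immediately before stating the fact: each push at $u$ settles $\alpha r(u)$ while sending $(1-\alpha)r(u)/(2d(u))$ over each incident edge, so the cumulative $u\to v$ flow is $\frac{1-\alpha}{2\alpha d(u)}p(u)$, and subtracting the symmetric $v\to u$ contribution gives the claim. The bookkeeping observation that pushes at $u$ and at $v$ contribute independently and additively is exactly the justification the paper intends.
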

An important consequence is
\begin{lemma}\label{lem:stationary}
If at some point all residual densities are bounded by $\sigma$, then
from this point forward, the net flow over any edge is at most
$\sigma/(2\alpha)$.
\end{lemma}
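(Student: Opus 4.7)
The plan is to apply Fact~\ref{fact:flow} twice and take the difference. Suppose all residual densities are bounded by $\sigma$ at some time $t_0$, and let $t_1>t_0$ be any later time reached by a further sequence of pushes. By Fact~\ref{fact:flow}, the total net flow over $(u,v)$ from the start up to time $t$ is $\frac{1-\alpha}{2\alpha}\bigl(p_t(u)/d(u)-p_t(v)/d(v)\bigr)$, so subtracting the value at $t_0$ from the value at $t_1$, the net flow over $(u,v)$ during $[t_0,t_1]$ equals $\frac{1-\alpha}{2\alpha}\bigl(\Delta p(u)/d(u)-\Delta p(v)/d(v)\bigr)$, where $\Delta p=p_{t_1}-p_{t_0}$. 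Since $\Delta p\ge 0$ componentwise (pushes never unsettle mass), it suffices to show $\Delta p(u)/d(u)\le\sigma$ for every $u$.

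Because a push uses only the current residual, not the already-settled $p$, the pushes performed during $[t_0,t_1]$ behave exactly like a partial run of PageRank started from initial distribution $r_{t_0}$ with zero settled mass, and $\Delta p$ is precisely the resulting settled-mass vector. Since each push only \emph{adds} to $p$, $\Delta p$ is componentwise dominated by the limit obtained by pushing to exhaustion from $r_{t_0}$, namely $\PR(\alpha,r_{t_0})$.

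The hypothesis at $t_0$ says $r_{t_0}(w)\le\sigma\,d(w)=\bbar\sigma(w)$ for every $w$, i.e., $r_{t_0}\le\bbar\sigma$ componentwise. Applying linearity with non-negative coefficients (\cite[Proposition 2]{ACL07:pagerank}) together with the fixed-point property $\PR(\alpha,\bbar\sigma)=\bbar\sigma$ (\cite[Proposition 1]{ACL07:pagerank}) gives $\PR(\alpha,r_{t_0})\le\PR(\alpha,\bbar\sigma)=\bbar\sigma$. Combining, $\Delta p(u)\le\sigma\,d(u)$, so $\Delta p(u)/d(u)\le\sigma$, and the net flow is bounded by $\frac{1-\alpha}{2\alpha}\sigma\le\sigma/(2\alpha)$; swapping the roles of $u$ and $v$ bounds the reverse direction.

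The main subtlety is that $t_1$ is an \emph{arbitrary} intermediate time, not the exhaustive limit, so one cannot identify $\Delta p$ directly with $\PR(\alpha,r_{t_0})$. The monotonicity-of-settled-mass step is what lets us still upper-bound $\Delta p$ by that limit, after which the linearity-plus-stationarity argument delivers the componentwise bound $\sigma d$ on the latter.
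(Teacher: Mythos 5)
Your proof is correct and takes essentially the same route as the paper: both arguments bound the mass $q$ settled after time $t_0$ by $\PR(\alpha,r_{t_0})\le\PR(\alpha,\bbar\sigma)=\bbar\sigma$ using monotonicity of the settling process plus the linearity and fixed-point properties, and then plug the density bound $\sigma$ into Fact~\ref{fact:flow}. The only cosmetic difference is that you derive the flow identity for the interval $[t_0,t_1]$ by differencing two applications of Fact~\ref{fact:flow}, whereas the paper simply treats the post-$t_0$ pushes as a fresh run with initial residual $r_{t_0}$ and applies Fact~\ref{fact:flow} directly to that run.
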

\begin{proof} The residual distribution $r$ is bounded
by the stationary distribution $\bbar\sigma$ with densities $\sigma$, so
$\PR(\alpha,r)\leq \PR(\alpha,\bbar\sigma)=\bbar\sigma$ where $\leq$ is vector domination. If $p$ is a
mass distribution settled from $r$, then $p\leq \PR(\alpha,r)\leq
\bbar\sigma$, so $p(u)/d(u)-p(v)/d(v)\leq \sigma$ for every possible
edge $(u,v)\in E$. By Fact \ref{fact:flow}, the net flow over $(u,v)$ based on
$r$ is therefore at most $\sigma/2\alpha$.
\end{proof}
We are going to find the side $S$ of a low-conductance cut via a so-called ``sweep'' over
the settled mass distributions $p$. To describe the sweep, as general
notation, for any comparison operator
$\circ\in\{=,<,>,\leq,\geq\}$ and $t\in \Reals$, define
\[\Vp {\circ t}=\{u\in V\mid \pd u\circ t\}\textnormal,\]
e.g, $V^p_{\geq t}=\{u\in V\mid \pd u\geq t\}$. Now
let $\Phi(p)$ be the smallest conductance we can obtain by picking
some threshold $\tau\in[0,1]$, and considering the
set of vertices with density at least $\tau$, that
is,
\[\Phi(p)=\min_{\tau\in[0,1]} \Phi(\Vp{\geq \tau}).\]
To find $\Phi(p)$, we {\em sweep\/} over the vertices in order of
non-increasing settled density. We only have to consider vertices with
positive settled mass, including their incident edges, of which there
are only $O(1/(\alpha \eps))$ assuming that the total initial mass is
1.  As described in \cite{ACL07:pagerank}, we can implement the sweep
in $O((\log n)/(\alpha \eps))$ time, and we shall further bring the sweep
time down to $O(1/(\alpha \eps))$
in Section \ref{sec:sweep}.
The important question is, however, when does the
sweep give us a cut of low conductance? We will give some sufficient
conditions in the next subsection.

\subsection{Limit concentration and low conductance}\label{sec:thm:conc}
We now state conditions under which a PageRank algorithm starting from
an initial distribution $p^0$ can find a low conductance cut.  The
conditions are all based on the limit mass distribution
$p^*=\PR(\alpha_0,p^\circ)$. While this limit is unique, there are
different ways of running a PageRank, e.g., the choice and use of $\eps$ in
Algorithm \ref{alg:apprpr} and which cut we choose to return from
the sweep.

As in \cite{ACL07:pagerank}, we first study situations where the
limit mass on some set $S$ deviates significantly from the uniform
$\vol(S)/(2m)$, as quantified by
\[\excess {p^*} S=p^*(S)-\vol(S)/(2m).\]
It may seem surprising that we look at this additive excess, rather
than the multiplicative difference, but imagine that we have an
initial distribution placing the mass 1 on a single vertex $v$ of
minimum degree.  Then the first push will settle mass $p(v)=\alpha$ on $v$,
and in the limit $p^*>p(v)=\alpha$. However, on the average $v$ should only have
mass $\vol(v)/(2m)\leq 1/n$ and in our min-cut algorithm, we will have
$\alpha=\tOmega(1)$, so the multiplicative difference on
$\{v\}$ is huge.  

We think of the additive excess as the mass that
gets trapped in $S$ when we push to the limit. As proved in
\cite{ACL07:pagerank}, a large excess can only happen if there is a
low conductance cut somewhere, and then we can find some low
conductance cut efficiently.

The basic result, formalized below in Theorem \ref{thm:ACL}, is 
that if there exists a set $S$ with limit excess at least $\Cdiff$, then using a PageRank algorithm, we can find a set $T$ which is the smaller (in volume) side
of a cut with 
conductance
\[\Phi(T)=O(\sqrt{(\alpha\log m)/\Cdiff})\]
in time $O(\vol(T)(\log m)/(\Cdiff \alpha))$. Inside our min-cut algorithm, we
will have $\Cdiff,\alpha=\tOmega(1)$ and then $T$ is found in time $\tO(\vol(T))$, so
the algorithm is fast if $T$ is small. It is here important that $T$ is the smaller
side of the cut. We are not allowed to spend a long time reporting the bigger side instead.

If we are further given an upper bound $s\leq m\Cdiff/16$ on the volume of $S$, then 
the volume of the returned set $T$ has volume bounded by $8s/\Cdiff=\tO(s)$.
In this case, it is further guaranteed that $T$ has excess at least
$\Cdiff/(16\lg (4s))$.

When we start running the algorithm, we may not know if the set $S$ 
exists. If it doesn't
exist, the algorithm will either return $T$ as described above, or
certify that no such $S$ exists. Inside our min-cut algorithm, there
will be cases where certifying the non-existence of $S$ is the
prefered outcome.  We note that if there is no set $T$ with
$\vol(T)\leq 8s/\Cdiff$ and $\excess {p^*} T\geq\Cdiff/(16\lg (4s))$,
then we must get the certificate that there
is no set $S$ with $\vol(S)\leq s$ and $\excess {p^*} S\geq \Cdiff$.

The maximal running time of the algorithm
is $O(m/(\Cdiff \alpha))=\tO(m)$ without a volume bound, and
$O(s/(\Cdiff \alpha))=\tO(s)$ time with the volume bound $s$. This is
all formalized in the following theorem which is similar to results
proved in \cite{ACL07:pagerank}.
\begin{theorem}\label{thm:ACL}
We are given a multigraph with $m$ edges,
an initial mass distribution $p^\circ$ of total mass $1$, and
a listing of the vertices with positive mass in order of non-increasing density. 
Let
$p^*=\PR(\alpha,p^\circ)$. We are also given an excess parameter
$\Cdiff<1$.  We have a PageRank algorithm that staring from $p^\circ$ 
will either find a set $T$ with $\vol(T)\leq m$ and conductance
\[\Phi(T)=O(\sqrt{(\alpha\log m)/\Cdiff})\textnormal,\]
or certify that there is no set
$S$ with 
\[\excess {p^*} S\geq \Cdiff.\]
The maximal running time is $O(m/(\Cdiff \alpha))$,
but if a set $T$ is returned, then the time is also bounded by $O(\vol(T)(\log m)/(\Cdiff \alpha))$.

If we are further given a volume parameter $s\leq m\Cdiff/16$, the 
algorithm will either find the above $T$ with the additional guarantees that
$\vol(T)\leq 8s/\Cdiff$ and $\excess {p^*} T\geq \Cdiff/(16\lg (4s))$,
or certify 
there is no set $S$ with $\vol(S)\leq s$ and $\excess {p^*} S\geq \Cdiff$.
The maximal running time is $O(s/(\Cdiff \alpha))$,
but if a set $T$ is returned, then the time is also bounded by $O(\vol(T)(\log m)/(\Cdiff \alpha))$.
\end{theorem}
The proof of Theorem \ref{thm:ACL} is deferred to Section
\ref{sec:pagerank-anal}.  Without the running time and the part with
the volume parameter $s$, Theorem~\ref{thm:ACL} follows from Theorem
4.1 in \cite{ACL07:pagerank}. However, when it comes to finding
low-conductance cuts, \cite{ACL07:pagerank} is focussed on the case
where the initial distribution has all mass on a single ``good'' vertex. Here
we need to find low-conductance cuts starting from initial distributions
spreading the mass on many vertices, as supported by our Theorem
\ref{thm:ACL}. Another new point in the last part of Theorem
\ref{thm:ACL} is that the set $T$ has $\excess {p^*} T\geq
\Cdiff/(16\lg (4s))$. This will later be critical to the proof of Lemma
\ref{lem:s-captured} and \ref{lem:many-non-captured}. The statement of of Theorem \ref{thm:ACL}
is thus tailored for the needs of our min-cut algorithm, but it can be
proved using techniques from \cite{AC07:sharp-drop,ACL07:pagerank}.

\paragraph{The endgame}
More interesting and novel, we study here the ``endgame'' of the PageRank algorithm.
Suppose there is a single vertex $u$ that even in the limit receives too little mass.
Then this is because there is some low conductance cut that prevents mass from
reaching $u$. More precisely, suppose there is just a single vertex $u$ with low density
\[\ppd u\leq (1-\Cdiff)/(2m).\]
Then we will find a low conductance cut like that in Theorem \ref{thm:ACL}, that is,
a set $T$ which is the smaller side of a cut with conductance
\[\Phi(T)=O(\sqrt{(\alpha\log m)/\Cdiff}).\]
We will either find $T$ efficiently in time $O(\vol(T)(\log
m)/(\gamma\alpha))$ as in Theorem \ref{thm:ACL}, or in
time $O(m/(\gamma\alpha))$ but with the guarantee that $T$ contains
all small density vertices $u$ with $\ppd u\leq (1-\Cdiff)/(2m)$.

If there is no low density vertex $u$, then the algorithm will either return $T$ as
described above, or certify that that there is no low density vertex $u$. The maximal running time of the algorithm is $O(m/(\Cdiff \alpha))$. The full formal details are presented in the theorem below.
\begin{theorem}\label{thm:endgame}
We are given a multigraph with $m$ edges,
an initial mass distribution $p^\circ$ of total mass $1$, 
and a listing of the vertices with positive mass in order of non-increasing density. 
Let $p^*=\PR(\alpha,p^\circ)$. We are also given a parameter $\Cdiff<1$.
We have a PageRank algorithm that staring from $p^\circ$ 
will either 
find a set $T$ with $\vol(T)\leq m$ and
conductance 
\[\Phi(T)=O(\sqrt{(\alpha\log m)/\Cdiff})\textnormal,\]
or certify that there is no vertex $u$ with 
\[\ppd u\leq (1-\Cdiff)/(2m).\]
The running time of the algorithm is $O(m/(\gamma\alpha))$, and, depending
on the input, it will always end in one of the following cases:
\begin{itemize}
\item[(i)] The set $T$ is found in time $O(\vol(T)(\log m)/(\gamma\alpha))$ and
has $\excess {p^*} T\geq \Cdiff/(64\lg (8m))$.
\item[(ii)] The set $T$ is guaranteed to contain all small density vertices $u$ with
$\ppd u\leq (1-\Cdiff)/(2m)$.\\
In this case, even if $T$ is small, we have no better time bound than $O(m/(\gamma\alpha))$.
\item[(iii)] A certificate that there is no vertex
$u$ with $\ppd u \leq (1-\Cdiff)/(2m)$.
\end{itemize}
\end{theorem}
The proof of Theorem \ref{thm:endgame} is deferred to Section
\ref{sec:pagerank-anal}.
We note that if we just want a condition for finding a cut
with conductance
\[O(\sqrt{(\alpha\log m)/\Cdiff}),\]
then Theorem \ref{thm:endgame} implies Theorem \ref{thm:ACL}; for if
there is a set $S$ with excess $\gamma$ (the condition for Theorem \ref{thm:ACL}),
then the average density outside $S$ is at most $(1-\gamma)/(2m)$, so
there must exist a vertex $u$ outside $S$ with low density $\ppd u
\leq (1-\Cdiff)/(2m)$ (the condition for Theorem \ref{thm:endgame}). 
This does not
mean that Theorem \ref{thm:endgame} replaces Theorem \ref{thm:ACL},
for, on a more detailed level, we will make heavy use of the volume
parameter $s$ in Theorem \ref{thm:ACL}.

Note the big asymmetry in the conditions for the theorems. For Theorem \ref{thm:ACL},
we need 
\[p^*(S)-\frac{\vol(S)}{2m}\geq\Cdiff.\]
The volume of a vertex is its degree, so our new condition for Theorem \ref{thm:endgame} 
can be written as
\[\frac{\vol(u)}{2m}-p^*(u)\geq \Cdiff\; \frac{\vol(u)}{2m}.\]
Thus it takes much less missing mass than excess mass to find a low conductance cut. This asymmetry is
necessary, for consider an expander graph where all cuts have conductance $\Omega(1)$.
As in the example we gave before Theorem \ref{thm:ACL} against
multiplicative differences, consider an initial distribution $p^\circ$ that places
all mass on a vertex $u$ of minimum degree $\delta$. Then after the first
push, we have settled mass $p(u)=\alpha$, and in the limit $p^*(u)\geq p(u)=\alpha$.
Assuming $\alpha\geq 2/n$, we get 
\[p^*(u)-\frac{\vol(u)}{2m}\geq \alpha/2\geq \Cdiff\;\frac{\vol(u)}{2m}\mbox{ with }\gamma=n\alpha/2\textnormal,\] 
but this does imply any low conductance cut. 

The conditions of Theorem \ref{thm:ACL} and Theorem \ref{thm:endgame}
are used in a complimentary fashion. Suppose we have a low conductance
cut around a set $S$. Informally speaking, if our initial distribution
concentrates the mass well inside $S$, then a lot of mass will also be
trapped in $S$ in the limit, giving us the excess for Theorem
\ref{thm:ACL}. Conversely, if we concentrate the initial mass well
outside $S$, then only little mass will reach $S$ in the limit, and
then there will be some low density vertex $u$ in $S$ for Theorem
\ref{thm:endgame}. We still have the problem of ensuring that we don't
start somewhere in the middle, where the initial mass is neither well
inside, nor well outside $S$. This will be illustrated in the next
section when $S$ is a non-trival min-cut.

\subsection{PageRank in our applications}
In our applications, we are always going to use the same teleportation constant
\[\alpha_0=1/\lg^5 n.\]
Recall that our minimum degree is $\delta\geq \lg^6 n$, so $\delta\alpha_0\geq\lg n$.

Our initial distribution $p^\circ$ will almost always be obtained by
distributing all the mass with uniform density on some set $X$ of vertices, 
that is,
$p^\circ(v)=1/\vol(X)$ if $v\in X$; otherwise $p^\circ(v)=0$.  For
short, we call this the {\em uniform density distribution on $X$}.
We use $p^\circ_X$ to denote this initial distribution and
$p^*_X=\PR(\alpha_0,p^\circ_X)$ to denote the corresponding limit
distribution. In case we start with all mass on a single vertex $v$,
we will write $p^\circ_v=p^\circ_{\{v\}}$ and $p^*_v=p^*_{\{v\}}$. If we
need to make explicit which graph $H$ we run PageRank on, we write it
as a second subscript as in $p^\circ_{X,H}$ and $p^*_{X,H}$.

Note that Theorem \ref{thm:ACL} and \ref{thm:endgame} both
require that
the initial mass distribution $p^\circ$ is presented
with a listing of the vertices with positive mass in order of 
non-increasing density. For $p^\circ_X$, we just need to
list the vertices in  $X$ since they all have the same density,
while all other vertices have zero initial mass.


\section{Certify-or-cut}\label{sec:certify-or-cut}
In this section, using PageRank as described in Theorems \ref{thm:ACL} and
\ref{thm:endgame}, we will implement
the ``certify-or-cut'' algorithm from the introduction, proving
\begin{proposition}\label{prop:cert-or-cut} Given a simple graph with minimum degree
$\delta\geq \lg^6 n$, in near-linear time, we can either
\begin{enumerate}
\item certify that there are no non-trivial min-cuts, or
\item find a cut with conductance $o(1/\log m)$.
\end{enumerate}
\end{proposition}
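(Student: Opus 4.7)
The plan is to apply the endgame lemma, Theorem~\ref{thm:endgame}, to the output of approximate PageRank run with teleportation constant $\alpha_0 = 1/\lg^5 n$. The structural hook is Observation~\ref{obs:min-sparse}: every non-trivial min-cut side $S$ satisfies $\vol(S)\geq\delta^2$, $|S|\geq \delta\geq \lg^6 n$, and $\Phi(S)\leq 1/\delta\leq 1/\lg^6 n$, so $\Phi(S)/\alpha_0\leq 1/\lg n$. In PageRank terms, combined with Lemma~\ref{lem:stationary} this says that once residual densities have relaxed towards stationary, essentially no mass flows across $\boundary S$, so each side of a non-trivial min-cut acts as a near-perfect reservoir for mass.

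Concretely, I would start approximate PageRank (Algorithm~\ref{alg:apprpr}) from each of a polylogarithmic collection of carefully chosen initial distributions---for instance, the point mass at a min-degree vertex, together with one perturbation of the stationary distribution $\bbar{1/(2m)}$ per dyadic scale of plausible cut volumes---each using $\alpha_0$ and accuracy $\eps=\Theta(1/m)$, and apply Theorem~\ref{thm:endgame} with $\gamma = \Theta(1)$. If any run returns a set $T$ in its case~(i) or (ii), return $T$: Theorem~\ref{thm:endgame} guarantees $\Phi(T)=O(\sqrt{\alpha_0\log^2 m/\gamma}) = O(1/\lg^{3/2} n) = o(1/\log m)$, the required low-conductance cut. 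Otherwise every run ends in case~(iii), asserting that every vertex has limit density at least $(1-\gamma)/(2m)$; we output this as the certificate that no non-trivial min-cut exists.

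Correctness of the low-conductance branch is immediate from Theorem~\ref{thm:endgame}. For the certification branch, the argument is by contradiction: if a non-trivial min-cut $S$ existed, then for at least one starting distribution in our collection the reservoir inequality $\Phi(S)/\alpha_0\leq 1/\lg n$ would force the limit distribution to deviate enough from $\bbar{1/(2m)}$ that either one side of $S$ carries excess mass at least $\gamma/(\log m)^{O(1)}$ (triggering case (i) of Theorem~\ref{thm:endgame}) or some individual vertex on the ``trapped'' side has density below $(1-\gamma)/(2m)$ (triggering case (i) or (ii)); either way Theorem~\ref{thm:endgame} would have returned a cut rather than case~(iii). The total running time is a polylogarithmic number of endgame calls at $\tO(m/(\gamma\alpha_0)) = \tO(m)$ each, for a total of $\tO(m)$.

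The main obstacle is the pathological regime where $\vol(S)$ is much smaller than $m$ \emph{and} the starting vertex lies in the larger side $V\setminus S$: there both the mass reaching $S$ and the stationary share of $S$ are simultaneously small, and forcing an individual vertex in $S$ below the $(1-\gamma)/(2m)$ threshold requires exploiting \emph{both} the reservoir inequality $\Phi(S)/\alpha_0\leq 1/\lg n$ and the $|S|\geq \delta\geq\lg^6 n$ lower bound. Designing the polylogarithmic-sized family of starting distributions so that at least one of them always triggers Theorem~\ref{thm:endgame} for every non-trivial min-cut $S$---thereby eliminating the Monte Carlo ``guess a starting vertex'' step that Andersen--Chung--Lang-style PageRank cut algorithms normally depend on---is exactly what the hypothesis $\delta\geq\lg^6 n$ is engineered to enable, and is the main technical content that the subsequent subsection has to deliver.
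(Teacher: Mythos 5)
Your outline correctly isolates the right structural hooks (Observation~\ref{obs:min-sparse}, the ``reservoir'' effect from Lemma~\ref{lem:stationary}, dyadic scales, and the endgame of Theorem~\ref{thm:endgame}), but it does not actually resolve the problem you yourself name in the last paragraph, and the gap there is the whole content of the proposition. Two specific issues.

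First, your suggested family of starting distributions is not the right one, and as stated the certification branch is unsound. You propose ``a point mass at a min-degree vertex'' plus ``one perturbation of the stationary distribution $\bbar{1/(2m)}$ per dyadic scale.'' But if the small side $S$ of a non-trivial min-cut contains one of the seeded vertices at scale $s$, the limit distribution can have plenty of mass on $S$, case~(iii) of Theorem~\ref{thm:endgame} may well fire, and you would then wrongly certify that no non-trivial min-cut exists. The paper avoids this by a two-stage argument at each scale $s\leq m/2$: it picks an \emph{arbitrary} set $U$ of $\tO(m/s)$ vertices, and \emph{first} runs the individual test of Lemma~\ref{lem:small-start} (which rests on Theorem~\ref{thm:ACL}, not the endgame) on every $v\in U$ in $\tO(s)$ time each, totaling $\tO(m)$. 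Either some $v$ yields a low-conductance cut, or we have a certified guarantee $U\cap S=\emptyset$. Only \emph{then} does it spread mass $1$ uniformly on $U$, so each vertex gets $\mu=\alpha_0 s/(4m)$, and Lemma~\ref{lem:spread} bounds the leak into $S$ by $\mu/(2\alpha_0)=s/(8m)$, forcing average density on $S$ below $1/(4m)$ and triggering the endgame. Your proposal has no analogue of this precheck, so its case~(iii) output cannot be trusted as a certificate, and ``perturbations of $\bbar{1/(2m)}$'' does not describe the uniform distribution on $U$ that makes the leak bound go through.

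Second, you explicitly defer the key design step: ``Designing the polylogarithmic-sized family of starting distributions so that at least one of them always triggers Theorem~\ref{thm:endgame} \ldots is the main technical content that the subsequent subsection has to deliver.'' That subsequent delivery \emph{is} the proof; leaving it as an acknowledged TODO means the proposal is an outline rather than a proof. You also omit the separate treatment of the balanced case $\vol(S)>m/2$ (where the paper simply observes there are at most $16$ bad start vertices and tries $16$ of them via Lemma~\ref{lem:small-start}), which is needed because the dyadic endgame argument only handles $\vol(S)\leq m/2$. So the intuitions are sound, but the precheck-then-spread mechanism, the exact cardinality $\tO(m/s)$ of the seed set (tuned so that $\mu/(2\alpha_0)$ is small relative to $\vol(S)/(2m)$), and the balanced-cut case are all missing, and they are exactly what makes the deterministic, guess-free argument work.
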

Recall from the introduction that the point of the certify-or-cut is to illustrate our techniques in a
simple form on a non-trivial problem.  This is also why we will use the
same parameters as in the rest of the paper even though the min-degree
bound of Proposition \ref{prop:cert-or-cut} could easily be
reduced. When we get to our real recursive min-cut algorithm,
everything will become far more complicated.

\subsection{Starting on the small side of a min-cut}\label{sec:inside0}
Our first important observation is that if we start with a point mass on
{\em any\/} vertex $v$ on the small side $S$ of a non-trivial min-cut, and the small
side is not too large, e.g., $\vol(S)\leq m/2$, then in the limit,
we get a mass concentration on  $S$ so that Theorem \ref{thm:ACL} applies.
This should be contrasted with the results from \cite{ACL07:pagerank} which
say that if $S$ is a side of a low conductance cut, then a large fraction of
the vertices can be used as starting points leading to mass concentration.
In \cite{ACL07:pagerank} they have to guess such a good starting vertex, resulting in  a randomized algorithm. However, in our min-cut case, any vertex in $S$
will do, which is why we have a chance of a deterministic algorithm.

Note that since $S$ is a min-cut, $v$ can have at most half its edges leaving $S$, for otherwise
$S\setminus \{v\}$ would have a smaller cut around it. The result therefore
follows from the following more general lemma.
\begin{lemma}\label{lem:small-start} 
Consider a vertex $v$ in a non-trivial min-cut side $S$ with $\vol(S)\leq s$.
Suppose $S$ contains a fraction $\eps=s/(2m)+\Omega(1)$ of $v$'s neighbors
(this condition is satisfied for every vertex $v\in S$ if  $s\leq m/2$).
Then, in $\tO(s)$ time, we can find a cut with conductance
$o(1/\log m)$. If there is no such min-cut side $S$ containing $v$, 
in $\tO(s)$ time, we
will either find the cut with conductance $o(1/\log m)$, or report an error.
\end{lemma}
\begin{proof}
We start PageRank from the initial distribution $p^\circ_v$ with all mass on $v$. Then we
repeatedly push mass from $v$ until its residual mass $r(v)$ is
at most $1/\delta$. The mass from $v$ will be spread evenly to its neighbors,
so at the end, we have more than $\eps$ mass staying in $S$. Moreover,
the residual mass on any vertex is now bounded by $1/\delta$. Next
we apply the following lemma with $\mu=1/\delta$:
\begin{lemma}\label{lem:spread} If at some point the residual mass on every vertex is
bounded by $\mu$, then from this point forward, at most $\mu/(2\alpha_0)$
mass can move across any specific min-cut.
\end{lemma}
\begin{proof}
Since the minimum degree is $\delta$, the maximal residual density
is bounded by $\mu/\delta$. By Lemma \ref{lem:stationary}, from this point forward, the net
flow over any edge is at most $\mu/(2\alpha_0\delta)$. A min-cut
has at most $\delta$ edges, so the net flow across any min-cut is
therefore at most $\mu/(2\alpha_0)$.
\end{proof}
After pushing the residual mass from its starting point $v$,
by Lemma \ref{lem:spread}, the mass leaving $S$ is
at most $1/(2\alpha_0\delta)=o(1)$ since $\alpha_0=1/(\lg n)^5$ while
$\delta\geq \lg^6 n$. Thus, in the limit,
the mass staying in $S$ is $p^*_v(S)=\eps-o(1)=s/(2m)+\Omega(1)$, so
$\excess {p^*_v} S=\Cdiff=\Theta(1)$. By Theorem \ref{thm:ACL}, we now
get a set $T$ with
\[\Phi(T)=O(\sqrt{(\alpha_0\log m)/\Cdiff})=o(1/\log m).\]
in time  $O(s/\alpha_0)=\tO(s)$. This time bound is immediate from
Theorem \ref{thm:ACL} with a bound $s\leq m\Cdiff/16$, but otherwise
$s>m\Cdiff/16=\Omega(m)$, and then the general
time bound is $O(m/(\Cdiff\alpha_0))=\tO(s)$, as desired.

Since every
vertex $v$ has at least half its neighbors on its side $S$ of a non-trivial
min-cut, the conditions of the lemma are satisfied if
$1<\vol(S)\leq m/2$.
\end{proof}

\subsection{Balanced min-cut}\label{sec:balanced0}

We now consider the situation where both sides of some specific min-cut have
volume between $m/2$ and $3m/2$. We claim that there are less than $16$ vertices
that we can start from without finding a low-conductance cut.
There are at most $2\delta$ end-points of the cut edges, so there
are less than $16$ vertices incident to more than $\delta/8$ cut edges. These
are the only bad vertices. Any other vertex $v$ has at least
a fraction $\eps=7/8$ of its neighbors on its side $S$ of the min-cut.
Moreover $\vol(S)\leq s=3m/2$, so $\eps=s/(2m)+1/8$. Thus, if we
apply Lemma~\ref{lem:small-start} to a vertex $v$, in $\tO(s)=\tO(m)$ time,
we either find a cut of conductance $o(1/\log m)$, or conclude that $v$ is
bad. We run from 16 vertices. If they are all bad, we conclude that there is
no min-cut where both sides have volume at least $m/2$. All this
takes near-linear time, so to finish the proof of
Proposition~\ref{prop:cert-or-cut}, it suffices to look for non-trivial
min-cuts where the small side $S$ has $\vol(S)\leq m/2$.

\subsection{Handling any non-trival min-cut using the endgame}\label{sec:outside0}
We will now assume that we have a bound $s\leq m/2$ on volume of the small
side of any min-cut. If there is a min-cut where one side
has volume between $s/2$ and $s$, then we will find a sparse cut.
We are only interested in non-trivial min-cuts. By Observation \ref{obs:min-sparse}, the smaller side has volume at least $\delta^2$, so we
will consider $s=m/2^i$ for $i=1,...,\ceil{\lg (m/\delta^2)}$. For
a given $s$, consider a min-cut $(S,V\setminus S)$ where
$s/2\leq \vol(S)\leq s$.
We will either find a low-conductance cut, or falsify the
existence of  $(S,V\setminus S)$.

We pick an arbitrary set $U$ of $4m/(\alpha_0 s)=\tO(m/s)$
vertices. For each $v\in U$, we apply Lemma \ref{lem:small-start},
either finding a desired low-conductance cut, or determining that $v$ is
not in $S$.
The check from $v$ takes $\tO(s)$ time,
so checking all $v\in U$ takes $\tO(m)$ time. We now know that $U$
is contained in the big side $V\setminus S$ of our min-cut.

Next, we create the initial distribution $p^\circ_U$ with
uniform density on $U$. None of this mass is in $S$,
and the maximal density on any vertex in $G$ is $1/\vol(U)\leq 1/(\delta |U|)$.
Therefore, by Lemma \ref{lem:spread}, the netflow over any edge is at most
$1/(2\delta |U|\alpha_0)$, so the total mass that can move into $S$
through the at  most $\delta$ cut edges is at most
$1/(2|U|\alpha_0)=s/(8m)$, bounding the limit mass $p^*_U(S)$ on $S$.

Since $\vol(S)\geq s/2$, the average limit density on $S$ is thus
at most $1/(4m)$. It follows that some vertex $w\in S$ has limit density $p^*_U(w)\leq
1/(4m)$. This is the endgame considered in
Theorem \ref{thm:endgame}. In $\tO(m/\alpha_0)$ time, it finds a cut with conductance
$O(\sqrt{\alpha_0\log m})=o(1/\log m)$. Otherwise we conclude that
$S$ did not exist.

For each of the logarithmic number of values of $s$, we thus spend
near-linear time, so our total time bound is near-linear. If
no low-conductance cuts are found, we conclude that there is no non-trivial
min-cuts.
This completes the proof of Proposition \ref{prop:cert-or-cut}.
\drop{
In our min-cut algorithm, we are going
to recurse based on low-conductance cuts, but then it becomes important
that the time spent on finding the low-conductance cut is bounded in terms
of the volume of the smaller side unless we end in case (ii) of Theorem \ref{thm:endgame}. Changing
the parameters above, we can make sure that half the volume of $S$
gets density $\leq 1/(4m)$, and then Theorem \ref{thm:endgame} (ii) leaves
less than half the volume of $S$ on the large side.}

\subsection{Relation to the overall min-cut algorithm}
At first sight, it may seem that Proposition \ref{prop:cert-or-cut} is
a major step forward in the direction of implementing our min-cut
algorithm sketched in Section \ref{ssec:overall}. However, in the
min-cut algorithm we have a recursion where if there is a non-trivial
min-cut, then we will find a low-conductance cuts and split off the
smaller side. In order for this to be efficient recursively, the time
spent has to be near-linear in the volume of the smaller side that we
split off. However, in Proposition \ref{prop:cert-or-cut}, we may
spend time near-linear in the graph size even if we find a
low-conductance cut with a very small side. Indeed we may spend time
near-linear in the graph size when we prepare for the endgame in Section \ref{sec:outside0},
checking all the $\tO(m/s)$ vertices in the set $U$.  We spend
$\tO(s)$ time on each vertex $v$, finding either a low-conductance cut
where the small side has volume $\tO(s)$ (c.f. Theorem \ref{thm:ACL}),
or discovering that $v$ is not in $S$. The worst case would be if we only found
a low-conductance cut for the last vertex checked.  To steer around
this issue, we will carefully exploit the guarantee from Theorem
\ref{thm:ACL} that the smaller side $A$ has $\excess {p^*} A\geq
\Cdiff/(16\lg (4s))$. The details are found in the proofs of Lemma
\ref{lem:s-captured} and \ref{lem:many-non-captured}, but first we
have to set up a framework to formalize our min-cut algorithm
which has to handle many other tricky issues such as the parallel edges
created by contractions.

\section{The min-cut algorithm: the recursive set-up}\label{sec:min-cut}

In this section, we are going to present the recursive set-up
for our min-cut algorithm, doing the min-cut preserving contractions
described in Theorem \ref{thm:main-tech}:
\begin{quote}\em
Given a simple input graph $G$ with $n$ vertices, $m$ edges, and minimum degree
$\delta$,  in near-linear time, we can contract vertex sets
producing a multigraph $\bbar G$ which has only
$\bbar m=\tO(m/\delta)$ edges, yet which preserves all non-trivial min-cuts of
$G$.
\end{quote}
When first we have the contracted graph $\bbar G$, we can apply 
Gabow's min-cut algorithm \cite{Gab95} as described
in Corollary \ref{cor:min-cut} and find a min-cut of $G$ in
$\tO(\delta\bbar m)=\tO(m)$ time.

The reader may at this point want to review the sketch of our
deterministic min-cut algorithm from Section \ref{ssec:overall}.
The pseudo-code for the real algorithm is found in
Algorithm \ref{alg:min-cut}. In the top level repeat-loop,
 it works with a  multigraph
$\bbar G$ obtained from $G$ by contracting vertex sets while preserving all
non-trivial min-cuts of $G$. The only edges removed from $G$ in
the construction of $\bbar G$ are those with contracted end-points who
would otherwise be loops. However, to find out which vertex sets that
can be contracted in $\bbar G$, in each iteration of the repeat-loop, 
the algorithm works with a subgraph $H$ of $\bbar G$.

The edge connectivity of $G$ is at most $\delta$,
so if the edge connectivity of $\bbar G$ becomes
bigger than $\delta$, then there cannot be any non-trivial min-cuts
in $G$, and then we can contract $\bbar G$ to a single vertex.

We note that if there are more than $\delta$ parallel edges between vertices
$u$ and $v$, then we can trivially contract $\{u,v\}$. There are therefore
never more than $\delta$ parallel edges between two vertices in $\bbar G$.

When a vertex set is contracted to a single vertex, we call it a {\em
  super vertex\/} while the original vertices from $G$ are called {\em
  regular vertices}. If we just say a vertex it can be of either
kind. The degrees of the regular vertices (possibly including parallel edges to
super vertices) do not decrease, so regular vertices will always
have degree at least $\delta$.

\begin{algorithm}\label{alg:min-cut}
\caption{Min-cut$(G)$. Here $G$ is a simple graph with $m$ edges and
  minimum degree $\delta$}

\If{$\delta\leq\lg^5 m$}{find min-cut in $G$ using Gabow's algorithm~\cite{Gab95}.}
$\bbar G\asgn G$\tcp*{$\bbar G$ preserves non-trivial min-cuts}
\Repeat{$\geq 1/20$ of edges in $\bbar G$ 
are incident to passive super vertices}{
$H\asgn \bbar G$\;
Remove passive super vertices from $H$ and trim $H$\tcp*{Section  \ref{sec:clusters} and \ref{sec:super-degree}}
\While{some component $C$ of $H$ is not known to be a cluster}{
Find cut $(A,B)$ of $C$ with conductance $\leq \Phi_0=1/(20\lg m)$\tcp*{Section \ref{sec:cluster-or-cut}}
Remove cut edges from $H$ and trim $H$}
Take each cluster component of $H$ and contract its core to a super vertex
in $\bbar G$\;
\tcp{Preserves all non-trivial cuts from $G$ by Lemma \ref{lem:core-contract} in Section \ref{sec:semi-contract}}
\tcp{Contracts $\geq 1/2$ the edges in $\bbar G$ by Lemma \ref{lem:edges-left} in
Section \ref{sec:cut-trim}}
While there are vertices $u$ and $v$ connected by more than $\delta$ parallel
edges, contract $\{u,v\}$.
}
\tcp{$\tO(m/\delta)$ edges left in $\bbar G$ by Lemma \ref{lem:passive} in Section \ref{sec:super-degree}}
\end{algorithm}

When we analyze our algorithm, which recursively contracts vertex sets
in $\bbar G$, we will always have $n$, $m$, and $\delta$ denote the
number of vertices and edges, and minimum degree of the simple input
graph $G$. This means that our lower bound $\lg^6 n$ on the minumum
degree $\delta$ is fixed. Likewise, we will never change
$\Phi_0=1/(20\lg m)$ defining low conductance, or our teleportation
constant $\alpha_0=1/\lg^5 n$.

We assume that $n\geq n_0$ for some big enough constant $n_0$ so that $o(1)$ is smaller
any concrete constant, e.g., that $9/5+o(1)<2$ (c.f.~discussion in Section \ref{sec:notation}). 

In the presentation of the our min-cut algorithm, many (but not all)
statements will be made in terms of cuts of $\bbar G$ of size at most
$\delta$. Trivially, this includes all cuts corresponding to non-trivial min-cuts
of $G$. When we want to argue that all non-trivial min-cuts
of $G$ are preserved in $\bbar G$, we will further use that every vertex has
at least half its vertices on the same side of a non-trivial 
min-cut in $G$ (c.f.~Lemma
\ref{lem:core-contract}).

In the rest of this section, we will describe the different elements
of Algorithm \ref{alg:min-cut}, and how they work together to produce
the contracted graph $\bbar G$ from Theorem \ref{thm:main-tech}.  For
now we pretend we have an oracle to find the low-conductance cut in the
while-loop, postponing its implementation to Section
\ref{sec:cluster-or-cut}.

\subsection{Trimmed clusters}\label{sec:clusters}

Our min-cut algorithm is centered around finding clusters in $\bbar G$ defined below.

First, a non-empty set $C\subseteq V$ of vertices is called {\em trimmed\/} if
for each $v\in C$, at least $2/5$ of the edges from $v$ in $\bbar G$
don't leave $C$. The set $C$ is called a {\em cluster\/} if it is trimmed
and for every cut of size at most $\delta$ in $\bbar G$, one side
contains at most two regular vertices and no super vertices from $C$.

We say a {\em cluster $C$ belongs to a side of a cut of size at most $\delta$\/} if
that side has a super vertex or more than two regular vertices from
$C$. 
\begin{observation}\label{obs:one-side}
Any cluster $C$ belongs to unique side of any cut of $\bbar G$ of size
at most $\delta$.
\end{observation}
\begin{proof}
From the
definition of a cluster, it follows that $C$ belongs to at most one
side. However, if $C$ belonged to no side, then each side would have
at most $2$ regular vertices and no super vertices, implying that $C$
consists of at most $4$ regular vertices. However, $C$ is non-empty,
so it has at least one regular vertex which has at least $2\delta/5$
edges into $C$. If all vertices in $C$ are regular, there are no
parallel edges between them, so then we must have at least
$2\delta/5+1>4$ regular vertices in $C$. Thus we conclude that $C$
belongs to exactly one side of the cut.
\end{proof}
The condition of having all but at most two regular vertices from $C$ on the
same side of any min-cut may seem a bit ad-hoc, but we have the following
lemma stating how more than two makes a big difference.
\begin{lemma}\label{lem:big-sides} Consider a trimmed vertex set $C$ and
a cut $(T,U)$ of $\bbar G$ of size at most $\delta$. If $T\cap C$ has no super vertices
and at least 3 regular
vertices, then $T\cap C$ has at least $\delta/3$ regular
vertices.
\end{lemma}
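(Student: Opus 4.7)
The plan is a double-counting argument on the cut edges contributed by the regular vertices in $R = T \cap C$. Set $k = |R|$, and suppose for contradiction that $3 \leq k < \delta/3$; the goal is to show that $R$ alone forces more than $\delta$ edges across the cut.

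Each $v \in R$ is regular, so $d_{\bbar G}(v) \geq \delta$, and since $C$ is trimmed, at least $(2/5)d_{\bbar G}(v) \geq 2\delta/5$ of its edges stay in $C$. The critical input is that the original graph $G$ is simple and that edges between two regular vertices of $\bbar G$ are inherited unchanged from $G$: therefore $v$ has at most $k-1$ edges to the other vertices of $R$. Every remaining in-$C$ edge at $v$ must land in $C \setminus R = U \cap C$, so $v$ contributes at least $2\delta/5 - (k-1)$ edges to $\boundary T$.

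Each such edge has exactly one endpoint in $R$ (its other endpoint is in $U \cap C$), so summing over the $k$ vertices of $R$ counts each edge exactly once and gives $|\boundary T| \geq k\bigl(2\delta/5 - k + 1\bigr)$. Combined with the hypothesis $|\boundary T| \leq \delta$, this yields the quadratic inequality $k^2 - (2\delta/5 + 1)k + \delta \geq 0$, whose roots sit near $5/2$ and $2\delta/5 - 3/2$. Hence either $k \leq 5/2$ or $k \geq 2\delta/5 - O(1) > \delta/3$. The first alternative contradicts $k \geq 3$ as soon as $\delta$ exceeds a small constant (which we may assume, since Algorithm~\ref{alg:min-cut} defers to Gabow whenever $\delta \leq \lg^5 m$), while the second contradicts $k < \delta/3$. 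Either way we reach a contradiction, so $k \geq \delta/3$.

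The only genuine subtlety is the appeal to simplicity of $\bbar G$ restricted to edges between regular vertices: nothing prevents a super vertex in $C$ from sharing many parallel edges with $v$, so without the hypothesis ``$T \cap C$ contains no super vertices'' the bound $k-1$ on $v$'s in-$R$ degree would fail and the counting collapses. With that observation in hand the rest is an elementary quadratic estimate.
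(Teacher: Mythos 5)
Your proof is correct and follows essentially the same route as the paper's: the paper also bounds the in-$C$ degree of each regular vertex in $T\cap C$ by $2\delta/5$ (trimmed) and argues that since there are no super vertices and $G$ is simple, each such vertex has at most $|T\cap C|-1$ neighbors inside $T\cap C$, giving $|T\cap C|(2\delta/5+1-|T\cap C|)$ crossing edges and hence the same quadratic dichotomy. Your explicit remark that the ``no super vertices'' hypothesis is precisely what rules out parallel edges and thereby preserves the $k-1$ bound is a genuine clarification of a point the paper leaves implicit, but it does not change the argument.
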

\begin{proof}
The proof is very similar to that of Observation \ref{obs:min-sparse}.
Consider $T\cap C$ which has no super vertices.
Since $C$ is trimmed, the internal degree of regular vertices in $C$
is at least $2\delta/5$,  so the number of edges
crossing from $T\cap C$ to $U\cap C$ is
at least $|C\cap T|(2\delta/5+1-|C\cap T|)$, but we
have at most $\delta$ cut edges. Since $\delta=\omega(1)$, we conclude that $|C\cap T|\leq 2$ or
$|C\cap T|\geq 2\delta/5-1>\delta/3$. 
\end{proof}

\subsection{Cores and loose vertices}\label{sec:semi-contract}

The goal of our algorithm will be to find a family $\cC$ of
non-overlapping clusters such that the number of edges not internal to
clusters is $\bbar m=\tilde O(m/\delta)$. Contracting a ``core'' of
each cluster, as defined below, we will produce a graph $\bbar G$ with
$O(\bbar m)$ edges that preserves all non-trivial cuts of size at most
$\delta$. We can then apply Gabow's algorithm~\cite{Gab95}, and find a
minimum cut in $\tO(\bbar m\delta)=\tO(m)$ time.

Note that because the clusters in $\cC$ are required to be non-overlapping, identifying a subset of
vertices in one cluster will not stop any other cluster from being a cluster.

Consider a cluster $C$, and consider any vertex $v\in C$.  Recall
here that since clusters are trimmed, at most $3d(v)/5$ of the edges
from $v$ leave $C$ in $\bbar G$. We say a vertex $v\in C$ is {\em loose\/} if
it is regular and at least $d(v)/2-1$ of its incident edges leave $C$. 

If more than $1/4$ of the edges incident to $C$ in $\bbar G$ are internal to $A$ then we
define $A$ to be the {\em core\/} of $C$; otherwise the core of $C$ is
empty and contracting an empty core has no effect.  

The following lemma states that contracting cores preserves non-trivial min-cuts:

\begin{lemma}\label{lem:core-contract} If a non-trivial min-cut of $G$
has survived in $\bbar G$, then it will also survive  when
we contract the core of any cluster in $\bbar G$.
\end{lemma}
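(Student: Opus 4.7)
My plan is to show that the core $A$ lies entirely on one side of the surviving non-trivial min-cut $(T,U)$; once this is established, contracting $A$ into a super vertex yields a partition of the resulting multigraph with exactly the same set of cut edges, so $(T,U)$ is preserved. If $A$ is empty, the contraction is a no-op, so I may assume $A\neq\emptyset$.

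A surviving min-cut has $\leq \delta$ cut edges in $\bbar G$, so the cluster property applied to $(T,U)$ gives, without loss of generality, that $T\cap C$ contains at most two regular vertices and no super vertices from $C$. All super vertices of $C$ are non-loose by definition (looseness is reserved for regular vertices), hence lie in $A$, and these all sit on the $U$-side. It therefore suffices to show that every regular $v\in T\cap C$ is loose.

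Suppose for contradiction that some regular $v\in T\cap C$ is non-loose. Classify the edges at $v$ in $\bbar G$ into four counts $a,b,c,e$ according to whether the other endpoint lies in $(T\cap C)\setminus\{v\}$, $U\cap C$, $T\setminus C$, or $U\setminus C$; thus $a+b+c+e=d(v)$, and non-looseness gives $c+e<d(v)/2-1$, equivalently $a+b>d(v)/2+1$. The cluster property allows at most one further regular vertex $w$ in $T\cap C$. Since $v$ and $w$ are both uncontracted regular vertices of the simple graph $G$, $\bbar G$ contains at most one edge between them, so $a\leq 1$ and therefore $b>d(v)/2$. Now swap $v$ from $T$ to $U$: if $T\cap C=\{v,w\}$ then $w\in T\setminus\{v\}$; if $T\cap C=\{v\}$ then $T\setminus\{v\}=T\setminus C$, which cannot be empty because $T=\{v\}$ would be a trivial cut in $G$ (a single regular vertex), contradicting non-triviality. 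The swap thus produces a valid cut, and minimality of $(T,U)$ forces $a+c\geq b+e$. Combined with $a\leq 1$ and $b>d(v)/2$ this gives $c>d(v)/2-1+e\geq d(v)/2-1$, contradicting $c+e<d(v)/2-1$. Hence $A\cap T=\emptyset$ and $A\subseteq U$.

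The main obstacle is the case $|T\cap C|=2$: a non-loose $v$ could in principle have many edges to its partner $w\in T\cap C$, which would make the single-swap inequality too slack on its own. The critical quantitative input that breaks this is that $w$ is also regular and uncontracted, so simplicity of $G$ forces at most one edge between $v$ and $w$ in $\bbar G$; this uniform bound $a\leq 1$ is what lets the same swap inequality deliver the same contradiction regardless of whether $|T\cap C|$ is one or two.
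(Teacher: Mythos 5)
Your proof is correct and follows essentially the same route as the paper's: both reduce the claim to showing that every vertex of $C\cap T$ is loose, and both derive this by a vertex-swap argument whose quantitative bite comes from simplicity of $G$ giving at most one edge between the (at most two) regular vertices of $C\cap T$. Your explicit four-way edge count $a,b,c,e$ is a cleaner bookkeeping of the same inequalities the paper states more informally.
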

\begin{proof}
First we note that if a non-trivial min-cut of $G$ survives in $\bbar G$,
then it must also be a min-cut $(T,U)$ of $\bbar G$. It was a min-cut of $G$, so it has $\lambda\leq \delta$ cut edges. Also,
because it was a non-trivial cut in $G$ with at least two vertices on each
side, we must have at least two regular vertices or one super vertex both in
$T$ and in $U$.

We now consider a cluster $C$ in $\bbar G$ with a non-empty core.
Since $(T,U)$ has at most $\delta$ cut edges, by the definition of a cluster,
one side, say $T$, has at most two regular vertices and no super vertices
from $C$. We will argue that these vertices in $C\cap T$ must be
loose, hence that the vertices identified by the contraction of the core
are all in $U$, for then this contraction preserves $(T,U)$.

Let $v$ be one of the vertices from $C\cap T$, and assume for
a contradiction that $v$ is not loose.
We will prove that we get a smaller cut by moving $v$ to $U$,
contradicting that $(T,U)$ was a minimum cut. Since $v$ is regular
and both sides have at least one super vertex or two regular vertices,
$v$ is not the only vertex in $T$. Therefore we still have
a cut after moving $v$ to $U$.

Moving $v$ only affects the cutting of
edges incident to $v$. When $v$ is in $T$, we cut all edges
from $v$ to $C$, except possibly one to another regular vertex in $C\cap T$.
Since $v$ is not loose, it has more than $d(v)/2+1$ edges from $v$ into $C$, so
with $v$ in $T$, we cut more than $d(v)/2$ edges incident to $v$.
Moving $v$ to $U$, we stop cutting these edges, so we
cut less than $d(v)/2$ edges incident to $v$, contradicting
that $(T,U)$ was a min-cut.
\end{proof}

Next we want to argue that if we have a bound on the number $\bbar m$
of edges between the clusters, then the number of edges that survive
when we contract the cores is $O(\bbar m)$. This is done by the following
lemma:

\begin{lemma}\label{lem:shaving} If a cluster $C$ has $k$ edges leaving it,
then there are less than $3k$ edges incident to $C$ that are not internal to the
core. In particular, if the core is empty, we have $\vol(C)<3k$.
\end{lemma}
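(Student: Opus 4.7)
The plan is to bound the edges incident to $C$ but not internal to the core by splitting them into (a) the $k$ edges leaving $C$ and (b) the internal $C$-edges having at least one endpoint in the loose set $L = C \setminus A$. The bound on (b) will come from exploiting the loose condition together with the minimum-degree hypothesis $\delta \geq \lg^6 n$ that forces loose regular vertices to have high degree.

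First I would use that each $v \in L$ is regular with at least $d(v)/2 - 1$ edges leaving $C$. Summing this over $L$ gives $|E(L, V \setminus C)| \geq \vol(L)/2 - |L|$, and since this quantity is at most $k$ we obtain $\vol(L) \leq 2k + 2|L|$. Because every loose vertex is regular with $d(v) \geq \delta$, we also have $\delta|L| \leq \vol(L)$, which combined with the previous inequality yields $|L| \leq 2k/(\delta - 2) = O(k/\delta)$; this is vanishingly small compared to $k$ under our hypothesis on $\delta$.

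Second, I would bound the internal $C$-edges with an endpoint in $L$. Counting degrees from the $L$-side, $\sum_{v \in L} \deg_C(v) = 2|E(L,L)| + |E(A,L)| = \vol(L) - |E(L, V \setminus C)| \leq \vol(L)/2 + |L|$, and therefore $|E(A,L)| + |E(L,L)| \leq \vol(L)/2 + |L| \leq k + 2|L|$. Adding the $k$ edges leaving $C$, the total number of edges incident to $C$ but not internal to the core is at most $2k + 2|L| < 3k$, since $|L| < k/2$ for $\delta \geq \lg^6 n$. This proves the main statement.

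For the ``in particular'' bound $\vol(C) < 3k$ when the core is empty, I would additionally invoke the defining empty-core condition $|E(A,A)| \leq \tfrac{1}{4}(\text{edges incident to } C)$, which rearranges to a relation of the form $3|E(A,A)| \leq |E(A,L)| + |E(L,L)| + k$. Combined with the second-step bound this forces $|E(A,A)| \leq (2k + 2|L|)/3$, after which substituting into $\vol(C) = 2|E(A,A)| + 2(|E(A,L)| + |E(L,L)|) + k$ produces a bound that fits strictly inside $3k$ once the $O(k/\delta)$ slack from $|L|$ is absorbed. The main obstacle is that the constants are tight: the bounds only line up to give the strict inequality $\vol(C) < 3k$ because $|L|$ is negligibly small relative to $k$, and this hinges crucially on both the ``$-1$'' slack in the loose definition and the large-$\delta$ hypothesis.
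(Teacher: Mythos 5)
Your approach to the main bound is the same as the paper's: each loose vertex $v$ sends at least $d(v)/2-1$ edges out of $C$, summing over $L$ gives $\vol(L)\leq 2\ell+2|L|$ (with $\ell\leq k$ the number of leaving edges from $L$), and the minimum-degree hypothesis makes $|L|=O(k/\delta)$ negligible. Your bound of $2k+2|L|$ on the edges incident to $C$ but not internal to $A$ is exactly the paper's $(2+o(1))k$, so the case where $A$ is the core is fine.

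The ``in particular'' step, however, has an arithmetic error. Plugging your own bounds $|E(A,A)|\leq(2k+2|L|)/3$ and $|E(A,L)|+|E(L,L)|\leq k+2|L|$ into $\vol(C)=2|E(A,A)|+2\bigl(|E(A,L)|+|E(L,L)|\bigr)+k$ gives
\[
\vol(C)\ \leq\ \tfrac{2}{3}(2k+2|L|)+2(k+2|L|)+k\ =\ \tfrac{13}{3}k+\tfrac{16}{3}|L|,
\]
which is about $4.3k$; the $|L|$ slack contributes only $O(k/\delta)$ and cannot absorb a gap of $\tfrac{4}{3}k$. Indeed there is no reason to expect $\vol(C)<3k$ here: take all $k$ leaving edges from $L$, $|E(A,L)|\approx k$, $|E(L,L)|=0$, and $|E(A,A)|=2k/3$ making the empty-core condition tight; then $\vol(C)\approx\tfrac{13}{3}k$. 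So the claim that the numbers ``line up'' is not correct.

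What your inequalities do give---and what the paper's proof actually establishes, and what is what Lemma~\ref{lem:cut-trim} relies on---is that the number of \emph{distinct} edges incident to $C$ is less than $3k$ when the core is empty:
\[
|E(A,A)|+|E(A,L)|+|E(L,L)|+k\ \leq\ \tfrac{1}{3}(2k+2|L|)+(k+2|L|)+k\ =\ \tfrac{8}{3}k+\tfrac{8}{3}|L|\ <\ 3k,
\]
since $|L|<k/8$ by your step one. The stated ``$\vol(C)<3k$'' should be read as this count of distinct incident edges, not the volume (which double-counts internal edges). The takeaway: do not assert that a bound ``fits strictly inside $3k$'' before plugging the constants in; here the volume version simply fails.
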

\begin{proof}
Let $A$ be $C$ without the loose vertices, i.e., $A$ is the core unless
the core becomes empty. Let $\ell$ be the number of edges leaving $C$ from loose vertices.
Then we have $k-\ell$ edges leaving $C$ from vertices in $A$.
Other edges incident to $C$ but not internal to $A$ are all incident
to loose vertices.

Consider any loose vertex $v$ in $C$. It has at least $d(v)/2-1=d(v)/(2+o(1))$ incident
edges leaving $C$. Here we used that loose vertices are regular, so
$d(v)\geq \delta=\omega(1)$. 
It
follows that the total number of edges incident to loose vertices is at most $(2+o(1))\ell$. Therefore, the
total number of edges not internal to $A$ is at most $(2+o(1))\ell+(k-\ell)\leq
(2+o(1))k$. This proves the lemma unless the core becomes empty.

The core becomes empty if and only if at most $1/4$ of the edges incident to $C$ are internal
to $A$, but this implies that the number of edges internal to $A$ is at
most $1/3$ of the number of edges not internal to $A$. Thus, if $A$ is not the
core, there are at most $(2+o(1))k/3$ edges internal to $A$, and then
we have at most $(2+o(1))k+(2+o(1))k/3<3k$ edges incident to $C$.
\end{proof}
Finally, we claim that each super vertex represents a lot
of edges from the orginal graph.
\begin{lemma}\label{lem:big-super}
There are $\Omega(\delta^2)$ edges from $G$ contracted in each super 
vertex of $\bbar G$.
\end{lemma}
\begin{proof}
Consider the first time a cluster $C$ with a non-empty core $A$ get
contracted into a super vertex $v^*$. By {\em first} we mean that $A$
itself does not already have any super vertices, that is, all vertices
in $A$ are regular. By definition of a core, only loose vertices from
$C$ are not in $A$, and loose vertices are regular by definition, so we conclude that 
all vertices in
$C$ are regular. But $C$ is also trimmed, so any vertex $v\in C$, has
at least $2/5$ of its incident edges staying in $C$, and they all go
to distinct neighbors since $C$ has no super vertices. Thus $|C]\geq
  2\delta/5$, and hence we have at least $2\delta^2/5$ edge end-points
  in $C$, corresponding to at least $\delta^2/5$ distinct edges. By
  definition of a non-empty core, at least $1/4$ of the edges incident
  to $C$ are internal to $A$, so we conclude that $A$ has at least
  $\delta^2/20=\Omega(\delta^2)$ internal edges that all get
  contracted into $v^*$. Now $v^*$ may later be contracted with other
  vertices, but this can only increase the number of edges contracted
  in $v^*$.
\end{proof}

\subsection{Active and passive super vertices}\label{sec:super-degree}

We say that a super vertex is {\em active\/} if it
has at least
\[\delta^*=(\lg n)\delta/\alpha_0\]
incident edges in $\bbar G$; otherwise we call it {\em passive}. 

Recall that the maximal number of parallel edges between any two vertices
is $\delta$; for if it is higher then we can contract the two vertices.
The high degree of an active super vertex $v$ therefore 
implies that at most a fraction
$\alpha_0/(\lg n)=1/\lg^6 n$ of its outgoing edges go to any single
neighbor. This means that when we push residual mass from $v$, then no
neighbor receives more than this fraction.
Thus we can spread mass from active vertices in a way similar to
what was described in Section \ref{sec:inside0} when we had no parallel
edges.  The point in the low
degrees of passive super vertices is the following good bound on the
total number of edges incident to passive super vertices.
\begin{lemma}\label{lem:passive}
The total number of edges leaving passive super
vertices is $\tO(m/\delta)$.
\end{lemma}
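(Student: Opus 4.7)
The plan is to amortize the degree of each passive super vertex against the edges that the algorithm permanently consumes (makes internal to a core and then contracts away) at the moment the super vertex is created. A super vertex $s$ is born when some cluster core $A_s$ is contracted, which destroys the $I_{A_s}$ edges of $\bbar G$ internal to $A_s$. Since the edges of $\bbar G$ are a subset of the original edges of $G$ and each can be destroyed at most once, the total consumption $\sum_s I_{A_s}$ summed over every super vertex ever created is at most $m$.

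To convert this global edge budget into the claimed bound I would prove a per-super-vertex charging inequality: for every passive super vertex $s$ with $d_s < \delta^*$ incident edges, the corresponding core $A_s$ satisfies $I_{A_s} = \tOmega(d_s\,\delta)$. Three ingredients would drive the argument. First, the volume identity $\vol(A_s)=2I_{A_s}+d_s$. Second, the core-nonempty condition $I_{A_s} > \vol(C_s)/4$, which already implies $\vol(A_s) > 2d_s$. Third, the trimmed-cluster property: every vertex of $C_s$ keeps at least $2/5$ of its edges inside $C_s$. Crucially, because passive super vertices are stripped from $H$ at the start of every outer iteration, the core $A_s$ can only contain regular vertices (each of degree $\geq \delta$) and \emph{active} super vertices (each of degree $\geq \delta^*$). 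A case analysis over these two kinds of constituents yields the charging inequality.

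Summing the charging inequality across all passive super vertices will give
\[\delta\cdot\sum_{s\in P} d_s \;\le\; \tO\Bigl(\sum_s I_{A_s}\Bigr)\;\le\; \tO(m),\]
so $\sum_{s\in P} d_s = \tO(m/\delta)$, which is exactly the total number of edges incident to passive super vertices, as claimed.

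The hard part will be the case analysis behind the charging step. An all-regular core is easy: the trimmed condition forces $|C_s|\ge 2\delta/5$ and hence $\vol(C_s)=\Omega(\delta^2)$, giving $I_{A_s}=\Omega(\delta^2)$, which dominates $d_s\,\delta \le \delta^*\delta = \tO(\delta^2)$. The delicate case is a core containing only one or two active super vertices together with a few non-loose regulars; here Lemma~\ref{lem:shaving} (bounding edges incident to $C_s$ that are not internal to the core by $3k$, where $k$ is the number of edges leaving $C_s$) combined with the trimmed property applied to the active super vertices in $C_s$ (each contributing $\ge 2\delta^*/5$ edges inside $C_s$) is what must force enough internal core-edges to amortize against $d_s$.
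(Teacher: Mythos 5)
Your high-level strategy—amortizing the degree of each passive super vertex against edges that were permanently consumed by contraction—is the same as the paper's, and your budget bound $\sum_s I_{A_s}\le m$ is sound. The gap is in the per-super-vertex charging inequality $I_{A_s}=\tOmega(d_s\,\delta)$ in exactly the case you flag as ``delicate.'' If the core $A_s$ contains an active super vertex $u$ with $d(u)\ge\delta^*$, the tools you list cannot force $I_{A_s}$ to be of order $\delta^*\delta$. The trimmed property gives $u$ at least $\tfrac25\delta^*$ edges inside $C_s$, and the core-nonempty condition gives $I_{A_s}>\vol(C_s)/4\ge\delta^*/4$; both produce $I_{A_s}=\tOmega(\delta^*)$, which is a factor of $\delta$ too small compared to $d_s\,\delta$ when $d_s=\Theta(\delta^*)$. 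A concrete bad configuration: let $C_s=A_s=\{u\}\cup R$ where $u$ is an active super vertex with $d(u)=\delta^*$, $R$ consists of $k\approx\delta^*/(2\delta)$ regular vertices of degree $\delta$, and each vertex of $R$ has all $\delta$ of its edges as parallel edges to $u$. Then $I_{A_s}\approx\delta^*/2$, the core is non-empty, $d_s\approx\delta^*/2$ so $s$ is passive, but $I_{A_s}/(d_s\,\delta)\approx1/\delta$. Summing such charges only yields $\sum_{s\in P}d_s=\tO(m)$, not $\tO(m/\delta)$.

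The paper's argument avoids this by not charging against the core that \emph{directly} produced $s$. Instead, it traces $s$'s contraction lineage back to the \emph{first} cluster whose core was contracted into (an ancestor of) $s$; because all original vertices are regular, that first core contains only regular vertices of degree at least $\delta$, and the trimmed condition together with simplicity among regulars forces it to contain at least $\delta^2/20$ internal edges. Those edges are permanently inside $s$, and the lower bound $\delta^2/20$ is independent of $d_s$. Comparing $d_s<\delta^*=(\lg n)\delta/\alpha_0$ against $\delta^2/20$ gives exactly the $\tO(1/\delta)$ ratio needed, and disjointness of lineages gives the global bound. So the missing idea in your proof is to charge against the \emph{first all-regular ancestor core}, not the last core; once you make that substitution, your amortization framework goes through.
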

\begin{proof}
By Lemma \ref{lem:big-super}, we have $\Omega(\delta^2)$ internal 
edges contracted into each super vertex $v^*$. When $v^*$ is passive, the
ratio of edges leaving $v^*$ to edges contracted in $v^*$ is
at most 
$\delta^*/\Omega(\delta^2)=O((\log n)/(\alpha_0\delta))=\tO(1/\delta)$,
and this holds for every passive super vertex.
\end{proof}

Our algorithm will terminate successfully if the total number of edges in $\bbar G$ is less than 20 times the number of edges incident to passive
super vertices, for then, by Lemma \ref{lem:passive}, we have only $\tO(m/\delta)$ edges in $\bbar G$, and then, as described in Section \ref{sec:semi-contract}, we can find a min-cut of $G$ in near-linear time.

\subsection{Cut, trim, shave, and scrap}\label{sec:cut-trim}

An iteration of the repeat-loop in Algorithm \ref{alg:min-cut}
generally works by alternation between cutting edges of a subgraph $H$
of $\bbar G$ and trimming the resulting components of $H$ as described
below. We start with $H=\bbar G$. By {\em cutting} we refer to two
cases. One is where we cut out a passive super vertex, removing its
incident edges. The other is where we remove the edges of a
low-conductance cut. By {\em trimming\/} we mean removing any vertex
$v$ from $H$ that has lost more than $3/5$ of the edges it has in
$\bbar G$. When removing $v$, we also remove all its incident edges
from $H$, possibly resulting in more vertices to be trimmed. When no
more trimming is possible, each remaining vertex in $H$ satisfies
$d_H(v)\geq 2 d_{\bbar G}(v)/5$ which means that all components of $H$ are 
trimmed as defined in Section \ref{sec:clusters}.

The while-loop of Algorithm \ref{alg:min-cut} keeps cutting and
trimming until we somehow know that all remaining components $C$ in $H$
are clusters in $\bbar G$. We now {\em shave\/} off loose regular
vertices $v$ that have lost at least $d(v)/2-1$ of their incident
edges. Contrary to trimming, the shaving is not recursive. We only
shave off vertices that were loose before the shaving started. If we
recursively removed vertices $v$ that had lost at least $d(v)/2-1$ of their
incident edges, then we could easily end up losing all vertices in
thee graph.

Let $A$ be what is left of a component $C$ after shaving. If less than
$1/4$ of the edges incident to $C$ are internal to $A$, we {\em
  scrap\/} $A$ so that nothing remains from $C$. Otherwise $A$ is a
{\em core\/} that we contract in $\bbar G$.

We want to bound the number of edges cut, trimmed, shaved, and scrapped from $H$, for these are
the edges that remain in $\bbar G$ when the cores of the cluster components of $H$ 
 are contracted.

\begin{lemma}\label{lem:cut-trim} If the total number of edges cut is $c$,
then the total number of edges lost due to trimming, shaving, and scrapping
is at most $4c$.
\end{lemma}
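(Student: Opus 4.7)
The plan is a parameterized double amortization in which terms from a trim bound cancel terms from the shave/scrap bound to give the clean constant $4$. Introduce two parameters: let $\beta\in[0,1]$ be the fraction of the $2c$ cut-edge endpoints that end up trimmed, and let $\gamma\in[0,1]$ be the fraction of the $t$ trim edges whose non-trimmer endpoint is itself later trimmed.

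For the trim bound, when a vertex $v$ is trimmed the condition ``lost more than $3/5$ of the edges'' implies that its prior loss $p_v$ in $H$ exceeds $3 d(v)/5$ while the new loss $t_v$ removed at $v$'s trim is less than $2 d(v)/5$, so $p_v \geq (3/2) t_v$. Summing over trimmed $v$ gives $\sum_v p_v \geq (3/2) t$. On the other hand, each cut edge contributes at most $1$ to $\sum_v p_v$ per endpoint later trimmed (at most $2\beta c$ in total), and each trim edge contributes at most $1$ if its non-trimmer endpoint is also later trimmed (at most $\gamma t$ in total). This yields the trim inequality $(3-2\gamma)\,t \leq 4\beta c$.

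For the shave/scrap bound, let $k_C$ be the number of edges leaving cluster $C$ in $\bbar G$. Lemma \ref{lem:shaving} gives fewer than $2 k_C$ additional edges incident to $C$ that remain in $\bbar G$ -- accounting for the shave contribution if the core is non-empty and the scrap contribution if the core is empty. Summing and counting contributions to $\sum_C k_C$, each cut edge contributes $1$ per non-trimmed endpoint (at most $2(1-\beta) c$ in total) and each trim edge contributes $1$ if its non-trimmer endpoint lies in some cluster (at most $(1-\gamma) t$ in total). Hence $s + sp < 4(1-\beta) c + 2(1-\gamma) t$.

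Combining,
\[t + s + sp < (3-2\gamma)\,t + 4(1-\beta) c \leq 4\beta c + 4(1-\beta) c = 4c,\]
where the last step uses the trim inequality. The delicate aspect is the parameter balance: $\beta$ and $\gamma$ appear with opposite roles in the two halves, and the algebra is set up so that their combination gives a clean constant $4$ independent of how much trimming cascades ($\gamma$) or how much of the cut boundary ends up at trimmed vertices ($\beta$). The main challenge is identifying this exact parametrization; once in place, the bookkeeping is routine.
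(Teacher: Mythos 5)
Your proof is correct and rests on the same two facts as the paper's argument: the per-trimmed-vertex ratio $p_v \geq \tfrac{3}{2} t_v$ coming from the $3/5$-threshold, and the at-most-$2k_C$ additional shave/scrap cost per cluster extracted from Lemma~\ref{lem:shaving}. Where you differ is in packaging: you introduce the free parameters $\beta,\gamma$ and verify algebraically that they cancel. The paper instead tracks a single potential, the \emph{total lost degree} of $H$, i.e.\ the sum over $v\in H$ of the number of $\bbar G$-edges at $v$ missing from $H$. A cut increases this potential by at most $2$; trimming $v$ removes at most $2d_{\bbar G}(v)/5$ edges while decreasing the potential by at least $d_{\bbar G}(v)/5$, so the edges removed by trimming are at most twice the total potential drop, i.e.\ at most $2(2c-d)$ where $d$ is the final potential; and $d=\sum_C k_C$, so shaving and scrapping lose fewer than $2d$ edges, for a total of at most $2(2c-d)+2d=4c$ with the $d$-terms cancelling automatically. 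Your parameters are encoded in this potential (one can check $\sum_v p_v = 2\beta c + \gamma t$ and $d \leq 2(1-\beta)c + (1-\gamma)t$), so the two arguments are equivalent in substance, but the potential viewpoint removes the step you flag as delicate---guessing the right parametrization---since there is nothing to guess.
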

\begin{proof}
The proof is by amortization. The ``lost degree'' of a vertex $v\in H$
is the number of incident edges in $\bbar G$ that are not in $H$.
A vertex not in $H$ has no lost degree. We are interested in the total lost degree over all vertices in $H$, and it
starts at $0$ when $H=\bbar G$.
When we cut an edge, the total lost degree increases by $2$.
When we trim a vertex $v$, it has at most $2d_{\bbar G}(v)/5$ 
incident edges left, so its lost degree is at least $3d_{\bbar G}(v)/5$.
The trimming removes $v$ from $H$, so its lost degree is saved.
However, each incident edge removed increases
the lost degree of the other end-point by $1$. Thus, when we trim $v$,
we remove at most $2d_{\bbar G}(v)/5$ edges while reducing
the total lost degree by at least $d_{\bbar G}(v)/5$.
The number of edges removed by trimming is therefore at most twice the 
decrease in the total lost degree due to triming.
The total increase lost degree by cutting is $2c$ and if the total lost degree is $d$
after all the trimming is done, then the total number of
trimmed edges is at most $2(2c-d)$.

Consider a cluster $C$. At this point, its lost degree $d_C$ is
exactly the number of edges leaving $C$ in $\bbar G$, and summing over
all clusters $C$ in $H$, we have $\sum_C d_C=d$. By Lemma
\ref{lem:shaving}, there are at most $3d_C$ edges incident to a cluster $C$ in
$\bbar G$ that are not internal to the core of $C$. We have already
removed the $d_C$ edges leaving $C$, so when we shave $C$ down to its
core, scrapping it if too small, then we remove at most $2d_C$
edges. Summing over all clusters, the shaving and scrapping takes out
at most $\sum_C 2d_C=2d$ edges.

Summing up, the trimming, shaving, and scrapping of undersized
cores, takes out a total of at most $2(2c-d)+2d\leq 4c$ edges.
\end{proof}
As mentioned above, we start the round with $H=\bbar G$. As described
at the end of Section \ref{sec:super-degree}, we are done if more
than a fraction $1/20$ of the edges are incident to passive super
vertices. Otherwise, we cut all edges incident to passive 
super vertices, and trim
the sides.

Next we are repeatedly going to cut and trim using cuts of components of $H$
of conductance at most
\[\Phi_0=1/(20\lg m).\]
This is what we henceforth
regard as a {\em low-conductance\/} cut. Later sections will prove that low-conductance
cuts can be found efficiently if a component is not a cluster.

We claim that the total
number of edges cut this way is at most a fraction $1/20$ of the
edges in $\bbar G$. The point is that the number of edges cut
is a fraction $1/(20\lg m)$ of the volume of the small side, and
the same vertex can end on the smaller side only $\lg m$ times. Here
size is measured by volume, that is, number of incident edges.

Including the at most $1/20$ of the edges of $\bbar G$ incident to passive vertices, we thus cut at most a
 fraction $1/10$ of the edges in $\bbar G$.
Hence, by Lemma \ref{lem:cut-trim}, in total,
we lose at most $1/2$
of the edges in  $\bbar G$. Summing up,
\begin{lemma}\label{lem:edges-left} Cutting edges around passive vertices and edges of
low-conductance cuts, trimming, shaving, and scrapping, leaves at least
half the edges of $\bbar G$ in the resulting cluster cores of $H$.
Therefore, when we contract the cluster cores of $H$ in $\bbar G$ (c.f. Algorithm  \ref{alg:min-cut}), we reduce the number of edges in $\bbar G$ by at least 
a factor $2$.
\end{lemma}

\subsection{Graph representation and modification}\label{sec:representation}
We will now discuss how to represent the graphs involved in Algorithm
\ref{alg:min-cut}. Our representations are all fairly standard. They
ensure that all the graph manipulations we do in Algorithm
\ref{alg:min-cut} can be supported in near-linear total time. The
repesenations will also help us when we later search for
low-conductance cuts.

We assume that the simple input graph $G=(V,E)$ is represented with a list
of neighbors for each vertex. We also assume some linear ordering of $V$.

\paragraph{The contracted graph \boldmath{$\bbar G$}}
The graph $\bbar G$ is obtained from $G$ by contraction of vertex
sets.  We think of $\bbar G=(\bbar V,\bbar E)$ as a multigraph with
parallel edges, and the degree $d_{\bbar G}(v)$ of a
vertex in $v$ counts all incident edges, including parallel ones.
However, internally, we represent $\bbar G$ as a simple
graph with multiplicities on the edges. To represent the connection to
$G$, we store with each vertex in $\bbar V$, the set of original vertices it
represents $V$, and likewise for the edges.  Recall here that a vertex
$v\in \bbar V$ is a regular vertex if it represents a single original
vertex and that $v$ is a super vertex if it represents several original vertices.

Initially $\bbar G=G$, and we will always have $\bbar V\subseteq V$.
Each $v\in \bbar V$ will have its current neighbors, each with a positive
multiplicity, stored in a doubly-linked list sorted by the ordering of
$V$. The list ordering is maintained by a balanced binary search trees so that
we can insert and delete neighbors in $O(\log n)$ time.

We will always do the contractions for two vertices at the time, that
is, if we have a set $\{v_1,\ldots,v_s\}$ to be contracted, we create
a list of pairs $\{v_1,v_2\}, \{v_2,v_3\},\ldots \{v_{s-1},v_s\}$ to
be contracted.

To make the contractions efficient, for each $v\in \bbar V$, we
maintain the sum of degrees of the original vertices represented by
$v$. We note that loops created by the contractions are counted twice
even though we remove them from $\bbar G$. When asked to contract $u$
and $v$, we will use the one with the bigger degree sum, say $u$, as
the new representative.  As the first step in the contraction, if
there is an edge $(u,v)$, we remove it as it would otherwise
become a loop. More precisely, we search and remove $v$ from $u$'s incidence
list, and vice versa, all in $O(\log n)$ time thanks to the binary
search trees.  Next we have to take every edge $(v,w)$ from $v$ and
turn it into an edge $(u,w)$ from $u$. This requires replacing
$v$ by $u$ in $w$'s neighbor list, and adding $w$ to $u$'s neighbor list.
Thanks to the ordering, we immediately discover if we already
had a $(u,w)$ edge, in which case we merge them into a single edge,
adding their multiplicities. This is all done
in $O(\log n)$ time per edge incident to $v$. What makes the whole thing
efficient is that
after the contraction, the degree sum of $u$ is at least twice the
degree sum we had for $v$. Since the degree sum cannot exceed $2m$, we
conclude that each edge can be moved at most $\log_2(2m)$ times over
all contractions, adding up to a total cost of $O(m\log^2 n)$.

As we do the contractions, we will build a contraction tree, making
$u$ a parent of $v$ when $u$ and $v$ are contracted into $u$ as above.
The roots are then the current vertices in $\bbar V$ and their
descending nodes are the original vertices they represent. The
contraction tree has height at most $\log_2(2m)$.

Finally, recall that if we get more than $\delta$ parallel edges between
two vertices $u$ and $w$, then we want to contract $u$ and $w$. We easily detect
this as we merge edges, adding up their multiplicities, and this will result
in a list of pairs to be contracted. When get to such a pair 
$\{u,w\}$, we go to the contraction tree and find the roots $u'$ and $w'$
of $u$ and $w$. If they have the same root $u'=w'$, there is nothing
to be done. Otherwise, we perform the contraction of $u'$ and $w'$ as 
described above. It only takes $O(\log n)$ time to find the roots, so
this does not affect our overall time bound of $O(m\log^2 n)$ to
support all contractions in $\bbar G$ done in Algorithm \ref{alg:min-cut}.

\paragraph{The subgraph \boldmath{$H$} of \boldmath{$\bbar G$}}
Inside the repeat-loop of Algorithm \ref{alg:min-cut}, we work with a
subgraph $H$ of $\bbar G$. Initially, we just copy $\bbar G$ into
$H$. Let $\bbar n$ and $\bbar m$ denote the number of vertices and
edges we get from $\bbar G$.

The subgraph $H$ is decremental in the sense that edges are only
removed, not added. We will now describe a dynamic data structure to
maintain information on $H$. The data structure only supports
the deletion of one edge at the time. If a vertex becomes isolated,
it is deleted automatically.

With each vertex $v$, we store its initial degree $d_{\bbar G}(v)$
from $\bbar G$ as well as its current degree $d_H(v)$ in $H$.  For
every component $C$ of $H$, we will have a doubly-linked list of its
vertices and a balanced binary search tree of height $O(\log \bbar n)$ over
this vertex list. For each node in the binary search tree, we maintain the
number of descending vertices, the sum of their degrees in $H$, and the
sum of their degrees in $\bbar G$. For the root node of $C$, these
numbers represent the number of vertices in $C$, $\vol_H(C)$, and
$\vol_{\bbar G}(C)$. Since no edges leave a component of $H$, we
have $m(C)=\vol_H(C)/2$. {\em The root is viewed as representating the component $C$}. To check which component a vertex belongs, we 
simply follow parent pointers to the root in $O(\log n)$ time.

To maintain the components of $H$, we use the dynamic connectivity algorithm
from \cite{HLT01} supporting each edge update in $O(\log^2 \bbar n)$ time.
We will only use it decrementally in $\tO(\bbar m \log^2 \bbar n)$ total time.
The algorithm from  \cite{HLT01} maintains a spanning forest of $H$. In
connection with each edge deletion $(u,v)$, it can tell us if $(u,v)$ was a bridge,
and if so, which of $u$ and $v$ that ends in the smaller component.

When we remove an $(u,v)$ edge from $H$, our first action is to
decrement $d_H(u)$ and $d_H(v)$ as well as the degree sums of their
ancestors in the binary search trees, all in $O(\log \bbar n)$ time. If
$(u,v)$ was a bridge in a component $C$, the algorithm from
\cite{HLT01} points us to (a spanning tree of) the smaller new
component $A$. We
extract these vertices, one by one, from the vertices in $C$ and from
the binary search tree over $C$. Next we build a new binary search
tree for the vertices in $A$. All this is done in $O(\log n)$ time
time per vertex, including all necessary updates to counters in the
binary search trees. Every time a vertex is moved, it ends up in a
component of half the size, so the total of moving vertices
to smaller components is $O(\bbar n\log^2\bbar n)$. Adding up,
the total time spent on our decremental data structure for $H$ is $O(\bbar m\log^2\bbar n)$.

Returning to an iteration of the repeat-loop in Algorithm
\ref{alg:min-cut}, first we copy $\bbar G$ to $H$, initializing the
above decremental data structure. 
We will generally have a deletion list for vertices to be deleted. When
we get to a vertex in the list, we remove all its incident
edges, one by one, before removing the vertex.

The first vertices in the deletion list are the passive
super vertices, that is, super vertices $v$ with 
$d_{\bbar G}(v)<\delta^*$. Later we add vertices
to be trimmed to the list, that is, any vertex $v$ getting
$d_{H}(v)<2d_{\bbar G}(v)/5$. The trimming is not completed
until the deletion list is empty. 

When we in the while-loop find a low conductance cut, we 
remove the cut edges one by one. When all cut edges have been removed,
we do the trimming as described above. 

Recall that we for now ignore the cost of finding the low-conductance
cuts in the while-loop. Then the total time
spent on reducing $H$ to clusters is $O(\bbar
m\log^2\bbar n)$.

Finally, when we are down to clusters, we want to identify the
cores to be contracted. This is easily done sequentially in $O(\bbar m)$ total
time. More precisely, for each cluster $C$, we first go through
all the vertices, identifying the set $L$ of loose vertices, that is,
vertices with $d_H(v)\leq d_{\bbar G}(v)/2+1$. Let $A=V(C)\setminus L$.
Next we check if at least $1/4$ of the edges incident to $C$ in $\bbar G$
are internal to $A$. If so we, $A$ is a core to be contracted in $\bbar G$; 
otherwise the core is empty with nothing to be contracted from $C$.

Summing up what we have proved in this section, we have
\begin{lemma}\label{lem:basic-recurse}
Ignoring the cost of finding the low-conductance cuts, we can implement Algorithm
\ref{alg:min-cut} in $O(m\log^2 n)$ time where $n$ and $m$ are the
number of vertices and edges of the simple input graph $G$. The result
is the contracted graph $\bbar G$ described in Theorem
\ref{thm:main-tech}, that is, $\bbar G$ has only $\tO(m/\delta)$
edges, yet it preserves all non-trivial min-cuts in $G$.
\end{lemma}
\begin{proof}
Concerning the running time, above we implemented all the
contractions in $\bbar G$ in $O(m\log^2 n)$ total time. Moreover, we
implemented each iteration of the repeat-loop in $O(\bbar m\log^2\bbar n)$
time where $\bbar n\leq n$ and $\bbar m\leq m$ are the number of vertices
and edges in $\bbar G$ in the beginning of the iteration. By 
Lemma \ref{lem:edges-left}, $\bbar m$ is reduced by
at least a factor $2$ by the contractions to $\bbar G$ at the end of
each iteration. Therefore, the total time spent is $O(m\log^2 n)$.

Concerning correctness, Lemma \ref{lem:core-contract} ascerts
that the contraction of cluster cores preserve all non-trivial min-cuts of
$G$. Moreover, Lemma \ref{lem:passive} states that we
have only $\tO(m/\delta)$ edges leaving the passive super vertices,
and the repeat-loop only terminates when these edges constitute at least
a fraction $1/20$ of all edges, so we conclude that the
final contracted graph $\bbar G$ has only $\tO(m/\delta)$ edges.
\end{proof}

\section{Certifying clusters and finding low conductance cuts}\label{sec:cluster-or-cut}

We will now, at a high level, describe the process that repeatedly
takes a trimmed component $C$ of $H$, cuts the edges of a
low-conductance cut and trims the sides, stopping only when all
remaining components are known to be clusters. This implements the
while-loop in Algorithm \ref{alg:min-cut}.

In the process, {\em $H$ is a subgraph of $\bbar G$ with no passive
  super vertices from $\bbar G$}. Since $H$ starts trimmed and since
we trim after each cut, we know that all components of $H$ are trimmed
when we look for low-conductance cuts.

We need a measure for how close trimmed components of $H$ are at being clusters, but
it is useful with a more general mearsure that applies to all subgraphs $B$ of $H$.
We say $B$ is {\em $s$-splittable\/}, or
has {\em splittability $s$}, if
every cut $(T,U)$ of $\bbar G$ with at most $\delta$ cut edges
has $\min\{\vol_{B}(T\cap B),\vol_{B}(U\cap B)\}\leq s$. We want
$s$ to be as small as possible, and note that we always
have $s\leq m(B)$. A very important part of this definition is that it is inherited by
subgraphs, that is, if $A$ is a subgraph of $B$ and $B$ is $s$-splittable,
then $A$ is also $s$-splittable. Being $s$-splittable is thus preserved as we cut and trim.
Let
\[s_0=64\delta/\alpha_0\]
Our goal will be to partition $H$ into $s_0$-splittable trimmed components,
for they are then all clusters:
\begin{lemma}\label{lem:super-splittable}
If a trimmed component $C$ of $H$ is $s_0$-splittable,
then $C$ is a cluster.
\end{lemma}
\begin{proof}
Suppose that $C$ is not a cluster. Then there is a cut $(T,U)$ of $\bbar G$
with at most $\delta$ cut edges and such that both $T\cap C$ and
$U\cap C$ contain a super vertex or at least 3 regular vertices.
Consider $T\cap C$. Suppose $T\cap C$ contains a super vertex $v$.
Since $H$ has no passive super vertices, $d_{\bbar G}(v)\geq \delta^*=(\lg n)\delta/\alpha_0$. 
Since $C$ is trimmed, $d_C(v)\geq 2d_{\bbar G}(v)/5$. Therefore
$\volC(T)\geq d_C(v)\geq 2\delta^*/5\gg s_0$. Suppose instead that
$T\cap C$ contains no super vertices but at least three regular vertices. Then
by Lemma \ref{lem:big-sides}, there are at least $\delta/3$ regular vertices in $C\cap T$.
Since $C$ is trimmed, each of them has degree at least $2\delta/5$ in $C$, so
we conclue that $\volC(T)\geq 2\delta^2/5$. Since $\delta\alpha_0\geq \lg n$,
we again conclude that $\volC(T)\gg s_0$. The same argument holds for $S=U\cap C$, so
we conclude that 
$C$ is not $s_0$-splittable.
\end{proof}
As we cut and trim components into clusters, for each component $C$ of
$H$, we record the smallest $s$ for which we have {\em certified\/}
that $C$ is $s$-splittable.  By Lemma \ref{lem:super-splittable}, we 
{\em certify that $C$ is a cluster if $s\leq s_0$}. For larger $s$, we will apply the theorem below.
\begin{theorem}\label{thm:recurse} Let $s\in [s_0,m(C)]$ and $C$ be an $s$-splittable trimmed component of $H$.
We have an algorithm that, depending
on the input, will do one of the following:
\begin{itemize}
\item[(i)] Find a set $A\subseteq V(C)$ with $\volC(A)\leq m(C)$ and $\PhiC(A)
=o(\Phi_0)$
 in time $\tO(\volC(A))$.
\item[(ii)] Find a set $A\subseteq V(C)$ with $\volC(A)\leq m(C)$ and
$\PhiC(A)=o(\Phi_0)$ in time $\tO(m(C))$ certifying that the large side
$B=V(C)\setminus A$ is $s/2$-splittable.
\item[(iii)] Certifying in $\tO(m(C))$ time that $C$ is $s/2$-splittable.
\end{itemize}
\end{theorem}
The proof of Theorem \ref{thm:recurse} is complicated and deferred
to Sections \ref{sec:cross-cut}--\ref{sec:large}. 

In Algorithm \ref{alg:find-clusters} we use Theorem~\ref{thm:recurse}
to efficiently implement the while-loop from Algorithm \ref{alg:min-cut}
while carefully recording the certified splittability of each component.

\begin{algorithm}\label{alg:find-clusters}
\caption{Partitions $H$ into certified clusters as in the while-loop of Algorithm \ref{alg:min-cut}. Initially $H$ is
a trimmed subgraph $H$ of $\bbar G$ with no passive super vertices.}
Each component $C$ of $H$ is certified  $m(C)$-splittable\;
\While{some component $C$ of $H$ certified $s$-splittable with $s>s_0$}{
Apply Theorem \ref{thm:recurse} to $C$. In case (iii) we certify
that $C$ is $s/2$-splittable\;
In case (i) and (ii) we get the small side $A$ of a low-conductance
cut, remove the cut edges and trim the sides\;
\tcp{Note that both sides may now have been split into several
components}
In case (i), all new components $D\subseteq C$  are certified $\min\{s,m(D)\}$-splittable\;
In case (ii), all new components $D\subseteq A$ are certified $\min\{s,m(D)\}$-splittable\\
 and all new components $D\subseteq C\setminus A$ are certified $\min\{s/2,m(D)\}$-splittable}
\end{algorithm}
\begin{lemma}\label{lem:oracle} Using Theorem \ref{thm:recurse}, 
Algorithm \ref{alg:find-clusters} 
implements while-loop of Algorithm \ref{alg:min-cut}: starting from a
trimmed subgraph $H$ of $\bbar G$ with no passive super vertices from
$\bbar G$, it repeatedly removes edges from a low-conductance cuts,
trimming the sides, until all components of $H$ are
certified clusters. The total time spent is near-linear in the initial
number of edges in $H$.
\end{lemma}
\begin{proof} Let $n_H$ and $m_H$ be the initial number of
vertices and edges in $H$. It follows directly from the description of
Algorithm \ref{alg:find-clusters}  that the only action it performs on $H$ is to remove the
edges of a low-conductance cut and trim the sides.
From Section \ref{sec:representation}, we know that the total
time spent on modifying the representation of $H$ is $O(m_H\log^2 n_H)$.
What has been added is the certified splittability of components. When it gets down
to $s_0$-splittable components, we know they are all clusters by Lemma
\ref{lem:super-splittable}, so this a faithful implementation of
the while-loop from Algorithm \ref{alg:min-cut}.

To amortize the cost of cutting and
trimming $H$ into clusters, we say that {\em an edge $e$ pays\/} in
each of the following events: (1) when $e$ is removed from $H$, (2)
the edge $e$ gets into a component of $H$ of half the volume, or (3) the edge
$e$ gets into a component certified to be only half as splittable. 
The first event happens only once, and the last
two events can happen at most $\lg m_H$ times each, so an edge can pay at
most $2\lg m_H+1$ times. If the payment each time is $\tO(1)$, then the
total time is $\tO(m_H)$, as desired.

Looking at the cost of applying Theorem \ref{thm:recurse}, in
case (i), the edges incident to $A$ in $C$ are now either removed or
in a component of at most half the size, so they can pay $\tO(1)$ each
to cover the $\tO(\volC(A))$ time spent in this case.  Case
(iii) certifies that $C$ is $s/2$-splittable, so the $\tO(m(C))$ time
can be paid by the edges in $C$. Finally, case (ii) certifies that
$B$ is $s/2$-splittable. Before removing any edges from $C$,
we had  $\volC(B)\geq m(C)\geq\volC(A)$, hence at least $m(C)/2$
edges incident to $B$. Each of these edges is either removed as
a cut edge or by trimming, or it ends up in a component that is only
half as splittable, so they can pay $\tO(1)$ each
to cover the $\tO(m(C))$ time spent. Thus we conclude that
the total time spent in Theorem \ref{thm:recurse} is $\tO(m_H)$.

Finally we need to consider the cost of maintaining the certified
splittablity of each trimmed component $C$ encountered in the
while-loop. Recall from Section \ref{sec:representation} that $C$ is
represented as the root node of a balanced binary search tree over all
vertices in $C$. This root also knows the number $m(C)$ of edges
in $C$. We now further store the splittability of $C$ with the root
representing $C$. 

For the while-loop we maintain a ``to-do'' list
with all the roots of components with splittability $s>s_0$. We are
done when this list is empty. 
For each iteration of the while-loop, we take a component $C$ from
the to-do list, copy its splittability $s$, and apply Theorem
\ref{thm:recurse}.  If we get case (iii), we simply set the
splittablity of $C$ to $s/2$ in constant time, removing $C$ from
the to-do list if $s/2\leq s_0$.

In case (i) and (ii) we get the small side $A$ of a low-conductance
cut, remove the cut edges, and trim the sides. This can result in many
new components. To track these, recall from Section
\ref{sec:representation} that we via the dynamic connectivity
algorithm from \cite{HLT01} will find out which of the edges removed
were bridges whose removal split a component. We record all the bridges
removed. When the trimming is completed, for each bridge end-point
$u$, in $O(\log m_H)$ time,  we find its root which represents its new component $D$. In case
(i) we just set the splittability of $D$ to $\min\{s,m(D)\}$. In
case (ii), we first check if $u$ is in $A$. If so, we set 
the splittability of $D$ to $\min\{s,m(D)\}$; otherwise we
set it to $\min\{s/2,m(D)\}$. The root of $D$ is added to the to-do
list if its splittability is above $s_0$. We note
that there
may several bridge end-points in $D$, leading us to the same root, but we only need to do the above certification
once.

The total number of bridges removed in the above process is bounded
by $n_H-1$ and the total number of trimmed components considered is 
bounded by $\bbar 2n-1$, so the total time spent on maintaining
the certifiability of trimmed components is $O(n_H\log n_H)=
O(m_H\log n_H)$. Adding up, we conclude that the total cost
of implementing the while-loop is near-linear in the original number
of edges in $H$.
\end{proof}
Combining Lemma \ref{lem:basic-recurse} and \ref{lem:oracle},
we conclude:
\begin{lemma}\label{lem:recurse-main-tech} Theorem \ref{thm:main-tech} follows from Theorem \ref{thm:recurse}.
\end{lemma}
\begin{proof}
Lemma \ref{lem:oracle} includes the cost of finding low-conductance
cuts ignored in Lemma \ref{lem:basic-recurse}. Assuming Theorem
\ref{thm:recurse} it says that we can implement
the while-loop of Algorithm \ref{alg:min-cut} in near-linear time.

The while-loop is inside an iteration of the repeat-loop in Algorithm
\ref{alg:min-cut} where it is applied to a subgraph $H$ of the
contracted graph $\bbar G$. In the first iteration, $\bbar G$ is
identical to the original simple graph with $m$ edges, and by Lemma
\ref{lem:edges-left}, the edge number of $\bbar G$ is halved
between iterations. Therefore, in the while-loop,
over all iterations of the repeat-loop, the total time spent is near-linear in
$m$. From Lemma \ref{lem:basic-recurse}, we know that everything else
is handled in time $O(m\log^2 n)$, so we conclude that Algorithm
\ref{alg:min-cut} is implemented in near-linear total time.

Lemma  \ref{lem:basic-recurse} further states that the contracted graph $\bbar G$ produced by Algorithm \ref{alg:min-cut} has all the desired properties from Theorem \ref{thm:main-tech}, which hence follows if
we can prove Theorem
\ref{thm:recurse}.
\end{proof}
The rest of this section is devoted to the proof of Theorem
\ref{thm:recurse}.

\subsection{Pushing from a vertex across a small cut---the issue of parallel edges}\label{sec:cross-cut}

We are now going to introduce a basic technical lemma that we shall
use to find low-conductance cuts. It corresponds to  Lemma \ref{lem:spread}
from Section \ref{sec:inside0}, but now we have
to handle active super vertices.
A new issue is that a vertex might now have many
parallel edges to a few neighbors. We cannot handle this situation
in general, but in our case, we will argue that it has to be a regular vertex where
the parallel edges all go to super vertices, and this special structure will
be critical to our solution.
\begin{lemma}\label{lem:cross-cut}
Consider a trimmed component $B$ of $H$, and let $S$
be one side of a cut of $B$ with $O(\delta)$ cut edges.
Start PageRank in $H$ with all mass on a
vertex $v$ in $S$ and push
to the limit. If $v$ is a super vertex, the mass leaving
$S$ is $o(1)$. If $v$ is a regular vertex with a fraction $\eps$ of its
edges leaving $S$,
then the mass leaving $S$ is $\eps+o(1)$.
\end{lemma}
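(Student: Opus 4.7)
I will exploit that the limit settled distribution $p^*=\PR(\alpha_0,v)$ is independent of the order in which pushes are performed, and analyze a convenient two-phase schedule. In Phase~1 I push only from $v$; each push multiplies $r(v)$ by $(1-\alpha_0)/2$, so Phase~1 leaves $r(v)=0$, settled mass $2\alpha_0/(1+\alpha_0)$ at $v$, and residual $r_1(u)=\frac{1-\alpha_0}{1+\alpha_0}\cdot\frac{\mu_{u,v}}{d(v)}$ at every neighbor $u$, where $\mu_{u,v}$ is the multiplicity of the edge $(u,v)$ in $H$. Phase~2 then pushes from every other vertex to the limit and converges to the same $p^*$. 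The mass leaving $S$ equals the Phase~1 mass in $V\setminus S$ plus the net flow from $S$ to $V\setminus S$ during Phase~2. The Phase~1 contribution is $\sum_{u\in V\setminus S\cap N(v)} r_1(u)=\frac{(1-\alpha_0)k_v}{(1+\alpha_0)d(v)}$, where $k_v\leq\delta$ is the number of $v$'s edges leaving $S$. When $v$ is super, $d(v)\geq 2\delta^*/5$ (active and trimmed) yields $O(\alpha_0/\log n)=o(1)$; when $v$ is regular, $k_v/d(v)=\eps$ yields $\eps-o(1)$.

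For Phase~2 I will invoke Lemma~\ref{lem:stationary}: if the residual density at the start of Phase~2 is bounded by $\sigma$, the net flow over any edge during Phase~2 is at most $\sigma/(2\alpha_0)$, and summing over the $\leq\delta$ cut edges the Phase~2 contribution to $V\setminus S$ is at most $\delta\sigma/(2\alpha_0)$ in absolute value. The main obstacle is to bound $\sigma=\max_u r_1(u)/d(u)\leq\max_u\mu_{u,v}/(d(v)d(u))$ uniformly small, because we are now in a multigraph and parallel edges can a priori be concentrated at a single pair. I will split into three cases. If $v$ is super, $d(v)\geq 2\delta^*/5=\Omega(\delta\log n/\alpha_0)$ swamps $\mu_{u,v}\leq\delta$, giving $\sigma=O(\alpha_0/(\delta\log n))$. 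If $v$ is regular and $u$ is super, $d(u)\geq 2\delta^*/5$ together with $\mu_{u,v}\leq d(v)$ gives the same bound. In the remaining case both endpoints are regular, and here the key point is that parallel edges between regular vertices did not exist in $G$ and contractions produce parallels only incident to super vertices, so $\mu_{u,v}=1$; combined with $d(u),d(v)\geq 2\delta/5$ from trimming this yields $\sigma=O(1/\delta^2)$.

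Plugging each $\sigma$-bound into $\delta\sigma/(2\alpha_0)$ and using $\delta\geq\log^6 n$ together with $\alpha_0=\log^{-5}n$ shows $\delta\sigma/(2\alpha_0)=o(1)$ in every case, so the Phase~2 contribution is $o(1)$ in either direction across $\boundary S$. Adding the Phase~1 and Phase~2 contributions gives mass leaving $S$ equal to $o(1)$ when $v$ is super and $\eps+o(1)$ when $v$ is regular, as claimed. The regular/super case analysis for handling parallel edges here is the template I expect to reuse throughout the rest of Section~\ref{sec:min-cut}.
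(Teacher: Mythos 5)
Your proof is correct and follows essentially the same approach as the paper: push all the initial mass from $v$, then bound the maximum residual density and apply Lemma~\ref{lem:stationary} to control the net flow across the at most $\delta$ cut edges, using the same two structural facts (active super vertices have degree at least $2\delta^*/5$ in the trimmed component, and there are no parallel edges between regular vertices). The only difference is organizational: in the regular case the paper decomposes the post-push residual by linearity and reuses the super-vertex analysis for mass landing on super neighbors, whereas you bound the residual density $r_1(u)/d(u)\leq \mu_{u,v}/(d(v)d(u))$ uniformly across all neighbor types in one step, which is a minor streamlining that yields the same $o(1)$ bound.
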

\begin{proof} Suppose first that $v$ is a super vertex. 
Since all super vertices are active, $v$ has at least $\delta^*=(\lg n)\delta/\alpha_0$ incident edges
in $\bbar G$, and $B$ is trimmed, so $v$ has at least $2\delta^*/5$
incident edges in $B$. The cut has $O(\delta)$ edges,
so the fraction of edges from $v$ leaving $S$ is 
$O(\alpha_0/\log n)=o(1)$.

We now first push all the initial mass from $v$. The mass is spread
evenly over its incident edges, so the mass escaping $S$ is
$o(1)$. Moreover, since the maximal number of parallel edges between
any pair of vertices is $\delta$, the maximal residual mass ending at
any vertex is $\delta/(2\delta^*/5)=O(\alpha_0/\log n)$. The minimum
degree in $B$ is $2\delta/5$, so we end up with a maximum
residual density of $O(\alpha_0/(\delta\log n))$.

By Lemma
\ref{lem:stationary}, from this point forward, the net flow 
over any edge is bounded by
$O(\alpha_0/(\delta\log n))/(2\alpha_0))=O(1/(\delta\log n))$, so the net flow over the $O(\delta)$  cut edges is bounded by $O(1/\log n)=o(1)$. Adding
in the $o(1)$ mass leaving $S$ directly at the first push from $v$, we get that the
total mass leaving $S$ is $o(1)$.

We now consider the case where $v$ is a regular vertex and where a
fraction $\eps$ of its incident edges leave $S$. As above, we first
push all the mass from $v$, sending a fraction $\eps$ of the mass out
of $S$.  We will now study what happens with the remaining residual
mass. Recall that the mass pushed from $v$ is distributed
evenly along the edges leaving $v$. We now partition the residual
mass, recalling from \cite{ACL07:pagerank} that pushing mass to the
limit is a non-negative linear transformation. We can therefore study
what happens to different parts separately.

Consider the mass $r$ that the regular vertex $v$ pushed to
its regular neighbors. There are no parallel
edges between regular vertices, and $v$ has degree at least
$2\delta/5$, so the residual mass at any regular neighbor 
is at most $5/(2\delta)$ and residual density at most $25/(4
\delta^2)=O(1/\delta^2)$. By Lemma \ref{lem:stationary}, starting
from $r$, the net flow over any edge 
is 
$O(1/(\alpha_0\delta^2))$, so the mass leaving $S$ over the $O(\delta)$ cut
edges is $O(1/(\alpha_0\delta))=o(1)$.

For each super neighbor $v_i$ of the regular vertex $v$, let $r_i$ be
the residual mass first pushed from $v$ to $v_i$.  If $v_i$ is outside $S$, we
already count $r_i$ as lost from $S$ in the initial push from $v$, so
we can assume that $v_i$ is inside $S$. Our analysis above shows that
when we push mass starting from a super vertex in $S$, then the mass
leaving $S$ is only a fraction $ o(1)$, so in this case
$o(r_i)$. However, $\sum_i r_i<1$, so when we add up the limit
distributions, we conclude that only $o(1)$ mass leaves $S$ after the
initial loss of $\eps$ to the neighbors of $v$ outside $S$.
\end{proof}

\subsection{Starting from a captured vertex}
Consider a vertex $v$ in a trimmed component $C$.  We say $v$ is {\em
  captured\/} if there is a set $S\subseteq V(C)$ with $s_0\leq
\volC(S)\leq m(C)$ and $|\boundC(S)|\leq \delta$ such that $S$
contains $v$ and at least $\frac 34$ of the edges incident to $v$. If
$\volC(S)\leq s$, we further say that $v$ is {\em $s$-captured}.

Finding a low-conductance cut is easy if we can somehow guess
a captured vertex. More precisely, using Lemma \ref{lem:cross-cut} and
Theorem \ref{thm:ACL}, we will prove:
\begin{lemma}\label{lem:s-captured} Given a vertex $v$ in $C$ and 
a parameter $s\in [s_0,m(C)]$, we have an algorithm that, depending on the input, 
will do one of the following:
\begin{itemize}
\item[(i)] Find a set $A\subseteq V(C)$ with $\PhiC(A)=o(\Phi_0)$ and
$\volC(A)\leq m(C)$ in time $\tO(\volC(A))$.
If $s\leq m(C)/32$, we will further have $\volC(A)\leq 16s$ and
$\excessC {p^*_{v,C}} A\geq 1/(32\lg(4s))$. Recall that 
$p^*_{v}$ is the limit distribution when we run PageRank on $C$
starting with all mass on $v$.
\item[(ii)] Certify in $\tO(s)$ time that $v$ is not $s$-captured.
\end{itemize}
\end{lemma}
\begin{proof}
By Lemma \ref{lem:cross-cut},
if we start PageRank with all mass on a vertex $v$ that is $s$-captured, and push
mass to the limit, we know that
$3/4-o(1)$ of the mass will stay in $S$. Since $\volC(S)\leq m(C)$, this
corresponds to
an excess of at least $3/4-o(1)-1/2>1/5$.
Thus, by Theorem \ref{thm:ACL} with $G=C$ and $\Cdiff=1/5$,
we find a cut
with small side $A=T$ and conductance
$\Phi_{C}(A)=O(\sqrt{(\alpha_0\log m)/\Cdiff})=O(1/\log^2 m)=o(\Phi_0)$
in time $\tO(\vol_{C}(T)/\alpha)=\tO(\vol_{C}(A))$.

Now, if $s\leq m(C)/32$ and $\volC(S)\leq s$, the mass $3/4-o(1)$
corresponds to an excess of at least $3/4-o(1)-1/64>1/2$. Thus
we can use $\Cdiff=1/2$ in  Theorem \ref{thm:ACL}, noting
that $s\leq m(C)/32=m(C)\Cdiff/16$. Then
$\volC(A)\leq 16 s$, and $\excessC {p^*_{v,C}} A\geq 1/(32\lg(4s))$.
\end{proof}

\subsection{Starting from set of non-captured vertices}
Next we consider the case where we somehow manage to guess a large
set $X$ of vertices that are not $s$-captured. If
$\vol_C(X)\geq
\min\{16 \delta m(C)/(\alpha_0s),m(C)/4\}$, then the
lemma below states that we have an algorithm providing all
the guarantees required for Theorem \ref{thm:recurse}.
\begin{lemma}\label{lem:not-s-captured} Let $s\in [s_0,m(C)]$
and $C$ be a trimmed component of $H$.
Let $X\subseteq V(C)$ be a set of at vertices, none of
which are $s$-captured in $C$, and with total volume $\vol_C(X)\geq
\min\{16 \delta m(C)/(\alpha_0s),m(C)/4\}$. We have an algorithm that, depending on the input, has one of the 
following outcomes:
\begin{itemize}
\item[(i)] Find a set $A\subseteq V(C)$ with $\volC(A)\leq m(C)$, $\excessC{p^*_{X,C}} A \geq 1/(128\lg (8m))$,  and
$\PhiC(A)=o(\Phi_0)$
in time $\tO(\volC(A))$. Recall 
that 
$p^*_{X,C}$ is the limit distribution when we run PageRank on $C$
starting with all mass spread on $X$ with uniform density $1/\vol_C(X)$.
\item[(ii)] Find a set $A\subseteq V(C)$ with $\volC(A)\leq m(C)$ and
$\PhiC(A)=o(\Phi_0)$ in time $\tO(m(C))$, and certify for the large side $B=V(C)\setminus A$,
that every set $S$ in $C$ with $|\boundC(S)|\leq \delta$ and $\volC(S)\leq s$
has $\volC(S\cap B)\leq s/2$. In particular, if 
the component $C$ was $s$-splittable, then the large side $B$ is certified 
$s/2$-splittable.
\item[(iii)] Certify in $\tO(m(C))$ time that there
is no set $S$ in $C$ with $|\boundC(S)|\leq \delta$ and $s/2<\volC(S)\leq s$.
In particular, if 
the component $C$ was $s$-splittable, then  $C$ is now certified $s/2$-splittable.
\end{itemize}
Above, it is only case (ii) and (iii) that depend on the
assumption that no vertex in $X$ is $s$-captured, that is, if
$X$ does contain some $s$-captured vertex, then the algorithm will still
return in one of three cases where (i) is exactly as stated above
while cases (ii) and (iii) are not to be trusted.
\end{lemma}
\begin{proof}
We are going to start PageRank from the initial
distribution $p^\circ_{X,C}$ with all mass spread on $X$ with
the uniform density $1/\vol_C(X)$.  The limit distribution is denoted
$p^*_{X,C}$. Since $C$ is trimmed, the minimum degree in $C$ is
$2\delta/5$.

Exploiting that no vertex $v\in X$ is $s$-captured, we will argue that
only little mass can end in a set $S\subseteq V(C)$ with
$|\boundC(S)|\leq \delta$ and $\volC(S)\leq s$. We assume for now that
such a set $S$ exists and that $\volC(S)>s/2$, hence that we are not
in case (iii).

First we bound the volume of the
vertices from $X$ in $S$. Consider a vertex $v\in X\cap S$.
Since $v$ is not $s$-captured in $C$, it
has at least $1/4$ of its  edges in $C$ leaving $S$, so we conclude that
$\vol_C(X\cap S)\leq 4|\boundC(S)|$. It follows that the total initial
mass in $S$ is $p^\circ_{X,C}(S)\leq 4|\boundC(S)|/\vol_C(X)$.

The maximal initial density on all vertices is 
$1/\vol_C(X)$, so by
Lemma \ref{lem:stationary}, the
net flow over any edge is at most $1/(2\vol_C(X)\alpha_0)$. Hence the total
net flow into $S$ 
at most $|\boundC(S)|/(2\vol_C(X)\alpha_0)$. The final limit mass on $S$
is thus 
\begin{align}\label{eq:limit-in-S}
p^*_{X,C}(S)&\leq (4+1/(2\alpha_0))|\boundC(S)|/\vol_C(X)\\
&<|\boundC(S)|/(\vol_C(X)\alpha_0)<
\delta/(\vol_C(X)\alpha_0).\nonumber 
\end{align}
The second inequality above uses that $\alpha_0=o(1)<1/8$. 
We
will now argue that 
\begin{equation}\label{eq:delta-bound}
\delta/(\vol_C(X)\alpha_0)\leq s/(16m(C)).
\end{equation}
This is trivially true if $\vol_C(X)\geq 16 \delta m(C)/(\alpha_0s)$.
It is also true if $\vol_C(X)\geq m(C)/4$ because $s\geq
s_0=64\delta/\alpha_0$.  The lemma assumes that $\vol_C(X)\geq
\min\{16 \delta m(C)/(\alpha_0s),m(C)/4\}$, so \req{eq:delta-bound} follows.
We conclude that
\[p^*_{X,C}(S)\leq s/(16m(C)).\]
Since $\volC(S)>s/2$, this means that vertices $u\in S$
with limit density $p^*_{X,C}(u)/d(u)\leq 1/(4m(C))$ represent more than
half the volume of $S$.

We now apply Theorem \ref{thm:endgame} with $\Cdiff=1/2$. We
get a set $A=T$ with $\volC(A)\leq m(C)$ and
$\PhiC(A)=O(\sqrt{(\alpha_0\log m)/\Cdiff)})=O(1/(\log m)^2)= o(\Phi_0)$. If we
end in case (i) of Theorem \ref{thm:endgame}, the set $A$ is found quickly
in time $\tO(\volC(A)/(\Cdiff\alpha_0))=\tO(\volC(A))$ and then $\excessC {p^*_{X,C}} A\geq \Cdiff/(64\lg (8m(C)))=1/(128\lg (8m(C)))$ as claimed in case (i) of the lemma.

If we end in case (ii) of Theorem \ref{thm:endgame}, the set $A$ is
found in time $\tO(m(C)/(\Cdiff\alpha_0))=\tO(m(C))$ with the guarantee that $A$
contains all vertices with $p^*_{X,C}(u)/d(u)\leq 1/(4m(C))$, which implies
that $\volC(A\cap S)>\volC(S)/2$. With $B=V(C)\setminus A$, this gives
$\volC(B\cap S)\leq\volC(S)/2$, and this holds for any set $S$ with
$|\boundC(S)|\leq \delta$ and $s/2<\volC(S)\leq s$, as required for case (ii) of the
lemma.

If we end in case (iii) of Theorem \ref{thm:endgame}, we know that
there is no vertex $u$ with $p^*_{X,C}(u)/d(u)\leq 1/(4m(C))$, but
then we conclude, by contradiction, that there was no set $S$ with $|\boundC(S)|\leq
\delta$ and $s/2<\volC(S)\leq s$, as required for case (iii) of the
lemma.

To finish the proof of case (ii) and (iii), suppose the component $C$ is
certified $s$-splittable and consider any cut of $\bbar G$ with at most $\delta$ cut edges. Let
$T$ be the side minimizing $\volC(C\cap T)$, and set $S=C\cap T$.  Since
$C$ is $s$-splittable, we know that $\volC(S)\leq s$. Moreover,
$|\boundC(S)|\leq|\boundary_{\bbar G}(T)|\leq \delta$.

In case (ii), the algorithm certifies that
$\vol_C(S\cap B)\leq s/2$. We also have 
$\vol_B(T\cap B)=\vol_B(S\cap B)\leq \vol_C(S\cap B)$. Therefore
$\vol_B(T\cap B)\leq s/2$, so we conclude that $B$ is $s/2$-splittable.

In case (iii), the algorithm certifies that we
cannot have $s/2<\volC(S)\leq s$, but $\volC(S)\leq s$,
so we conclude that 
$\volC(C\cap T)=\volC(S)\leq s/2$, implying that $C$ is $s/2$-splittable.

Above we assumed that no vertex from $X$ was $s$-captured, but
even if this is not the case, we only return case (i) if
we get it from case (i) of Theorem \ref{thm:endgame} whose properties
do not depend on any assumptions.
\end{proof}

\subsection{Proof of Theorem \ref{thm:recurse} for highly splittable
components}\label{sec:large} 

We will now prove Theorem \ref{thm:recurse}
for an $s$-splittable component $C$ where $s=\tOmega(m(C))$, that is, 
\begin{quote}{\em Let
$s\in [s_0,m(C)]$ where $s=\tOmega(m(C))$, and let $C$ be an
$s$-splittable trimmed component of $H$.  We have an algorithm that,
depending on the input, will do one of the following:
\begin{itemize}
\item[(i)] Find a set $A\subseteq V(C)$ with $\volC(A)\leq m(C)$ and $\PhiC(A)
=o(\Phi_0)$
 in time $\tO(\volC(A))$.
\item[(ii)] Find a set $A\subseteq V(C)$ with $\volC(A)\leq m(C)$ and
$\PhiC(A)=o(\Phi_0)$ in time $\tO(m(C))$ certifying that the large side
$B=V(C)\setminus A$ is $s/2$-splittable.
\item[(iii)] Certifying in $\tO(m(C))$ time that $C$ is $s/2$-splittable.
\end{itemize}}
\end{quote}
\begin{proof}[ of Theorem \ref{thm:recurse}
when \boldmath{$s=\tOmega(m(C))$}] Our algorithm
picks an arbitrary set $X\subseteq V(C)$
with $\ceil{40 m(C)/(s\alpha_0)}=\tO(1)$ vertices. Since $C$ is
trimmed, the minimum degree in $C$ is at least $2\delta/5$, so
$\vol_C(X)\geq 16 \delta m(C)/(s\alpha_0)$.
Now, in parallel alternation,
we run Lemma \ref{lem:s-captured} on every vertex $v\in X$, and we run
Lemma \ref{lem:not-s-captured} on the set  $X$. We terminate
if someone finds a set $A$ with $\PhiC(A)=o(\Phi_0)$ corresponding to
case (i) in Lemma \ref{lem:s-captured} or
(i) in Lemma \ref{lem:not-s-captured}, calling this early termination; otherwise
we continue until all processes have terminated.

In the early termination case, since we run only $\tO(1)$ processes
in parallel, the total running time is $\tO(\volC(A))$, so we match
Theorem \ref{thm:recurse} (i).

If no process reaches case (i), the total running time is
$\tO(m(C))$. We get from Lemma~\ref{lem:s-captured}
(ii) that no vertex $v\in X$ is $s$-captured, which means that we can
trust the certifications in case (ii) and (iii) of Lemma
\ref{lem:not-s-captured}, and they match case (ii) and (iii) of
Theorem \ref{thm:recurse}.
\end{proof}

\subsection{Proof of Theorem \ref{thm:recurse} for less splittable
components}\label{sec:small} 

We will now prove Theorem \ref{thm:recurse} for less splittable
components than those handled above in Section \ref{sec:large}. Our
new proof will work for a trimmed component $C$
that is $s$-splittable for any $s\in[s_0,m(C)/32]$. This
case is far more complicated, and requires several new lemmas.

First let us see what goes right and wrong if we try to do the same
as we did with the highly spilttable components.
The algorithm  would still be correct,
but now we have no good bound on the size of the set $X$. This means
that the multiplicative slowdown from running $|X|$ process is not
bounded.

It is instructive to note that if we apply Lemma \ref{lem:s-captured}
to all $v\in X$, then the total running time is $\tO(m(C))$, for the
lemma spends $\tO(s)$ time on each of the $|X|=\ceil{40 m(C)/(s\alpha_0)}$ vertices.
However, we cannot afford to spend this much time if for some $v\in
X$, we end in case (i) with no certification but a low-conductance cut
around a very small side. Our idea to circumvent the problem is to
exploit that case (i) implies a lower bound on the excess, both in
Lemma \ref{lem:s-captured} and in Lemma \ref{lem:not-s-captured}, and
we want to detect this efficiently in advance. This is
the most tricky part of our algorithm, and the motivation for
including excess guarantees on the low-conductance
set $A$ found in Theorem \ref{thm:ACL} and Theorem
\ref{thm:endgame}. The following two lemmas address the issue. The
reader who wants to fully understand the motivation for these lemmas
may want to skip to Theorem \ref{thm:recurse} and see how they are
used in its proof.
The first lemma is about identifying a large set of non-$s$-captured
vertices in a trimmed component $C$, but not blindly running Lemma
\ref{lem:s-captured} from each vertex. If a low conductance cut is found,
we will only use time near-linear in the volume of the smaller side.
\begin{lemma}\label{lem:many-non-captured} 
For $s\in [s_0,m(C)/32]$, let $Y$ be a set
of at most $m(C)/(1024 s\lg(4s))$ vertices from a trimmed component $C$. We have an algorithm that, depending
on the input,  will do
one of the following:
\begin{itemize}
\item[(i)] Find a set $A\subseteq V(C)$ with $\volC(A)\leq m(C)$ and $\PhiC(A)=o(\Phi_0)$ in time $\tO(\volC(A))$.
\item[(ii)] Identify a subset $X\subseteq Y$, $\vol_C(X)\geq\vol_C(Y)/2$ in
  $\tO(m(C))$ time, certifying that no vertex in $X$ is $s$-captured
  in $C$.
\end{itemize}
\end{lemma}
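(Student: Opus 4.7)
The plan is to run a single aggregated approximate PageRank starting from the uniform distribution on $Y$ (each vertex getting mass $1/|Y|$), and then apply Theorem~\ref{thm:ACL} part~2 with $\Cdiff = 1/4$ and volume bound $s' = |Y|s$. First I check the precondition $s' \leq m(C)\Cdiff/8 = m(C)/32$: since $|Y|s \leq m(C)/(512\lg(4s))$ and $\lg(4s)$ is large for $s \geq s_0$, this holds comfortably. If the theorem returns a cut $T$, then $\volC(T) \leq 8s'/\Cdiff = 32|Y|s \leq m(C)$, $\PhiC(T) = O(\sqrt{\alpha_0\log m}) \leq \Phi_0$, and the running time is $\tO(\min\{s', \volC(T)\log m\}) = \tO(\volC(T))$; we set $A = T$ and return case~(i) of the lemma.

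The structural heart of the argument is that Theorem~\ref{thm:ACL}'s alternative report---no set of volume $\leq s'$ with excess $\geq 1/4$---certifies that at most $|Y|/2$ vertices of $Y$ are $s$-captured. For the contrapositive, suppose $Y' \subseteq Y$ is the set of captured vertices with $|Y'| \geq |Y|/2$, and for each $v \in Y'$ let $S_v$ be its capturing set, so $\volC(S_v) \leq s$ and $v$ has $\geq 3/4$ of its edges in $S_v$. By Lemma~\ref{lem:cross-cut} applied to single-vertex PageRank from $v$ with initial mass $1$, at least $3/4 - o(1)$ of the mass settles in $S_v$ in the limit. Exploiting linearity of approximate PageRank, the combined limit $p^* = \PRC(\alpha_0, Y)$ satisfies $p^*\bigl(\bigcup_{v \in Y'} S_v\bigr) \geq (|Y'|/|Y|)(3/4 - o(1)) \geq 3/8 - o(1)$. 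The volume of this union is at most $|Y|s = o(m(C))$, so its excess exceeds $1/4$, forcing the cut-found branch of Theorem~\ref{thm:ACL}.

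In the remaining ``no excess'' case we conclude that at least $|Y|/2$ vertices of $Y$ are non-captured. To exhibit an explicit such $X$, I would then invoke Lemma~\ref{lem:s-captured} individually on each $v \in Y$ sequentially, putting into $X$ every $v$ that returns case~(ii). Each invocation costs $\tO(s)$ time, and the total cost is $\tO(|Y|s) = \tO(m(C)/\lg(4s)) = \tO(m(C))$, matching the target for case~(ii) of the present lemma; since at most $|Y|/2$ of $Y$ can be captured, we are guaranteed $|X| \geq |Y|/2$.

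The main obstacle is what happens if some individual invocation of Lemma~\ref{lem:s-captured} returns case~(i) with a cut $A_v$ of small volume: the present lemma requires case~(i) to finish in $\tO(\volC(A))$ time, but the $\tO(|Y|s)$ already spent on earlier sequential invocations can far exceed this. Resolving this requires interleaving the aggregated PageRank computation with the individual ones, e.g.\ via a doubling schedule whereby any individual cut found is reported only once $\volC(A_v)$ is large enough to absorb the prior work, and otherwise the cost gets charged to an aggregated cut certified by Theorem~\ref{thm:ACL} whose own volume matches the elapsed time. Setting up this amortized bookkeeping---so that case~(i) is always returned with a cut whose volume dominates the actual running time---is the delicate part of the proof.
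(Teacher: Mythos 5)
Your overall strategy --- an aggregated PageRank from $Y$ to decide between case~(i) and case~(ii), then individual applications of Lemma~\ref{lem:s-captured} to produce an explicit $X$ --- is the same as the paper's. However, there is a genuine gap in your contrapositive argument, and the ``main obstacle'' you raise at the end is a symptom of that gap rather than an independent problem.

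You run the aggregated PageRank and argue: if at least $|Y|/2$ vertices of $Y$ are $s$-captured with capturing sets $S_v$, then $\bigcup_v S_v$ has excess $\geq 1/4$, so Theorem~\ref{thm:ACL} with $\Cdiff=1/4$ must find a cut. Hence, in the no-cut branch, ``at most $|Y|/2$ of $Y$ can be captured.'' But this is strictly weaker than what you need. The set $X$ you output is defined by \emph{algorithmic outcome}: it consists of those $v$ for which Lemma~\ref{lem:s-captured} returns case~(ii). Lemma~\ref{lem:s-captured} gives no if-and-only-if: a vertex that is \emph{not} $s$-captured may still trigger its case~(i) (PageRank from $v$ may happen to reveal a low-conductance set even though no small capturing set contains $v$). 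So ``at most $|Y|/2$ are captured'' does not imply ``at least $|Y|/2$ return case~(ii)''; your final claim $|X|\geq|Y|/2$ does not follow. The obstacle you then agonize over --- an individual invocation finding a small cut after much time has elapsed --- only looks threatening because you cannot rule out that a majority of invocations end in case~(i).

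The paper closes this gap by running the contrapositive at the level of algorithm behavior, not of captured-ness. Lemma~\ref{lem:s-captured} guarantees that whenever it ends in case~(i), the returned set $T_v$ has $\volC(T_v)\leq 16s$ \emph{and} $\excessC{\PRC(\{v\})}{T_v}\geq 1/(16\lg(4s))$. So if you hypothetically ran the individual checks and at least $|Y|/2$ vertices returned case~(i), then by linearity the aggregated limit distribution from $Y$ would put mass at least $\tfrac12\cdot 1/(16\lg(4s))$ on $\bigcup_{v\in Y\setminus X}T_v$, a set of volume at most $16s|Y|$; with $|Y|\leq m(C)/(512s\lg(4s))$ this gives excess $\gamma'\geq 1/(64\lg(4s))$. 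Running Theorem~\ref{thm:ACL} with this (much smaller) parameter $\gamma'$ therefore either finds a cut (case~(i) of the present lemma, in time $\tO(\volC(A))$) or certifies that fewer than $|Y|/2$ vertices will return case~(i) in Lemma~\ref{lem:s-captured}. In the latter situation you run the individual checks in total time $\tO(|Y|s)=\tO(m(C))$, discard any small cuts an individual check happens to produce, and put into $X$ precisely the vertices returning case~(ii); the certification guarantees $|X|\geq|Y|/2$. No doubling schedule or amortized bookkeeping is needed: once the aggregated test has certified no excess, case~(ii) is committed and the individual runs never have to be charged to a small cut.

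Two smaller notes. First, your $\Cdiff=1/4$ is too large: it protects you against a constant fraction of $Y$ being genuinely captured, but not against a constant fraction of $Y$ triggering case~(i) with excess only $\Theta(1/\lg(4s))$; you must lower $\Cdiff$ to $\Theta(1/\lg(4s))$ to cover the relevant event. Second, your calculation of the excess of $\bigcup S_v$ is fine as far as it goes --- it does show ``at most half captured'' --- the issue is purely that this is the wrong quantity to control.
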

\begin{proof}
First we consider an optimimistic algorithm that in $\tO(m(C))$ time 
identifies a set $X\subseteq Y$ with no $s$-captured vertices. This
is, in itself trivial, since $X=\emptyset$ would do. However, the
optimistic algorithm applies Lemma \ref{lem:s-captured} to each $v\in Y$ in
$\tO(s)$ time. Some vertices will be certified as not $s$-captured in
Lemma \ref{lem:s-captured} (ii),
and they are the ones we place in $X$. The total time we
spend is $\tO(|Y|s)=\tO(m(C))$, so if $X$ ends up
with at least half the vertices from $Y$, then we are done as in case (ii).

The bad case for the optimistic algorithm is if we end up with
$\vol_C(X)<\vol_C(Y)/2$. We will now study the bad case, finding a way to
detect it without having to run the optimistic algorithm. Thus, below we
pretend we have run the optimistic algorihm ending in the bad case
with $\vol_C(X)<\vol_C(Y)/2$. For every
$v\in Y\setminus X$, when running PageRank from $v$ with Lemma
\ref{lem:s-captured}, we get Lemma \ref{lem:s-captured} (i) with a low conductance cut where the small
side $T_v$ has $\volC(T_v)\leq 16s$ and a limit excess above $1/(32\lg
(4s))$. We get the excess guarantee from Lemma \ref{lem:s-captured} (i) 
because $s\leq m(C)/32$ and it implies that $p^*_{v,C}(T_v)> 1/(32\lg
(4s))$. Let 
$S'=\bigcup_{v\in Y\setminus X} T_v$. Recall that pushing to the limit
is a non-negative linear transformation. This means if
we run PageRank from {\em any\/} distribution on $Y\setminus X$, 
then the limit mass on $S'$ is above $1/(32\lg
(4s))$. 

What happens in the real algorithm behind Lemma \ref{lem:many-non-captured} is that we first
run PageRank starting from the 
distribution $p^\circ_{Y,C}$ on $Y$ with uniform density $1/\vol_C(Y)$.  Assuming we are in the bad case
with $\vol_C(X)<\vol_C(Y)/2$, we have $p^\circ_{Y,C}(Y\setminus X)\geq 1/2$. It
follows that when we push to the limit, we
end up with mass $p^*_{Y,C}(S')> 1/(64\lg (4s))$.  We also have
$\volC(S')\leq 16s|Y|$. This means that $S'$ gets excess at least
$\Cdiff'=1/(64\lg (4s))-16s|Y|/(2 m(C))$.  However, we have $|Y|\leq
m(C)/(1024 s\lg(4s))$, and hence $\Cdiff'\geq 1/(128\lg (4s))$.  Thus,
if we are in the bad case for the optimistic algorithm, then we know
that $\excessC {p^*_{Y,C}} {S'}\geq \Cdiff'$. Applying 
Theorem \ref{thm:ACL} to $p^\circ_{Y,C}$ with excess parameter
$\Cdiff'$, we get one of two outcomes:
\begin{itemize}
\item We find a set $A$ with $\vol(A)\leq m(C)$ and $\Phi_C(A)\leq
  O(\sqrt{(\alpha_0\log m)/\Cdiff'})=O(1/(\log m))^{3/2})=o(\Phi_0)$ in
  time $\tO(\volC(A))$, satisfying Lemma \ref{lem:many-non-captured}
  (i).
\item In $\tO(m(C))$ time, we certify that there is no set $S'$ with
  excess $\Cdiff'$.  We now run the above optimistic
  algorithm with no risk of ending in the bad case, and then we
  satisfy Lemma \ref{lem:many-non-captured} (ii).
\end{itemize}
\end{proof}
The next lemma will be used to certify that if we try to
run Lemma \ref{lem:not-s-captured} on any set $X$ with at least half the
vertices from a given set $Y$ and we end with the low-conductance
cut of Lemma \ref{lem:not-s-captured} (i), then the smaller side
has volume  $\tOmega(m(C))$.  The new lemma may itself
yeild a low-conductance cut, but the difference is that the new lemma
does not make any assumptionas about vertices being $s$-captured as
required for Lemma \ref{lem:not-s-captured}.
\begin{lemma}\label{lem:half-concentration} Let $Y$ be any set of vertices from a trimmed component
$C$. We have an algorithm that, depending
on the input, will do
one of the following:
\begin{itemize}
\item[(i)] Find a set $A\subseteq V(C)$ with $\volC(A)\leq m(C)$ and
  $\PhiC(A)=o(\Phi_0)$ in time $\tO(\volC(A))$.
\item[(ii)] Certify in $\tO(m(C))$ time that
there is not a subset $X\subseteq Y$ with $\vol_C(X)\geq \vol_C(Y)/2$ and a set $A\subseteq V(C)$ with $\volC(A)\leq m(C)/(256\lg (8m))$ such that
  $\excessC{p^*_{X,C}} A \geq 1/(128\lg (8m))$. Recall 
that 
$p^*_{X,C}$ is the limit distribution when we run PageRank on $C$
starting with all mass uniformly spread on $X$.
\end{itemize}
\end{lemma}
\begin{proof}
Let us assume that there is a subset
$X\subseteq Y$ with $\vol_C(X)\geq \vol_C(Y)/2$ and set $A\subseteq V(C)$ with
$\volC(A)\leq m(C)/(256\lg (8m))$ such that 
  $\excessC{p^*_{X,C}} A \geq 1/(128\lg (8m))$.  Then we
have to end in case (i).

We are going to start PageRank with the distribution $p^\circ_{Y,C}$
on $Y$ with uniform density $1/\vol_C(Y)$. The limit distribution is
$p^*_{Y,C}$. We have $p^\circ_{X,C}$ and $p^*_{X,C}$ denoting the
corresponding distributions if we instead started with uniform
density $1/\vol_C(X)$ on $X$.

Since $X\subseteq Y$ and $\vol_C(X)\geq \vol_C(Y)/2$, we
get $p^\circ_{Y,C}(v)\geq p^\circ_{X,C}(v)/2$ for every vertex $v\in C$.
Since pushing to the limit is a non-negative linear transformation, we
conclude that that we also in the limit get $p_{Y,C}^*(v)\geq
p_{X,C}^*(v)/2$ for every vertex $v\in C$. In particular, we get that
\[p_{Y,C}^*(A)\geq p_{X,C}^*(A)/2>\excessC{p^*_{X,C}} A/2 \geq 1/(256\lg (8m)).\]
Therefore
\[\excessC{p^*_{Y,C}} A=p_{Y,C}^*(A)-\volC(A)/(2m(C))>1/(512\lg (8m)).\]
Thus, starting PageRank from $p^\circ_{Y,C}$ using Theorem \ref{thm:ACL} 
with $\Cdiff=1/(512\lg (8m))$, we will get a set
$A'$ for case (i) with $\volC(A')\leq m(C)$ and $\Phi_C(A')\leq O(\sqrt{(\alpha_0\log m)/\Cdiff})=O(1/(\log m)^{3/2})=o(\Phi_0)$ in time $\tO(\volC(A'))$. We then return $A'$ as in case (i).

If no such $A'$ is found, we terminate in
$\tO(m(C)/(\alpha_0\Cdiff))=\tO(m(C))$, certifying that our assumptions
were false as in case (ii).
\end{proof}

We are now ready to complete the proof of Theorem \ref{thm:recurse}:
\begin{quote}\em

 Let $s\in [s_0,m(C)]$ and $C$ be an $s$-splittable trimmed component of $H$.
We have an algorithm that, depending
on the input, will do one of the following:
\begin{itemize}
\item[(i)] Find a set $A\subseteq V(C)$ with $\volC(A)\leq m(C)$ and $\PhiC(A)
=o(\Phi_0)$
 in time $\tO(\volC(A))$.
\item[(ii)] Find a set $A\subseteq V(C)$ with $\volC(A)\leq m(C)$ and
$\PhiC(A)=o(\Phi_0)$ in time $\tO(m(C))$ certifying that the large side
$B=V(C)\setminus A$ is $s/2$-splittable.
\item[(iii)] Certifying in $\tO(m(C))$ time that $C$ is $s/2$-splittable.
\end{itemize}
\end{quote}
\begin{proof}[ of Theorem \ref{thm:recurse}]
We have already handled the case where $s=\tOmega(m(C))$ in Section
\ref{sec:large}, so we may assume that $s\leq m(C)/32$, as required
for Lemma \ref{lem:many-non-captured}.

First we assume that $C$ has at least $\ceil{80 m(C)/(s\alpha_0)}$
vertices, and then we let $Y$ be any set of $\ceil{80
  m(C)/(s\alpha_0)}$ from $C$.  Since $C$ is trimmed, the minimum
degree in $C$ is at least $2\delta/5$, so $\vol_C(Y)\geq 32 \delta
m(C)/(s\alpha_0)$. Next we cut $Y$ into $\ceil{80\cdot 1024 \lg
  (4s)/\alpha_0}=\tO(1)$ segments $Y_i$, each with at most $m(C)/(1024
s\lg(4s))$ vertices.

Recall from Section \ref{sec:representation} that we have
a balanced binary search tree over the vertex list from $C$, where each tree node
knows the number of descendants. Using this tree, we can cut into
segments of any desired size in $O(\log n)$ time per segment, so the the
total time spent so far is $\tO(1)$.

We will then, alternating in parallel,
apply Lemma~\ref{lem:many-non-captured} to every $Y_i$
while, also in parallel, applying Lemma \ref{lem:half-concentration} to $Y$.
If any one of these ends in case (i), then this corresponds to case (i)
of the theorem. The multiplicative $\tO(1)$ slowdown does not affect the
time bound. Thus we are done if we get Lemma~\ref{lem:many-non-captured} (i) for some $Y_i$ or Lemma \ref{lem:half-concentration} (i) for $Y=\bigcup_i Y_i$.

Assume instead that we get no case (i). Then for each $Y_i$, by
Lemma~\ref{lem:many-non-captured} (ii), we find a subset $X_i\subseteq
Y_i$ with at least half the volume, that is, $\vol_C(X_i)\geq \vol_C(Y_i)/2$,
and such that no vertex in $X_i$
is $s$-captured.  Then no vertex in $X=\bigcup_i X_i$ is $s$-captured,
and $\vol_C(X)\geq \vol_C(Y)/2$. Now by Lemma
\ref{lem:half-concentration} (ii), we know that there is no set
$A\subseteq V(C)$ with $\volC(A)\leq m(C)/(256\lg (8m))$ such that 
$\excessC{p_{X,C}^*} A \geq 1/(128\lg (8m))$.
We have spent $\tO(m(C))$ time so far. We know that $\vol_C(X)\geq \vol_C(Y)/2\geq
16 m(C)/(s\alpha_0)$ and that no vertex from $X$ is $s$-captured.

We now apply Lemma \ref{lem:not-s-captured} to $X$. 
If we get case (ii) or (iii) of Lemma~\ref{lem:not-s-captured}, then
they directly gives us case (ii) or (iii) of the theorem, so we are done.

Suppose instead  we get
Lemma \ref{lem:not-s-captured} (i). Then we find a set $A\subseteq
V(C)$ with $\volC(A)\leq m(C)$, $\excessC{p^*_{X,C}} A \geq 1/(128\lg
(8m))$, and $\PhiC(A)=o(\Phi_0)$ in time $\tO(\volC(A))$.  This would
be good enough for case (i) of the theorem, except that the total time
we have spent finding $A$ is $\tO(m(C))$, and we are only allowed
time $\tO(\volC(A))$.

We now combine with the above conclusion from Lemma
\ref{lem:half-concentration} (ii) which says that there is no set
$A\subseteq V(C)$ with $\volC(A)\leq m(C)/(256\lg (8m))$ and
$\excessC{p^*_{X,C}} A \geq 1/(128\lg (8m))$. We conclude that the set $A$
found by Lemma \ref{lem:not-s-captured} (i) has $\volC(A)>
m(C)/(256\lg (8m))=\tOmega(m(C))$, which means that the
total time we have spent is $\tO(m(C))=\tO(\volC(A))$, as required for
case (i) of the theorem. This completes the proof of 
Theorem \ref{thm:recurse} assuming that $C$ has at
least $\ceil{80 m(C)/(s\alpha_0)}$ vertices.

We will now complete the proof of Theorem \ref{thm:recurse} by handling
the case where $C$ has less than $80 m(C)/(s\alpha_0)$ vertices. The
proof is very similar to the case where $C$ had more
vertices, but this time, we let $Y$ consist of all the vertices from
$C$, that is, $Y=V(C)$.

Next, as above, we partition $Y$ into $\tO(1)$ sets $Y_i$, each with at most
$m(C)/(256 s\lg(4s))$ vertices. If $Y$ is small, we get fewer sets
$Y_i$, which is only an advantage.

We apply Lemma~\ref{lem:many-non-captured} in parallel to every
$Y_i$. If we get Lemma~\ref{lem:many-non-captured} (i) for any of $Y_i$, then this
corresponds to Theorem \ref{thm:recurse} (i), and we are done.

Thus, suppose we for every $Y_i$ get 
Lemma~\ref{lem:many-non-captured} (ii) with a subset $X_i$ with no
$s$-captured vertices and with 
$\volC(X_i)\geq \volC(Y_i)/2$. We consider now the set $X=\bigcup_i X_i$.
It has $\volC(X)\geq\volC(Y)/2=m(C)$, and there is no $s$-captured vertex
in $X$. 

Since $\volC(X)>m(C)/4$, we can now apply Lemma \ref{lem:not-s-captured} to $X$.
If we get case (ii) or (iii) of Lemma~\ref{lem:not-s-captured}, then
they directly gives us case (ii) or (iii) of the theorem, so we are done.

Suppose instead  we get
Lemma \ref{lem:not-s-captured} (i). Then we find a set $A\subseteq
V(C)$ with $\volC(A)\leq m(C)$, $\excessC{p^*_{X,C}} A \geq 1/(128\lg
(8m))$, and $\PhiC(A)=o(\Phi_0)$ in time $\tO(\volC(A))$.  This would
be good enough for Theorem \ref{thm:recurse} (i), except that the total time
we have spent finding $A$ is $\tO(m(C))$, and we are only allowed
time $\tO(\volC(A))$.

Like in the case when $C$ had more vertices, we want to argue that
$\volC(A)=\tOmega(m(C))$. This time we will not use Lemma
\ref{lem:half-concentration} (ii), but instead apply a direct
argument. Let $p^\circ_{X,C}$ be the uniform density distribution
on $X$. We have $\volC(X)\geq m(C)$, so $p^\circ_X$ is dominated by
the stationary distribution $\overline{1/m(C)}$, and hence so is
the limit distribution $p^*_{X,C}=\PRC(\alpha_0,p^\circ_{X,C})$.  This
means that any set $A'$ has $\excessC {p^*_{X,C}} A'\leq
\volC(A')(1/m(C)-1/(2m(C)))\leq \volC(A')/(2m(C))$. But the set $A$
from Lemma \ref{lem:not-s-captured} (i) had $\excessC{p^*_{X,C}}
A \geq 1/(128\lg(8m))$, so we conclude that $\volC(A)\geq
m(C)/(64\lg(8m)) =\tOmega(m(C))$. The set $A$ from Lemma
\ref{lem:not-s-captured} (i) is thus found within the $\tO(\volC(A))$
time required for Theorem \ref{thm:recurse} (i). This completes the proof of Theorem \ref{thm:recurse}.
\end{proof}

\begin{proof}[ of  Theorem \ref{thm:main-tech}]
We have now completed the proof of Theorem \ref{thm:recurse}, and by 
Lemma \ref{lem:recurse-main-tech}, this means that we are also done
with the proof of Theorem \ref{thm:main-tech}.
\end{proof}

\subsection{Log-factors} In this paper, we have not worried about
the number of log-factors in our near-linear time bound for solving
the min-cut problem, nor have we accounted for them. More precisely,
as described in Section \ref{sec:notation}, we have freely used
simplifications like $\tO(\tO(f(n)))=\tO(f(n))$ and
$\tO(f(n))\tO(g(n)))=\tO(f(n)g(n))$. For a proper accounting, we 
would have to undo these simplifications. 
Below we will sketch that 12 log-factors suffices. The purpose is
not to prove this, but rather to set up an estimated benchmark that other 
other researchers can improve on.

Currently, we have $\alpha_0=1/(\log m)^5$, but in fact it
suffices with $\alpha_0=1/(c_0(\log m)^4)$ for some sufficiently large constant
$c_0$. The place that puts the
biggest demand on $\alpha_0$ is in the end of the proof of
Lemmas \ref{lem:many-non-captured} and \ref{lem:half-concentration} where we
need that $\Phi_C(A)=O(\sqrt{(\alpha_0\log m)/\Cdiff})=
O\left(\sqrt{\alpha_0\log^2 m}\right)\leq\Phi_0=1/(20\lg m)$.
By definition of the $O$-notation, there exists a large enough
constant $c_0$ such that $\alpha_0=1/(c_0(\log m)^4)$
yields $\Phi_C(A)\leq 1/(20\lg m)$.

We can also reduce the requirement on $\delta$ to $\delta\geq c_1/\alpha_0$ and
set $\delta^*=c_1\delta/\alpha_0$ for some sufficiently large constant
$c_1$. The critical place is Lemma \ref{lem:cross-cut} which currently says that if we start the PageRank algorithm
from a vertex with a fraction $\eps$ of its edges leaving a certain set
$S$, then in the limit, the mass leaving $S$ is only $\eps+o(1)$.
Lemma  \ref{lem:cross-cut} was proved for $O(\delta)$ cut edges, but we
never need it for more than $2\delta$ cut edges. With this
concrete bound on the cut size, if we  parameterize by $c_1$ and change the
proof of Lemma \ref{lem:cross-cut} accordingly,
the mass leaving $S$ is at most $\eps+O(1/c_1)$.
When we later apply Lemma \ref{lem:cross-cut} to
the proof of Lemma \ref{lem:s-captured}, what we need is
that $3/4-O(1/c_1)-1/2> 1/5$, which is true for some sufficiently large
constant $c_1$.

The conclusion is that we can run our algorithm with parameters
$\alpha_0=O(\log^4 m)$ and $\delta^*=O(\delta\log^4 m)$. For Lemma \ref{lem:passive}, this implies that the number of
edges leaving passive super vertices is $O(m\delta^*/\delta^2)=O(m(\log^4
m)/\delta)$, which then also bounds the number of edges in $\bbar G$.

The bottleneck in time originates from Lemma
\ref{lem:many-non-captured} (i), where the set $A$ is really found in
time $O(\volC(A)(\log m)/(\Cdiff\alpha_0))=O(\volC(A)\log^6 m)$
time. In the proof of Theorem \ref{thm:recurse}, we run $O((\log
s)/\alpha_0)=O(\log^5 m)$ such experiments from Lemma
\ref{lem:many-non-captured} in parallel, so the cost is $O(\log^{11}
m)$ per edge, and the same edge may get charged $2\lg m$ times as it
ends up in components of half the size or half as splittable. Thus a total cost of $O(\log^{12} m)$ per edge is needed
in order to find the clusters. When we afterwards contract the cores,
we halve the number of edges, so it is the cost of the first cluster
finding round that dominates. Our total cost for finding $\bbar G$ is
$O(m \log^{12} m)$.  Since $\bbar G$ has only $O((m \log^4 m)/\delta)$
edges, using Gabow's algorithm, we can now find a minimum cut in
$O(m \log^5 m)$ time. Our overall time bound for finding the
minimum cut is thus $O(m \log^{12} m)=O(m \log^{12} n)$. This
was recently improved by Henzinger et al.~\cite{HRW17} to $O(m(\log
n)^2(\log\log n)^2)$.

\section{Approximate cuts with fewer vertices}\label{sec:approx}
In this section, we prove Theorem  \ref{thm:main-tech-strong}. To
do this, we 
will first generalize our min-cut contraction
algorithm to preserve approximate min-cuts. Afterwards we will
show how to reduce the number of vertices in the contracted
graph by a factor $\delta$.

\subsection{Preserving approximate cuts}\label{sec:approx-cuts}
We are now going to modify our min-cut contraction algorithm
to preserve, not only min-cuts, but also
approximate min-cuts:
\begin{theorem}\label{thm:main-tech-approx} Let $\eps\in(0,1]$ be a constant.
Given a simple input graph $G$ with $n$ vertices, $m$ edges, and
minimum degree $\delta$, and (unknown) edge
connectivity $\lambda$, in near-linear time, we can contract vertex
sets producing a multigraph $\bbar G$ which has only $\bbar
m=\tO(m/\delta)$ edges, yet which preserves all non-trivial cuts of
$G$ of size at most $\lambda+(1-\eps)\delta$.
\end{theorem}
It turns out that we only need minor modifications of our contraction
algorithm for min-cuts to preserve approximate min-cuts and
prove Theorem \ref{thm:main-tech-approx}.
We will first describe the algorithmic changes, later 
the changes to the analysis.

\paragraph{Algorithmic changes}
On the high level, we are going to reuse Algorithm \ref{alg:min-cut} 
for the approximate min-cuts of Theorem \ref{thm:main-tech-approx}. However,
we need to modify some definitions by changing
four thresholds.

The first obvious change is that before we said we could contract two
vertices if there were more than $\delta$ parallel edges between
them. Now there should be more than $2\delta$ parallel edges between
them.

More interestingly, we have to change the definition of loose vertices from Section
\ref{sec:semi-contract}. The original defintion was that a vertex $v$
in a cluster $C$ is loose if it is regular and at least $d(v)/2-1$ of
its edges in $\bbar G$ leave $C$. Now we say {\em $v$ is loose if at
  least $\eps d(v)/2-4$ of its edges in $\bbar G$ leave $C$.}  Recall
that loose vertices have to be shaved from $C$ to get down to
the core, which is scrapped if less than $1/4$ of the edges incident
to $C$ are internal to the core. With the new definition, we get more
loose vertices, hence smaller cores.  Essentially this will increase
the number of edges removed from cores by a factor $1/\eps$.

The third change is in Section  \ref{sec:super-degree} where
we said that we terminate the algorithm if
more than a fraction $1/20$ of the edges in $\bbar G$ are
incident to passive super vertices. Now will we terminate if more than a
fraction $\eps/20$ of the edges are incident to passive super
vertices. 

The fourth change is in Section \ref{sec:cut-trim} in the definition of
the threshold $\Phi_0=1/(20\lg m)$ for a low conductance cut.  We
change it to $\Phi_0'=\eps/(20\lg m)$. The last two changes will imply
that the fraction of edges cut from passive vertices and from low
conductance cuts is reduced by $\eps$ from $1/10$ to $\eps/10$.

Finally, we note an implicit change; namely that new asymptotic
calculations may affect the smallest number $n_0$ of vertices that our
algorithm can handle (c.f. Section \ref{sec:notation}). However, $n_0$ 
remains a constant and Theorem \ref{thm:main-tech-approx} is
trivially true for a constant sized graph.

This completes the changes to our contraction algorithm. We will
now analyze how the changes help us preserve approximate cuts below.

\paragraph{Modified analysis}
We want to prove that our modified contraction algorithm preserves cuts of size at most
$\lambda+(1-\eps)\delta<2\delta$. To do that, we will in many cases
focus on cuts of size at most $2\delta$ instead of $\delta$. The
first example is in the definition of a cluster
from Section \ref{sec:clusters}. Originally we said that a trimmed
vertex set $C$ was a cluster if for every cut of size at most
$\delta$ in $\bbar G$, one side contains at most two regular vertices
and no super vertices from $C$. Now we say {\em a trimmed
vertex set $C$ is cluster if 
for every cut of size at most
$2\delta$ in $\bbar G$, one side contains at most five 
regular vertices and no super vertices from $C$. }
Corrsponding to Lemma \ref{lem:big-sides}, the following
lemma says that more than five regular vertices implies many regular
vertices:
\begin{lemmanew}{\ref{lem:big-sides}}
Consider a trimmed vertex set $C$ and
a cut $(T,U)$ of $\bbar G$ of size at most $2\delta$ (before it was $\delta$). If $T\cap C$ has no super vertices
and at least 6 (before it was 3) regular
vertices, then $T\cap C$ has at least $\delta/3$ regular
vertices.
\end{lemmanew}
\begin{proof}
Consider $T\cap C$ which has no super vertices.  Since $C$ is trimmed,
the internal degree of regular vertices in $C$ is at least
$2\delta/5$, so the number of edges crossing from $T\cap C$ to $U\cap
C$ is at least $|C\cap T|(2\delta/5+1-|C\cap T|)$, but we have at most
$2\delta$ cut edges. Since $\delta=\omega(1)$, we conclude that
$|C\cap T|\leq 5$ or $|C\cap T|\geq 2\delta/5-4>\delta/3$.
\end{proof}
The most important change to the analysis, however, is to 
show an analogue to Lemma~\ref{lem:core-contract} which said that 
contracting cores preserved non-trivial min-cuts. With our new definition
of clusters and of the loose vertices not in the core, we want
to show that contracting cores preserves
non-trivial cuts of size up to $\lambda+(1-\eps)\delta$. 
\begin{lemmanew}{\ref{lem:core-contract}} If a non-trivial cut 
of $G$ of size at most $\lambda+(1-\eps)\delta$ (before it was
$\lambda$)  has
survived in $\bbar G$, then it will also survive  when
we contract the core of any cluster in $\bbar G$.
\end{lemmanew}
\begin{proof}
Consider a cut $(T,U)$ of $\bbar G$ of size at most
$\lambda+(1-\eps)\delta$ that was non-trivial in $G$. 
We must have at least two regular vertices
or one super vertex both in $T$ and in $U$.

We now consider a cluster $C$ in $\bbar G$ with a non-empty core $A$.
Since $(T,U)$ has size at most $\lambda+(1-\eps)\delta<
2\delta$, by the new
definition of a cluster, one side, say $T$, has at most five regular
vertices and no super vertices from $C$. We will argue that these
vertices in $C\cap T$ must be loose, hence that the vertices
identified by the contraction of $A$ are all in $U$; for then
the contraction preserves $(T,U)$.

Let $v$ be one of the regular vertices from $C\cap T$, and assume for
a contradiction that $v$ is not loose. By our new definition of loose,
this means that $v$ has at most $\eps d(v)/2-5$ edges leaving $C$ in $\bbar G$.

We will prove that we get a cut that is more than $(1-\eps)\delta$ edges
smaller by moving $v$ to $U$, 
contradicting that $(T,U)$ had size at most 
$\lambda+(1-\eps)\delta$.

Moving $v$ only affects the cutting of
edges incident to $v$. Recall that $T\cap C$ has at most five regular vertices,
including $v$, and no super vertices, so $v$ has at most
four edges to other vertices in $T\cap C$. When $v$ is in $T$, we cut 
all other edges
from $v$ to $C$. However,
since $v$ is not loose, it has least $(1-\eps/2)d(v)+5$ edges from $v$ into $C$, so
with $v$ in $T$, we cut at least $(1-\eps/2)d(v)+1$ edges incident to $v$.
Moving $v$ to $U$, we cut the at most $\eps d(v)/2-1$ other edges incident 
to $v$. Moving $v$ from $T$ to $U$ thus reduces the cut size by
at least $(1-\eps)d(v)+2$, yielding the desired contradiction.
\end{proof}

We also need an analogue of Lemma~\ref{lem:shaving} bounding the
number of edges incident to a cluster that are not internal to the core.
\begin{lemmanew}{\ref{lem:shaving}} If a cluster $C$ has $k$ edges leaving it,
then there are less than $3k/\eps$ (before it was $3k$) 
edges incident to $C$ that are not internal to the
core. In particular, if the core is empty, we have $\vol(C)<3k/\eps$.
\end{lemmanew}
\begin{proof}
Let $A$ be $C$ without the loose vertices, i.e., $A$ is the core unless
the core becomes empty. Let $\ell$ be the number of
edges leaving $C$ from loose vertices.  Then we have $k-\ell$ edges
leaving $C$ from vertices in $A$.  Other edges incident to $C$ but not
internal to $A$ are all incident to loose vertices.

Consider any loose vertex $v$ in $C$. It has at least $\eps d(v)/2-4=
(\eps/2-o(1))d(v)$ edges leaving $C$. Here we used that loose vertices are regular, so
$d(v)\geq \delta=\omega(1)$. It follows that the total number of edges
incident to loose vertices is at most
$\ell/(\eps/2-o(1))=(2/\eps+o(1))\ell$. This proves the lemma unless
the core becomes empty.

The core becomes empty if and only if at most $1/4$ of the edges
incident to $C$ are internal to $A$, but this implies that the number
of edges internal to $A$ is at most $1/3$ of the number of edges not
internal to $A$. Thus, if $A$ is not the core, there are at most
$(2/\eps+o(1))k/3$ edges internal to $A$, and then we have at most
$(2/\eps+o(1))k+(2/\eps+o(1))k/3<3k/\eps$ edges incident to $C$.
\end{proof}
There are no changes to Lemma \ref{lem:big-super} and \ref{lem:passive}. With our new
Lemma~\ref{lem:shaving}',  we easily get a corresponding change to
Lemma \ref{lem:cut-trim}:
\begin{lemmanew}{\ref{lem:cut-trim}}
 If the total number of edges cut is $c$,
then the total number of edges lost due to trimming, shaving, and scrapping
is at most $4c/\eps$ (before it was $4c$).\bigskip
\end{lemmanew}

Because of the new bounds, we need to reprove: 

\paragraph{Lemma \ref{lem:edges-left}}\hspace{-2ex}  {\em Cutting edges around passive vertices and edges of
low-conductance cuts, trimming, shaving, and scrapping, leaves at least
half the edges of $\bbar G$ in the resulting cluster cores of $H$.}
\smallskip

\begin{proof} Recall that we changed the algorithm to stop when more than a fraction $\eps/20$ of the edges in $\bbar G$ were incident 
to passive vertices, so this limits the fraction that gets
cut when we cut out these passive vertices.

Also, now we have defined low conductance cuts to have conductance
at most $\Phi_0'=\eps/(20\lg m)$,
and the fraction of edges cut by recursive low conductance
cuts is at most $\lg m$ times bigger, so at most $\eps/20$. Thus,
the total fraction of edges cut is at most $\eps/10$. Then, by
Lemma \ref{lem:cut-trim}', the total fraction of
edges lost due to trimming, shaving, and scrapping is 
at most $4(\eps/10)\eps=4/10$. All together, the total
fraction of edges lost is $\eps/10+4/10\leq 1/2$.
\end{proof}
The change of $\Phi_0$, multiplying it by $\eps=\Theta(1)$,
has no other impact. The reason is that whenever
our algorithm produces a low conductance cut, we proved it to be of conductance
$o(\Phi_0)<\Phi_0'$ (c.f.~Theorem \ref{thm:recurse}).

When it comes to the graph representations and modifications in Section
\ref{sec:representation}, there are only two minor changes. The first is
that we now need more than $2\delta$ parallel edges betweeen vertices
before we contract them. The second is the
new definition of loose vertices. Before a vertex $v$ was loose
if it had $d_H(v)\leq d_{\bbar G}(v)/2+1$. Now it
only needs $d_H(v)\leq (1-\eps/2)d_{\bbar G}(v) +4$. With
the above changes, we can immediately strengthen Lemma \ref{lem:basic-recurse} to
work for approximate min-cuts.
\begin{lemmanew}{\ref{lem:basic-recurse}}
Ignoring the cost of finding the low-conductance cuts, we can implement Algorithm
\ref{alg:min-cut} (with our modified definitions) in $O(m\log^2 n)$ time where $n$ and $m$ are the
number of vertices and edges of the simple input graph $G$. The result
is the contracted graph $\bbar G$ described in Theorem
\ref{thm:main-tech-approx}, that is, $\bbar G$ has only $\tO(m/\delta)$
edges, yet it preserves all non-trivial cuts of
$G$ of size at most $\lambda+(1-\eps)\delta$.\medskip
\end{lemmanew}

Getting to Section \ref{sec:cluster-or-cut}, like for clusters,
we now need to consider cuts of size at most $2\delta$ instead of just
$\delta$, so now we define that {\em a component $C$ of $H$ is
  $s$-splittable if every cut $(T,U)$ of $\bbar G$ of size at most
$2\delta$ has $\min\{\vol_{C}(T\cap C),\vol_{C}(U\cap C)\}<s$.} We do
not need to change that $s_0=64\delta/\alpha_0$.

Working with cuts of size up to $2\delta$, we need to reprove Lemma
\ref{lem:super-splittable} which gives us the main condition for arguing
that a trimmed component is a cluster.
\paragraph{Lemma \ref{lem:super-splittable}} {\em 
If a trimmed component $C$ of $H$ is $s_0$-splittable,
then $C$ is a cluster.}
\begin{proof}
Suppose that $C$ is not a cluster. Then there is a cut $(T,U)$ of $\bbar G$
of size at most $2\delta$ (before it was $\delta$) 
such that both $T\cap C$ and $U\cap C$ contain a super vertex or at least 6 (before it was 3) regular vertices. In
the case of at least 6 regular vertices and no super vertices, 
Lemma \ref{lem:big-sides}' tells us that there are at least $\delta/3$ 
regular vertices.
No
other changes are needed to the original proof of Lemma  \ref{lem:super-splittable}.
\end{proof}
The remaining changes are quite small and local, exploiting
that our original analysis has enough slack to accomodate the increase in
cut size from $\delta$ to $2\delta$ as well as the up to $2\delta$ parallel
edges between vertices. 

We will see that Theorem \ref{thm:recurse} holds unchanged like
Lemma \ref{lem:super-splittable}, and this means that we need no
changes to Lemma \ref{lem:oracle} or its proof.
Therefore, as analogue to Lemma \ref{lem:recurse-main-tech}, we
get 
\begin{lemmanew}{\ref{lem:recurse-main-tech}} 
Theorem \ref{thm:main-tech-approx} follows from Theorem \ref{thm:recurse} (with
our modified definitions).\medskip
\end{lemmanew}

Moving to Section \ref{sec:cross-cut}, we need no changes to the statement of Lemma
\ref{lem:cross-cut} as it is already handling $O(\delta)$ cut
edges. However, inside the proof of Lemma \ref{lem:cross-cut}, we 
need a very minor change. We currently say
that the maximal fraction of mass that can be pushed from a super
vertex to a neighbor is $\delta/(2\delta^*/5)=O(\alpha_0/\log n)$. 
With up to $2\delta$ parallel edges, this is fraction is increased to $2\delta/(2\delta^*/5)=O(\alpha_0/\log n)$. No
other changes are needed to the proof.

With the statement of  Lemma \ref{lem:cross-cut} unchanged,
 we do not need to make any changes
to Lemma \ref{lem:s-captured} or its proof. 
The final place were we need a more careful change is 
\begin{lemmanew}{\ref{lem:not-s-captured}} Lemma \ref{lem:not-s-captured}
holds with 
the new definition of $s$-captured for cuts of size up to $2\delta$ and where
we in case (ii) and (iii) double the
the cut size bound to $\delta_C(S)\leq 2\delta$.
\end{lemmanew}
\begin{proof} The proof is essentially the same as
that of Lemma \ref{lem:not-s-captured}, and we will only
describe the changes. We double the bound on $\delta_C(S)$ to
$2\delta$, and as a result, we may get twice as much limit mass in $S$.
More precisely, as in \req{eq:limit-in-S} from the proof of  Lemma \ref{lem:not-s-captured}, we 
get
\[p^*_{X,C}(S)\leq (4+1/(2\alpha_0))|\boundC(S)|/\vol_C(X).\]
In the proof of Lemma \ref{lem:not-s-captured}, we bounded
$(4+1/(2\alpha_0))\leq 1/\alpha_0$, but this time we tighten it
to $(4+1/(2\alpha_0))\leq 2/(3\alpha_0)$, using $\alpha_0=o(1)<1/24$,
and then we get 
\[p^*_{X,C}(S)\leq(2/(3\alpha_0))|\boundC(S)|/\vol_C(X)\leq 4\delta/(3\vol_C(X)\alpha_0).\]
As in \req{eq:delta-bound} from the proof of  Lemma \ref{lem:not-s-captured},
we have 
\[\delta/(\vol_C(X)\alpha_0)\leq s/(16m(C)).\]
so we conclude that 
\[p^*_{X,C}(S)\leq s/(12m(C))\textnormal{\hspace{3ex} (before it was $s/(16m(C))$)}.\]
Since $\volC(S)>s/2$, this means that vertices $u\in S$
with limit density $p^*_{X,C}(u)/d(u)\leq 1/(3m(C))$ (before it was 
$1/(2m(C))$) represent more than
half the volume of $S$. Finally we apply 
Theorem \ref{thm:endgame} with $\Cdiff=1/6$ (before it was $1/4$).
The rest of the proof is exactly as the proof of Lemma \ref{lem:not-s-captured}.
\end{proof}
Since Lemma \ref{lem:s-captured} and Lemma \ref{lem:not-s-captured}
both hold with our new notion of $s$-captured
for cuts of size up to $2\delta$, we do not need to make any changes
to the proofs of Lemma \ref{lem:many-non-captured}, Lemma \ref{lem:half-concentration}, and Theorem \ref{thm:recurse}. Then Theorem~\ref{thm:main-tech-approx} follows by Lemma \ref{lem:recurse-main-tech}'.

This completes the description of how the contraction algorithm and its
analysis for Theorem \ref{thm:main-tech} can be
modified to prove Theorem \ref{thm:main-tech-approx}.
We note that the proof is not harder, and we could have proved
the stronger Theorem \ref{thm:main-tech-approx} directly to start with.
However, the main result of this paper is about min-cuts, and they
are cleaner to work with, e.g., exploiting that every vertex has at least half its neighbors on the same side.

\subsection{Better edge and vertex bounds}
We will now show how we can improve the edge and vertex bounds of Theorem \ref{thm:main-tech-approx} to $\tO(n)$ and $\tO(n/\delta)$, respectively,
as required to complete the proof of Theorem \ref{thm:main-tech-strong}.

Let $k=2\delta$. First we take out $k$ disjoint forests
$F_1,\ldots,F_k$ from $G$ such that $F_k$ is maximal in $G\setminus
\bigcup_{j<i} F_j$. Nagamochi and Ibaraki \cite{NI92} have described
how to do this in linear time.  Then $H=\bigcup_{i} F_i$ has less than
$m_H=nk=O(n\delta)$ edges. Moreover, $H$ preserves all cuts of size
$\leq k$ and larger cuts preserve at least $k$ of their
edges. Finally, if an edge $(u,v)$ is left in $G\setminus H$, then $u$
and $v$ are at least $k+1$-edge connected in $G$. We are later going
to contract the edges from $G\setminus H$, but we cannot destroy simplicity
yet.

We now apply Theorem \ref{thm:main-tech-approx} to $H$, obtaining a
contracted graph $\bbar H$ with $\bbar m_H=\tO(m_H/\delta)=\tO(n)$
edges.  We claim that $\bbar H$ has only $\tO(n/\delta)$
vertices. Trivially $\bbar H$ has at most $2\bbar
m_H/\delta=\tO(n/\delta)$ vertices of degree $\geq \delta$, which was
the minimum degree in $G$. Lower degree vertices must be super
vertices, and by Lemma \ref{lem:big-super}, each super vertex in
$\bbar H$ contracts $\Omega(\delta^2)$ edges from $H$, so we can only
have $O(m_H/\delta^2)=O(n/\delta)$ super vertices in $\bbar H$.  Thus
$\bbar H$ has $\tO(n/\delta)$ vertices and $\tO(n)$ edges, and
it preserves all non-trivial cuts of $H$ of size at most
$\lambda+(1-\eps)\delta$.

Our final step is to contract in $\bbar H$ the end-points of each edge $(u,v)$ left in
$G\setminus H$. More precisely, we find the vertices or super vertices
representing $u$ and $v$ in $\bbar H$, and identify them in a single
super vertex. Since $u$ and $v$ were $2\delta+1$-connected in $G$, these
contractions preserve all cuts of size at most $2\delta$, and the
resulting graph $\bbar G$ is obtained from $G$ by contractions only. 
The contractions can only decrease the number of
edges and vertices, so we conclude that $\bbar G$ has $\tO(n/\delta)$ vertices and $\tO(n/\delta)$ edges, and that  it preserves all non-trivial 
cuts of $G$ of size of size at most $\lambda+(1-\eps)\delta$. This
completes the proof of Theorem \ref{thm:main-tech-strong}.

\section{Limit concentration and low conductance cuts: the proofs}\label{sec:pagerank-anal}
In this section we will prove Theorem \ref{thm:ACL} and
\ref{thm:endgame} from Section \ref{sec:thm:conc}. We recommond that
the reader reviews Section \ref{sec:PageRank} before continuing. 
 The
starting point for both theorems is a multigraph with $m$ edges, an
initial mass distribution $p^\circ$ of total mass $1$, and
a listing of the vertices with positive mass in order of non-increasing density. 

First, in Section \ref{sec:sweep}, we will show how to implement
PageRank followed by a sweep in linear time, improving by a
logarithmic factor the bound from \cite[\S 2.2]{ACL07:pagerank}.

In Sections 
\ref{sec:high}--\ref{sec:single-low}, we will study when 
we from a settled mass distribution get a low-conductance
cut in the sweep. Sections
\ref{sec:high} and \ref{sec:low} make symmetric studies of the cases 
of high versus low settled densities, while
Section \ref{sec:single-low} studies the case where a single
vertex has $\Omega(1/m)$ too little mass, as 
needed for our new endgame result in Theorem \ref{thm:endgame}. 
The results of Sections \ref{sec:high} and \ref{sec:low} are
essentially equivalent to those in \cite{ACL07:pagerank}, but our
proofs are easily modified to prove the new results in 
Section \ref{sec:single-low}. Our proofs 
are quite similar to those in \cite{AC07:sharp-drop}, but cast
in terms of the flows from Lemma \ref{lem:stationary}.

In Section \ref{sec:proof-ACL}, we use the results from
Sections \ref{sec:high} and \ref{sec:low} to prove Theorem \ref{thm:ACL},
which says that if some set has $\Omega(1)$ excess mass
in the limit, then we can find a low-conductance
cut. The techniques to prove this are not new, but in the
previous papers \cite{AC07:sharp-drop,ACL07:pagerank} they always
assume that the initial distribution is with all mass on a single
``good'' vertex, while our Theorem \ref{thm:ACL} holds for arbitrary initial
distributions, including our $p^\circ_X$ where the initial
mass is spread with uniform density on an arbitrary subset $X$ of the vertices.

Finally, in Section \ref{sec:proof-endgame}, we prove 
our new endgame Theorem \ref{thm:endgame} which identifies
a low-conductance cut if there is a single vertex whose
limit distribution is $\Omega(1/m)$ too small.

\subsection{Sweeping for low conductance cuts in linear time}\label{sec:sweep}
We will first present a simple variant of the PageRank in Algorithm
\ref{alg:apprpr} which makes the sweep for a low conductance cut run
in linear time. The issue is that in order to do the sweep, we need
the vertices with positive settled mass to be sorted in order of
non-increasing settled mass density. In \cite[\S 2.2]{ACL07:pagerank}
they used regular sorting to get this order, costing them a
logarithmic factor.

First we note that we can make the push more flexible, only pushing
part of the residual mass at a vertex as described in Algorithm \ref{alg:push'}.
\begin{algorithm}\label{alg:push'}
\caption{Push'$(\alpha,u,q)$---assumes $r(u)\geq q$}
$p(u)\asgn p(u)+\alpha q$\;
\lFor{$(u,v)\in E$}{$r(v)\asgn r(v)+(1-\alpha)q/(2d(u))$}
$r(u)\asgn r(u)-(1+\alpha)q/2$.
\end{algorithm}
This more flexible push still satisfies all the basic properties discussed
in Section  \ref{sec:PageRank}, e.g., we preserve Invariant \req{eq:inv-push},
Fact \ref{fact:flow}, and Lemma \ref{lem:stationary}, and
as the residual mass goes to 0, the settled mass converges to
the same same unique limit $\PR(\alpha,p^\circ)$ as 
with $\ApprPR$ in Algorithm \ref{alg:apprpr}.

Our basic idea is that when we push from a vertex, we will always
push $\eps d(u)$ of the residual mass as described in
$\ApprPR'$ in  Algorithm \ref{alg:apprpr'}.
\begin{algorithm}\label{alg:apprpr'}
\caption{\ApprPR'$(\alpha,\eps,p^\circ)$}
$r\asgn p^\circ$;$\quad p\asgn 0^V$\;
\lWhile{$\exists u: r(u)/d(u)\geq \eps$}{Push'$(\alpha,u,\eps d(u))$}
\end{algorithm}
This means that the settled density $p(u)/d(u)$ on any vertex
is always an integer multiple of $\eps$.
\begin{lemma}\label{lem:sweep}
We are given an initial mass distribution $p^\circ$ of total mass $1$
and a listing of the vertices with positive initial mass in order of
non-increasing density. For any $\alpha,\eps\leq 1$, we can implement
$\ApprPR'$ from Algorithm \ref{alg:apprpr'} in $O(1/(\eps\alpha))$
time, producing a settled mass distribion $p$ listing the vertices
with positive settled mass in order of non-increasing density. Within
this time-bound, we can also find which prefix of list defines the lowest
conductance cut.
\end{lemma}

\begin{proof}
The initial mass is $1$, so the sum of the degrees of
the pushes is bounded by $1/(\eps\alpha)$, which 
is within a constant factor of the time bound we want.

We will maintain a doubly-linked push-list with vertices of residual
density at least $\eps$.  These are the vertices from which $\ApprPR'$
will have to do a push. Initially the residual mass is the initial
mass, so the initial push-list is extracted as a prefix of the list of
vertices in order of non-increasing initial/residual density.

We are going to use the general trick that each vertex has a flag
telling if it has been affected by the pushes. Moreover, we will have
a list with all vertices affected. The maximal length of this list is
$1/(\eps\alpha)$. All vertices have fields for residual density and
settled density that we can start using first time they are affected,
copying the residual mass from the initial mass and setting the
settled mass to zero. At the end, we will have to clean up, going
through the list of affected vertices setting their flags to
zero.

When we do a push from $u$, we first note that it affects $u$ and all
its neighbors, and the neighbors may have to be initialized as described above.
Afterwards, we just implement the push as described in Algorithm 
\ref{alg:push'} in constant time per neighbor. Note
that when we push from $u$, its residual denisty may drop
below $\eps$ and then it has to be removed from the push-list. Likewise,
some neigbors may now get residual density $\eps$ and
hence get added to the push-list. This way, the total
time spent is $O(1/(\eps\alpha))$.

We now go through the list of affected vertices, collecting all
vertices with positive settled mass. They all have settled density at least
$\eps$. Next we go through this positive list extracting all vertices with
settled density at least $2\eps$, and continue this way, for
$i=2,3,...$ extracting the vertices with settled density at least
$i\eps$, stopping when no more vertices are extracted. This
splits the vertices into lists $L_i$ with vertices of settled
density $i\eps$. The vertices in $L_i$ are considered $i+1$ times,
but they were pushed $i$ times, and the total number of pushes was
bounded by $1/(\eps\alpha)$, so the total time for this
is $O(1/(\eps\alpha))$. Concatenating the lists in reverse order,
we get a list $(v_1,\ldots,v_\ell)$ containing the vertices with
positive settled mass ordered by non-increasing density, as
desired.

We now want to find the lowest conductance cut based on a prefix of
the list. The volumes are trivially computed, just adding up degrees
for all prefixes. To compute the cut sizes
$c_i=|\boundary(\{v_1,\ldots,v_i\})|$, we assume that each vertex has
a field that we for an affected vertex assign its index from the list,
or set to $\ell+1$ if the vertex is not in the list because it has no
settled mass. This means that we for any edge leaving $v_i$ can tell
if the other end-point is before or after $v_i$ in the list. We now
sweep the vertices. We start with $c_0=0$. When we get to $v_i$, we
first set $c_i=c_{i-1}$. Then we subtract from $c_i$ the number of
edges from $v_i$ to preceeding vertices and add the number of edges to
succeeding vertices. Now $c_i$ is the desired cut size, which we
divide by the volume to get the conductance.  At the end we return the
lowest conductance cut. The time spent on $v_i$ is $O(d(v_i))$ but we
know that $v_i$ was pushed, hence that $\sum_i d(v_i)\leq
1/(\eps\alpha)$.  Thus we conclude that the total time spent is
$O(1/(\eps\alpha))$.
\end{proof}

\subsection{High densities}\label{sec:high}
We will now prove that the sweep must find a sparse cut if
the settled mass $p$ is concentrated. We will only consider
vertices with settled densities above some $t_0$
such that
\begin{equation}\label{eq:high}
\volp {>t_0}\leq m.
\end{equation}
Recall here that $V^p_{> t_0}$ is the set of vertices with settled density 
$\pd v>t_0$. We will focus on cuts defined by
sets $V^p_{\geq t}$ consisting of vertices with settled density $\pd v\geq t$
for some $t>t_0$. This implies that vertices with the same settled density will always
be on the same side of a cut. It is among the cuts defined by these sets that we will look
for low conductance $\Phi(V^p_{\geq t})=\boundp {\geq t}\left/\volp {\geq t}\right.$.
\begin{lemma}\label{lem:conductance}
Assuming \req{eq:high}, for any $\tau\in(t_0,1]$, 
\begin{equation}\label{eq:conductance}
\min_{t\in(t_0,\tau]} \Phi(V^p_{\geq t})\leq
\sqrt{\frac{18\alpha}{(\tau -t_0)\volp{\geq \tau}}}.
\end{equation}
\end{lemma}
\begin{proof}
To prove \req{eq:conductance}, consider any 
$\phi\leq \min_{t\in(t_0,\tau]} \Phi(V^p_{\geq t})$. We will
prove \req{eq:conductance} in the following
equivalent form:
\begin{equation}\label{eq:conductance-ind}
\tau-t_0 \leq \frac{18\alpha}{\phi^2\volp{\geq \tau}}.
\end{equation}
Let $t\in(t_0,\tau]$.  By \req{eq:high}, we have $\volp {\geq t}\leq  m$, 
so by definition, $\boundp {\geq t}\geq \phi\,\volp {\geq
  t}$. Consider any edge $(u,v)$ leaving $\Vp {\geq t}$ (so $u \in \Vp
{\geq t}$ but $v \not\in \Vp {\geq t}$). By Fact \ref{fact:flow}, the net flow over this edge from $u$ to $v$ is
at least $(\pd u-\pd v)/(3\alpha)$. Here
we used that $\alpha < 1/3$, so $(1-\alpha)/(2\alpha)>1/(3\alpha)$.
Since $\pd u\geq t>\pd v$ this
flow is always positive away from $\Vp {\geq t}$. 

Let $t'$ be the
median density $\pd v$ of a neighbor of $\Vp {\geq t}$, counting $\pd
v$ with the multiplicity of the number of edges from $\Vp {\geq t}$ to
$v$. We say that {\em $t'$ is derived from $t$ using median expansion}.
We note that $|\Vp {\geq t'}|>|\Vp {\geq t}|$ since there
is some vertex with median density $\pd v=t'$. We also note that
we may have $t'<t_0$ even though $t>t_0$.

We have at least $\boundp {\geq t}/2$ edges
from $\Vp {\geq t}$ to vertices $v$ with $\pd v\leq t'$, so
the net flow out of
$\Vp {\geq t}$ is at least
\begin{equation}\label{eq:flow-out}
\left(\boundp {\geq t}/2\right)(t-t')/3\alpha\geq
\phi\,\vol(\Vp{\geq t})(t-t')/6\alpha.
\end{equation}
But this can be no more than the total mass, which is $1$,
so
\begin{equation}\label{eq:tt'}
(t-t')\leq \frac{6\alpha}{\phi\,\volp {\geq t}}.
\end{equation}
By the median definition of $t'$, at least half the edges leaving $\Vp{\geq t}$
land in $\Vp{\geq t'}$, so 
\begin{align}
\volp {\geq t'}&\geq \volp {\geq t}+\boundp {\geq t}/2
\geq (1+\phi/2)\volp {\geq t}.\label{eq:growth}
\end{align}
We want to prove for any $t\leq \tau$ that
\begin{equation}\label{eq:ind-t}
t-t_0 \leq
\frac{18\alpha}{\phi^2\volp {\geq t}}.
\end{equation}
If $t\leq t_0$, the statement of \req{eq:ind-t} is trivially true, so
we can assume $t>t_0$. We want to use induction, inductively
assuming that \req{eq:ind-t} holds for the above defined $t'<t$. However,
we cannot just use induction over the reals. Instead, formally,
the induction is over the positive integer $n-|\Vp{\geq t}|$. Since
$|\Vp{\geq t'}|>|\Vp{\geq t}|$, it is inductively valid
to assume
that \req{eq:ind-t} holds for $t'$.
Combining this with \req{eq:tt'} and \req{eq:growth},  we 
get
\begin{align*}
t-t_0&\leq (t-t')+(t'-t_0)\\
&\leq \frac{6\alpha}{\phi\,\volp{\geq t}}+
\frac{18\alpha}{\phi^2\volp {\geq t'}}\\
&\leq \frac{6\alpha}{\phi\,\volp {\geq t}}+
\frac{18\alpha}{\phi^2(1+\phi/2)\volp {\geq t}}\\
&\leq
\frac{18\alpha}{\phi^2\volp {\geq t}}\left(
\phi/3+1/(1+\phi/2)\right)\\
&\leq \frac{18\alpha}{\phi^2\volp {\geq t}}.
\end{align*}
The last inequality uses that $\phi/3+1/(1+\phi/2)\leq 1$ for
all $\phi\in[0,1]$. This completes the proof of \req{eq:ind-t}. Now \req{eq:conductance-ind}
follows with $t=\tau$.
\end{proof}

\subsection{Low densities}\label{sec:low}
We will now make a symmetric study of vertices with settled density below some
$t_0$ such that
\begin{equation}\label{eq:low}
\volp {<t_0}\leq m.
\end{equation}
Note that if \req{eq:high} is false then \req{eq:low} is true, so
for any value of $t_0$, the analysis below applies if the
analysis from the previous subsection did not apply. We also note that
we can pick a median $t_0$ satisfing both \req{eq:high} and  \req{eq:low}.

We will consider the 
conductance $\Phi(V^p_{\leq t})=\boundp {\leq t}\left/\volp {\leq  t}\right.$
of the sets $V^p_{\leq t}$ of vertices with 
settled density $\pd v\leq t$ for some $t<t_0$.
Symmetric to the results from Section \ref{sec:high}, we will prove
\begin{lemma}\label{lem:conductance-low} Assuming \req{eq:low}, for any 
$\tau\in[0,t_0)$, 
\begin{equation}\label{eq:conductance-low}
\min_{t\in[\tau,t_0)} \Phi(V^p_{\leq t})\leq
\sqrt{\frac{18\alpha}{(t_0-\tau)\volp{\leq \tau}}}.
\end{equation}
\end{lemma}
\begin{proof}
The proof is symmetric to that of Lemma \ref{lem:conductance}, which
we assume that the reader is already familar with.
For any given $\phi\leq \min_{t\in[\tau,t_0)} \Phi(V^p_{\leq t})$,
we are going to prove \req{eq:conductance-low} in the following
equivalent form.
\begin{equation}\label{eq:conductance-ind-low}
t_0-\tau \leq \frac{18\alpha}{\phi^2\volp{\leq \tau}}.
\end{equation}
Consider any $t\in[\tau,t_0)$. By \req{eq:low}, we have $\volp {\leq t}\leq m$, so by definition,
$\boundp {\leq  t}\geq \phi\,\volp {\leq  t}$. Consider
any edge $(u,v)$ leaving $\Vp {\leq t}$ (so $u \in \Vp {\leq t}$ but $v \not\in \Vp {\leq t}$). By Fact \ref{fact:flow} 
the net flow over this edge from $v$ to $u$
is at least $(\pd v-\pd u)/(3\alpha)$. 
Since $\pd u\leq t<\pd v$ this
flow is always positive into $\Vp {\leq t}$. Let $t'$ be the median density $\pd v$
of a neighbor of $\Vp {\leq t}$, counting $\pd v$ with the multiplicity of the number of edges from $\Vp {\leq t}$ to $v$. This is the symmetric analogue of the
median expansion used in the proof of Lemma \ref{lem:conductance}.

We then have at least $\boundp {\leq t}/2$ edges
from $\Vp {\leq t}$ to vertices $v$ with $\pd v\geq t'$, so
the net flow into
$\Vp {\leq t}$ is at least
\begin{equation}\label{eq:flow-in}
\left(\boundp {\leq  t}/2\right)(t'-t)/(3\alpha)\geq
\phi\,\volp{\leq t}(t'-t)/(6\alpha).
\end{equation}
But this can be no more than the total mass, which is $1$,
so symmetric to \req{eq:tt'}, we get
\begin{equation}\label{eq:tt'-low}
(t'-t)\leq \frac{6\alpha}{\phi\,\volp {\leq t}}.
\end{equation}
Also, since $t'$ was the median neighboring density, corresponding
to \req{eq:growth}, we get
\begin{align}
\volp {\geq t'}&\geq \volp {\geq t}+\boundp {\geq t}/2
\geq (1+\phi/2)\volp {\geq t}\textnormal,\label{eq:growth-low}
\end{align}
The rest of the argument for Lemma \ref{lem:conductance-low}
is exactly the same as
the argument for Lemma \req{eq:conductance}.
\end{proof}

\subsection{A single low density}\label{sec:single-low}
In this section we will show that just a
single vertex with low density makes a big difference if
we have a good bound on the residual densities $r(v)/d(v)$ for every vertex $v$. More precisely, we will prove
\begin{lemma}\label{lem:conductance-low-single}
Assume \req{eq:low}, that is,  $\volp {<t_0}\leq m$. If $r(v)/d(v)\leq
\eps\leq 1/(2m)$ for all $v\in V$ and 
there is a vertex $u$ with density $\pd u\leq \tau$,
then 
\begin{equation}\label{eq:conductance-low-single}
\min_{t\in[\tau,t_0)} \Phi(V^p_{\leq t})\leq
\sqrt{\frac{12(t_0+\eps)\alpha \lg m}{t_0-\tau}}.
\end{equation}
\end{lemma}
\begin{proof}
Let $\phi\leq \min_{t\in[\tau,t_0)} \Phi(V^p_{\leq t})$.
We shall reuse a lot of the analysis from Section \ref{sec:low} based
on some $t<t_0$ and the median neighboring density $t'$ from
the median expansion in the proof of Lemma \ref{lem:conductance-low}.
In Section \ref{sec:low},   symmetric to the high density case,
we said that the total flow into $\Vp{\leq t}$ is at most
$1$. However, for the current proof we know that the residual density on every
vertex is bounded by $\eps$. Then the total mass on
$\Vp{\leq t}$ is at most $(t+\eps)\volp{\leq t}$. This
gives us a different bound on the net flow into $\Vp{\leq t}$, which by \req{eq:flow-in}
is at least $\phi\,\volp{\leq t}(t'-t)/(6\alpha)$. Thus,
as an alternative to \req{eq:tt'-low}, we have
\begin{equation}\label{eq:tt'-low-single}
\phi\,\volp{\leq t}(t'-t)/(6\alpha)\leq (t+\eps)\volp{\leq t}
\iff (t'-t)\leq  6(t+\eps)\alpha/\phi\leq  6(t_0+\eps)\alpha/\phi.
\end{equation}
Starting from $t=\tau$, we consider how many times
we can step from $t$ to $t'$ using median expansion before reaching or
passing $t_0$. We know that $u\in\Vp{\leq\tau}$ and $d(u)\geq 1$, so
we start with $\volp{\leq \tau}\geq 1$.

Now, every time we go from $t$ to $t'$, we  know from \req{eq:growth-low} that
the volume grows by at least a factor $(1+\phi/2)$, and by
definition, $\volp {< t_0}\leq m$, so we can have at most
\[\log_{(1+\phi/2)} m<(2/\phi)\lg m\]
iterations before we reach or pass $t_0$. Therefore
\[t_0-\tau\leq (2/\phi)(\lg m)6(t_0+\eps)\alpha/\phi=12(t_0+\eps)\alpha(\lg m)/\phi^2.\]
Thus we have
\[\phi\leq \sqrt{12(t_0+\eps)\alpha(\lg m)/(t_0-\tau)}.\]
This also holds for $\phi=\min_{t\in[\tau,t_0)} \Phi(V^p_{\leq t})$,
completing the proof of \req{eq:conductance-low-single}.
\end{proof}

\subsection{Exploiting concentration}\label{sec:proof-ACL}
Our goal in this subsection is to provide an algorithm performing as
in Theorem \ref{thm:ACL}, restated below for convenience.
\begin{quote}\it
We are given a multigraph with $m$ edges, an initial mass
distribution $p^\circ$ of total mass $1$,  and
a listing of the vertices with positive mass in order of non-increasing density. 
Let
$p^*=\PR(\alpha,p^\circ)$. We are also given an excess parameter
$\Cdiff<1$.  We have a PageRank algorithm that staring from $p^\circ$ 
will either find a set $T$ with $\vol(T)\leq m$ and conductance
\[\Phi(T)=O(\sqrt{(\alpha\log m)/\Cdiff})\textnormal,\]
or certify that there is no set
$S$ with 
\[\excess {p^*} S\geq \Cdiff.\]
The maximal running time is $O(m/(\Cdiff \alpha))$,
but if a set $T$ is returned, then the time is also bounded by $O(\vol(T)(\log m)/(\Cdiff \alpha))$.

If we are further given a volume parameter $s\leq m\Cdiff/16$, the 
algorithm will either find the above $T$ with the additional guarantees that
$\vol(T)\leq 8s/\Cdiff$ and $\excess {p^*} T\geq \Cdiff/(16\lg (4s))$,
or certify 
there is no set $S$ with $\vol(S)\leq s$ and $\excess {p^*} S\geq \Cdiff$.
The maximal running time is $O(s/(\Cdiff \alpha))$,
but if a set $T$ is returned, then the time is also bounded by $O(\vol(T)(\log m)/(\Cdiff \alpha))$.
\end{quote}

\paragraph{With a volume bound}
We will first address the case where we have a volume parameter $s\leq
m\Cdiff/16$ to bound $\vol(S)$. In this case, we will apply Algorithm
\ref{alg:nibble-s}. Below we analyze this algorithm, but some of the
lemmas will be more general so that we can reuse them on other
algorithms.
\begin{algorithm}\label{alg:nibble-s}
\caption{BoundedNibble$(\alpha,p^\circ,\Cdiff,s)$--assumes $s\leq m\Cdiff/16$}
$\eps\asgn \Cdiff/2$\;
\Repeat{$\eps\leq \Cdiff/(8s)$}{
$\eps\asgn \eps/2$\;
$p\asgn\ApprPR'(\alpha,\eps,p^\circ)$\;
\If{$\volp{\geq 1/(2m)+\eps}\geq \Cdiff/(8\eps\lg(4s))$}{\Return{
$T=\Vp{\geq t}$ where $t\in (1/(2m)+\eps/2,1/(2m)+\eps]$ minimizes
$\Phi(\Vp{\geq t})$.}}
}
\Return{``There is no set $S$ with
$\excess {p^*} S\geq \Cdiff$ and $\vol(S)\leq s$.''}
\end{algorithm}

Consider an iteration of the loop in Algorithm
\ref{alg:nibble-s} based on some $\eps> \Cdiff/(8s)$. We know from Lemma~\ref{lem:sweep} that it
takes $O(1/(\eps\alpha))$ time to run $\ApprPR'$ including a sweep for
low conductance cuts. Therefore the last iteration will dominate our
bound for the total running time. In particular it
follows that the total time is at most
  $O(1/((\Cdiff/(8s))\alpha))= O(s/(\Cdiff\alpha))$.

Consider again an arbitrary iteration with some $\eps\geq \Cdiff/(8s)$.
We claim that
\begin{equation}\label{eq:half-volume}
\volp{> 1/(2m)+\eps/2}\leq  m.
\end{equation}
To see this, first note by definition that 
we have settled density $p(u)/d(u) > 1/(2m)+\eps/2>\eps/2$ for all $u \in \Vp{> 1/(2m)+\eps/2}$. Therefore
the settled mass in $\Vp{> 1/(2m)+\eps/2}$ is bigger than
$\volp{>  1/(2m)+\eps/2}\eps/2$, but the mass cannot be bigger
than $1$ and $s\leq m\Cdiff/16$, so $\volp{>1/(2m)+\eps/2}<2/\eps\leq 8s/\Cdiff\leq m$, as claimed in
\req{eq:half-volume}. 

Suppose the iteration 
passes the condition of the if-statement and
returns a set $T$. This is then the final iteration, so the total
time bound is $O(1/(\eps\alpha))$. The condition of the if-statement together
with \req{eq:half-volume} means that we satisfy the conditions supposed by  the lemma below. In return, the lemma states that the set $T$ 
satisfies all the conditions of Theorem \ref{thm:ACL} in the case where
we have a volume bound parameter $s$.
\begin{lemma}\label{lem:find-T}
Let $p^*=\PR(\alpha,p^\circ)$ and $p\asgn\ApprPR'(\alpha,\eps,p^\circ)$.
Suppose $\volp{\geq  1/(2m)+\eps}\geq \Cdiff/(8\eps\lg(4s))$
and $\volp{>  1/(2m)+\eps/2}\leq m$.
Let $T=\Vp{\geq t}$ where $t\in (1/(2m)+\eps/2,1/(2m)+\eps]$ minimizes
$\Phi(\Vp{\geq t})$. Then
$\vol(T)\leq m$, $1/(\alpha\eps)=O(\vol(T)(\log m)/(\Cdiff\alpha))$,
$\excess {p^*} T\geq \Cdiff/(16\lg(4s)$,
and $\Phi(T)\leq \tO\left(\sqrt{(\alpha\log m)/\Cdiff}\right)$.
\end{lemma}
\begin{proof}
First we note that 
\[\Cdiff/(8\eps\lg(4s))\leq \volp{\geq  1/(2m)+\eps}\leq\vol(T)\leq\volp{> 1/(2m)+\eps/2}\leq m.\]
From this we immediately get that $\vol(T)\leq m$ and 
$1/(\eps\alpha)=O(\vol(T)(\log m)/(\Cdiff\alpha))$. Moreover, we get
\[\excess {p^*} T> \vol(T)(1/(2m)+\eps/2)-\vol(T)/2m= (\eps/2)\vol(T)\geq \Cdiff/(16\lg(4s)).\]
Finally we need to argue about the conductance of $T$. Set $t_0^+=1/(2m)+\eps/2$
and   $\tau^+=1/(2m)+\eps$. We have $\volp{> t_0^+}\leq m$ as in \req{eq:high}, so
by Lemma \ref{lem:conductance},
\begin{align}
\min_{t\in(t_0^+,\tau^+]} \Phi(V^p_{\geq t})&\leq
\sqrt{\frac{18\alpha}{(\tau^+ -t_0^+)\volp{\geq \tau^+}}}
\leq \sqrt{\frac{36\alpha}{\eps\volp{\geq 1/(2m)+\eps}}}\nonumber\\
&\leq \sqrt{\frac{36\alpha}{\Cdiff/(8\lg(4s))}}
\leq O\left(\sqrt{\frac{\alpha\log m}{\Cdiff}}\right).\label{eq:cond+}
\end{align}
\end{proof}
This completes the proof of Theorem \ref{thm:ACL} with volume bound $s$ assuming
the algorithm terminates satisfying the
condition $\volp{>  1/(2m)+\eps}\geq \Cdiff/(8\eps\lg(4s))$ of the if-statement for some $\eps\geq \Cdiff/(8s)$.
We need to prove that this happens if there is a set $S$ with $\excess
{p^*} S\geq \Cdiff$ and $\vol(S)\leq s$, as stated by the following
lemma:
\begin{lemma}\label{lem:s-term} Let $p^*=\PR(\alpha,p^\circ)$ and suppose there is a set $S$ with $\excess
{p^*} S\geq \Cdiff$ and $\vol(S)\leq s\leq 2m$. Then
there is an integer $i\leq\ceil{\lg (2s)}$ such that if 
we set $\eps=\Cdiff 2^{-i-1}$ and $p\asgn\ApprPR'(\alpha,\eps,p^\circ)$, then
$\volp{>  1/(2m)+\eps}\geq \Cdiff/(8\eps\lg(4s))$.
\end{lemma}
\begin{proof}
A vertex $u\in S$
contributes $d(u)\max\{0,\ppd u-1/(2m)\}$ to $\excess {p^*} S$, so all together,
the vertices $u$ with $\ppd u\leq 1/(2m)+\Cdiff/(2s)$ contribute less than
$\Cdiff/2$.  Let
\[S_1=\{u\in S\mid \ppd u>1/(2m)+\Cdiff/2\}\]
and for $i=2,...,\lceil\lg (2s)\rceil$, define
\[S_i=\{u\in S\mid 1/(2m)+\Cdiff2^{-i}< \ppd u\leq 1/(2m)+\Cdiff2^{1-i}\}\]
Then
\[\sum_{i=1}^{\lceil\lg (2s)\rceil}(p^*(S_i)-\vol(S_i)/(2m))>\Cdiff/2.\]
Thus,
for some $i=\{1,...,\lceil\lg (2s)\rceil\}$, we have \[p^*(S_i)-\vol(S_i)/(2m)> \Cdiff/(2\lg (4s)).\]
This will be the value of $i$ we chose for the lemma.
If $i>1$ then
\[p^*(S_i)-\vol(S_i)/(2m)\leq \Cdiff2^{1-i}\vol(S_i)\leq \Cdiff2^{1-i}
\volpp{> 1/(2m)+\Cdiff2^{-i}}.\]
So
\begin{equation}\label{eq:ideal-vol}
\volpp{> 1/(2m)+\Cdiff2^{-i}} > 2^{i-2}/\lg (4s).
\end{equation}
This equation is also satisfied if $i=1$, for then
$S_1\neq\emptyset$, and hence
$\volpp{> 1/(2m)+\Cdiff2^{-1}}\geq 1$.

As stated in the lemma, we set $\eps=\Cdiff 2^{-i-1}$ and
$p\asgn\ApprPR'(\alpha,\eps,p^\circ)$. Then our settled distribution
$p$ satisfies $\ppd u -\eps\leq \pd u\leq \ppd u$ for all vertices
$u$.  Therefore
\[\Vp{> 1/(2m)+\eps}\supseteq \Vpp{> 1/(2m)+2\eps}=\Vpp{>  1/(2m)+
\Cdiff2^{-i}}\textnormal,\]
and by \req{eq:ideal-vol},
\[\volpp{> 1/(2m)+\Cdiff 2^{-i}} \geq
2^{i-2}/\lg (4s)=\Cdiff/(8\eps\lg(4s)).\]
\end{proof}
This completes the proof of Theorem \ref{thm:ACL}
when a volume bound $s$ is given.

\paragraph{Without a volume bound}
With no volume parameter bound $s$, we will run Algorithm
\ref{alg:nibble} below, claiming that it satisfies that the statement
of Theorem~\ref{thm:ACL}.
\begin{algorithm}\label{alg:nibble}
\caption{Nibble$(\alpha,p^\circ,\Cdiff)$}
$\eps\asgn \Cdiff/2$\;
\Repeat{$\eps\leq \Cdiff/(16m)$}{
$\eps\asgn \eps/2$\;
$p\asgn\ApprPR'(\alpha,\eps,p^\circ)$\;
\If{$\volp{> 1/(2m)+\eps/2}\leq m$ and $\volp{\geq 1/(2m)+\eps}\geq \Cdiff/(8\eps\lg(8m))$}{\Return{
$T=\Vp{\geq t}$ where $t\in (1/(2m)+\eps/2,1/(2m)+\eps]$ minimizes
$\Phi(\Vp{\geq t})$.}}
\If{$\volp{<1/(2m)-\eps}\leq m$ and $\volp{< 1/(2m)-2\eps}\geq \Cdiff/(8\eps\lg(8m))$}{\Return{
$T=\Vp{\leq t}$ where $t\in [1/(2m)-2\eps,1/(2m)-\eps)$ minimizes
$\Phi(\Vp{\leq t})$.}}
}
\Return{``There is no set $S$ with
$\excess {p^*} S\geq \Cdiff$.''}
\end{algorithm}
Algorithm \ref{alg:nibble} has a lot of similarities with Algorithm
\ref{alg:nibble-s} applied with the trivial volume parameter bound
$s=2m$, but instead of always returning a set $T$ of high density
vertices, it may also return a set of low density vertices. The first
condition in each if-statement ensures that the set $T$ returned has
$\vol(T)\leq m$. The running time analysis is exactly the
same as that for Algorithm \ref{alg:nibble-s} with $s=2m$. 

Concerning the first if-statement, we note that the conditions matches
exactly the conditions supposed in Lemma \ref{lem:find-T}, which implies
that the set returned satisfies all the requirements of Theorem \ref{thm:ACL}.

We now need a corresponding lemma for the case where a set is returned
from the second if-statement, noting that now we have no no volume bound
expect the trival $2m$.
\begin{lemma}\label{lem:find-T-}
Let $p^*=\PR(\alpha,p^\circ)$ and $p\asgn\ApprPR'(\alpha,\eps,p^\circ)$.
Suppose $\volp{\leq 1/(2m)-\eps}\leq m$ and $\volp{< 1/(2m)-2\eps}\geq \Cdiff/(8\eps\lg(8m))$.
Let $T=\Vp{\leq t}$ where $t\in [1/(2m)-2\eps,1/(2m)-\eps)$ minimizes
$\Phi(\Vp{\leq t})$. Then
$\vol(T)\leq m$, $1/(\alpha\eps)=O(\vol(T)(\log m)/(\Cdiff\alpha))$,
and $\Phi(T)\leq \tO\left(\sqrt{\frac{\alpha\log m}{\Cdiff}}\right)$.
\end{lemma}
\begin{proof}
First we note that we have
\[\Cdiff/(8\eps\lg(8s))< \volp{\leq 1/(2m)-2\eps}\leq\vol(T)\leq
\volp{<1/(2m)-\eps}\leq m.\]
From this we immediately get that $\vol(T)\leq m$ and 
$1/(\eps\alpha)=O(\vol(T)(\log m)/(\Cdiff\alpha))$. 

Concerning the conductance, with $t_0^-=1/(2m)-\eps$, we have 
$\volp{<t_0^-}\leq m$ as in \req{eq:low}. 
With $\tau^-=1/(2m)-2\eps$, it follows
from Lemma \ref{lem:conductance-low}  that
\begin{align*}
\min_{t\in[\tau^-,t_0^-)} \Phi(V^p_{\leq t})&\leq
\sqrt{\frac{18\alpha}{(t_0^- - \tau^-)\volp{\leq \tau^-}}}
\leq \sqrt{\frac{18\alpha}{\eps\volp{\leq 1/(2m)-2\eps}}}\\
&\leq \sqrt{\frac{18\alpha(8\lg(8ms))}{\Cdiff}}
\leq O\left(\sqrt{\frac{\alpha\log m}{\Cdiff}}\right).
\end{align*}
\end{proof}
The above two lemmas imply that if a set $T$ is returned by any of the
two if-statements in Algorithm \ref{alg:nibble}, then $T$ satisfies
all the requirements of Theorem \ref{thm:ACL} in the case without a
volume bound. The total running time is
$O(1/(\eps\alpha))=O(\vol(T)(\log m)/(\Cdiff\alpha))$.

To complete the proof of Theorem \ref{thm:ACL}, we need to show that
if there is a set $S$ with $\excess {p^*} S\geq \Cdiff$, then there
will be some iteration where we pass the conditions of one of the two
if-statements.

Assume first that
\begin{equation}\label{eq:head*}
\volpp{\geq 1/(2m)}\leq m
\end{equation}
Then we always have
\begin{equation}\label{eq:head}
\volp{\geq 1/(2m)+\eps/2}\leq \volp{\geq 1/(2m)}\leq \volpp{\geq 1/(2m)}\leq m
\end{equation}
In particular this means that we always satisfy the
first condition of the first if-statement. Moreover, by Lemma \ref{lem:s-term}
with $s=2m$, there will be some iteration satisfying the second
condition if there is a set $S$ with $\excess
{p^*} S\geq \Cdiff$. This completes the proof of Theorem \ref{thm:ACL}
when \req{eq:head*} is satisfied. It remains to consider the case where \req{eq:head*} is false, hence where 
$\volpp{< 1/(2m)}< m$. Then
\begin{equation}\label{eq:tail}
\volp{< 1/(2m)-\eps}\leq \volpp{< 1/(2m)}< m.
\end{equation}
The first condition of the second if-statement is thus always 
satisfied. We need a lemma corresponding to Lemma \ref{lem:s-term}, stating
that there will always be an iteration satisfying the last condition of
the second if-statement:
\begin{lemma}\label{lem:s-term-} Let $p^*=\PR(\alpha,p^\circ)$ and 
suppose there is a set $S$ with $\excess {p^*} S\geq \Cdiff$. Then
there is an integer $i\leq\ceil{\lg (2s)}$ such that if we set
$\eps=\Cdiff 2^{-i-1}$ and $p\asgn\ApprPR'(\alpha,\eps,p^\circ)$, then
$\volp{< 1/(2m)-2\eps}\geq \Cdiff/(8\eps\lg(4s))$.
\end{lemma}
\begin{proof}
Since $p^*(V)=1$ and $\vol(V)=2m$,
we have
\begin{align*}
\volpp {< 1/(2m)}/(2m)-\ppV{< 1/(2m)}&=\ppV{\geq 1/(2m)}-\volpp {\geq
 1/(2m)}/(2m)\\
&\geq \vol(S)/(2m)-p^*(S)
\geq \Cdiff.
\end{align*}
We can now make an analysis of $p^*$ for densities below $1/(2m)$
which is symmetric to the one we did in the proof of Lemma \ref{lem:s-term-} 
with densities above $1/(2m)$
but based on $\Vp{\leq 1/(2m)}$ instead of $S$ and with $2m$ instead of $s$.
Corresponding to \req{eq:ideal-vol}, we find an
$i\leq \lceil\lg (4m)\rceil$ such that
\begin{equation}\label{eq:ideal-vol-neg}
\volpp{< 1/(2m)-\Cdiff2^{-i}} \geq 2^{i-2}/\lg (8m).
\end{equation}
Since $p^*$ is non-negative, we must have $i\geq \lg(2\Cdiff m)$, but we
will not exploit this in the analysis.
As stated in the lemma, we set $\eps=\Cdiff 2^{-i-1}$ and
$p\asgn\ApprPR'(\alpha,\eps,p^\circ)$. Then our settled distribution
$p$ satisfies $\ppd u -\eps\leq \pd u\leq \ppd u$ for all vertices
$u$.  Since
$p^*$ dominates $p$, we get
\[\Vp{< 1/(2m)-2\eps}=\Vp{< 1/(2m)-\Cdiff2^{-i}}\supseteq
\Vpp{<  1/(2m)- \Cdiff2^{-i}}\]
so
\begin{equation}\label{eq:set-size-tail}
\volp{< 1/(2m)-2\eps}\geq 2^{i-2}/\lg (8m)=\Cdiff/(8\eps\lg(8m)).
\end{equation}
\end{proof}
Summing up, suppose there is a set
$S$ with $\excess {p^*} S\geq \Cdiff$. If $\volpp{\geq 1/(2m)}\leq m$
as in \req{eq:head*}, then \req{eq:head} and Lemma \ref{lem:s-term}
implies that some iteration passes the conditions of the first
if-statement.  Otherwise $\volpp{< 1/(2m)}\leq m$. Then
\req{eq:tail} and Lemma \ref{lem:s-term-} implies that some
iteration passes the conditions of the second if-statement. In either
case we find a set $T$.

Even if the set $S$ does not exist, the quality of $T$ is given by
Lemma \ref{lem:find-T} if it comes from the first if-statement, and by
Lemma \ref{lem:find-T-} if it comes from the second if-statement.
More precisely, the lemmas state that $\vol(T)\leq m$ and
$\Phi(T)\leq \tO\left(\sqrt{\frac{\alpha\log m}{\Cdiff}}\right)$.
Moreover, for the finding iteration, they state that
$1/(\eps\alpha)=O(\vol(T)(\log m)/(\Cdiff\alpha))$, which
then bounds the total time spent on finding $T$.

If $S$ does not exist we may not find the set $T$. In that case,
we report the non-existence of $S$. The total time
spent is dominated by the last round, which takes time
$O(1/(\eps\alpha))=O(1/(\Cdiff/(16m))\alpha))=
O(m/(\Cdiff\alpha))$. This completes the proof of Theorem \ref{thm:ACL}.

\subsection{Exploiting single low density}\label{sec:proof-endgame}
In this subsection we present Algorithm \ref{alg:some-small}, proving
that it performs as
stated in Theorem \ref{thm:endgame}:
\begin{quote}\it
We are given a multigraph with $m$ edges, an initial mass
distribution $p^\circ$ of total mass $1$,  and
a listing of the vertices with positive mass in order of non-increasing density. 
Let $p^*=\PR(\alpha,p^\circ)$. We are also given a parameter $\Cdiff<1$.
We have a PageRank algorithm that staring from $p^\circ$ 
will either 
find a set $T$ with $\vol(T)\leq m$ and
conductance 
\[\Phi(T)=O(\sqrt{(\alpha\log m)/\Cdiff})\textnormal,\]
or certify that there is no vertex $u$ with 
\[\ppd u\leq (1-\Cdiff)/(2m).\]
The running time of the algorithm is $O(m/(\Cdiff\alpha))$, and, depending
on the input, it will always end in one of the following cases:
\begin{itemize}
\item[(i)] The set $T$ is found in time $O(\vol(T)(\log m)/(\Cdiff\alpha))$ and
has $\excess {p^*} T\geq \Cdiff/(64\lg (8m))$.
\item[(ii)] The set $T$ is guaranteed to contain all small density vertices $u$ with
$\ppd u\leq (1-\Cdiff)/(2m)$.\\
In this case, even if $T$ is small, we have no better time bound than $O(m/(\Cdiff\alpha))$.
\item[(iii)] A certificate that there is no vertex
$u$ with $\ppd u \leq (1-\Cdiff)/(2m)$.
\end{itemize}
\end{quote}
\begin{algorithm}\label{alg:some-small}
\caption{SomeSmall$(\alpha,p^\circ,\Cdiff)$}
$\Cdiff'=\Cdiff/4$\;
$\eps\asgn \Cdiff'/2$\;
\Repeat{$\eps<\Cdiff'/(16m)$\hspace{2ex}}{
$\eps\asgn \eps/2$\;
$p\asgn\ApprPR'(\alpha,\eps,p^\circ)$\;
\If{$\volp{> 1/(2m)+\eps/2}\leq m$ and $\volp{\geq 1/(2m)+\eps}\geq \Cdiff'/(8\eps\lg(8m))$}{\Return{Case (i):
$T=\Vp{\geq t}$ where $t\in (1/(2m)+\eps/2,1/(2m)+\eps]$ minimizes
$\Phi(\Vp{\geq t})$.}}
}
$\eps\asgn\Cdiff/(8m)$\;
$p\asgn\ApprPR'(\alpha,\eps,p^\circ)$\;
\eIf{$\exists u:\pd u\geq (1-\Cdiff)/(2m)$}{\Return{Case (ii):
$T=\Vp{\leq t}$ where $t\in [(1-\Cdiff)/(2m),(1-0.75\Cdiff)/(2m))$ minimizes
$\Phi(\Vp{\leq t})$.}}{\Return{Case (iii): ``There is no
vertex $u$ with $\ppd u\leq (1-\Cdiff)/(2m)$.''}}
\end{algorithm}
Like with the previous algorithms, we note that the running time 
of Algorithm \ref{alg:some-small} is $O(1/(\eps\alpha))$ where $\eps$ has
the smallest value encountered. It follows that the maximal running time
is $O(m/(\Cdiff\alpha))$. Also, it follows from Lemma \ref{lem:find-T} with
$\Cdiff'$ instead of $\Cdiff$ and $s=2m$ that
if the if-statement in the repeat-loop returns a set $T$, then
$T$ satisfies all the requirements of Theorem \ref{thm:endgame} (i).

Assume now that
\begin{equation}\label{eq:high-end}
\volpp{<(1-\Cdiff/2)/(2m)}> m.
\end{equation}
If so, we have negative concentration
\[\volpp {<(1-\Cdiff/2)/(2m)}/(2m)-\ppV{<(1-\Cdiff/2)/(2m)}>
m\Cdiff/(4m)=\Cdiff/4=\Cdiff'\textnormal,\]
where $\Cdiff'$ has the value from Algorithm \ref{alg:some-small}.
We can only have positive concentration, or excess, on the set 
$S=\Vpp {\geq 1/(2m)}$, and negative concentration ourside, so
\begin{align*}
\excess {p^*} S& =\ppV{\geq 1/(2m)}-\volpp {\geq 1/(2m)}/(2m)\\
&= \volpp {<1/(2m)}/(2m)-\ppV{< 1/(2m)}\\
&\geq \volpp {<(1-\Cdiff/2)/(2m)}/(2m)-\ppV{<(1-\Cdiff/2)/(2m)}>\Cdiff'.
\end{align*}
It follows from Lemma \ref{lem:s-term} with $\Cdiff'$ instead of
$\Cdiff$ that the second condition of the if-statement is satisfied by
some iteration of the repeat-loop in Algorithm
\ref{alg:some-small}. Moreover, since the total volume is $2m$, by
\req{eq:high-end},
\[m>\volpp{\geq (1-\Cdiff/2)/(2m)}\geq \volp{\geq (1-\Cdiff/2)/(2m)}\geq\volp{\geq 1/(2m)+\eps/2}.\]
This means that the first condition of if-statement is satisfied
by all iterations of the repeat-loop. Thus, if \req{eq:high-end} is true, then some iteration
will return a set $T$, and we saw above that this $T$ 
satisfies Theorem \ref{thm:endgame} (i).

We now assume that the repeat-loop fails to find a set $T$, implying that
\req{eq:high-end} is false.
As in Algorithm \ref{alg:some-small},
we set $\eps=\Cdiff/(8m)$ and $p\asgn \ApprPR'(\alpha,\eps,p^\circ)$.
We also set $\tau=(1-\Cdiff)/(2m)$.
Since $p^*$ dominates $p$,
if $\ppd u\leq \tau$, then $\pd u\leq \tau$. Thus, if there is no $u$ with
$\pd u\leq\tau$, then we certify that 
there is no $u$ with $\ppd u\leq \tau$ as in Theorem \ref{thm:endgame} (iii). Thus we may assume that there is a $u$ with $\pd u\leq\tau=(1-\Cdiff)/(2m)$.

Since \req{eq:high-end} is false, with $t_0=(1-\Cdiff/2)/(2m)-\eps=(1-0.75\Cdiff)/(2m)$, we get
\begin{equation}
\label{eq:end-small}
\volp{<t_0}\leq \volpp{<(1-\Cdiff/2)/(2m)}\leq m.
\end{equation}
Since there is a $u$ with
$\pd u\leq\tau$, Algorithm \ref{alg:some-small} 
returns a set $T=\Vp{\leq t}$ where $t\in [\tau,t_0)$ minimizes
$\Phi(\Vp{\leq t})$. By Lemma \ref{lem:conductance-low-single}, we
have 
\[\Phi(T)
\leq\sqrt{\frac{12(t_0+\eps)\alpha \lg m}{t_0-\tau}}
=\sqrt{\frac{12((1-0.5\Cdiff)/(2m))\alpha \lg m}{\Cdiff/(8m)}}
=O\left(\sqrt{\frac{\alpha \log m}{\Cdiff}}\right)
\]
Moreover, the returned set $T=\Vp{\leq t}$ has $\vol(T)\leq\volp
{<t_0}\leq m$ and since $t\geq\tau$, $T$ includes every $u$
with $\ppd u\leq \pd u\leq\tau=(1-\Cdiff)/(2m)$, as required for Case (ii).
This completes the proof of Theorem \ref{thm:endgame}.

\section{Cactus}
\label{cactuss}

Recall that
 the set $\partial (U)$ of edges connecting $U$ and $T=V \backslash U$ is called a \emph{cut} while $U$ and $T$ are the \emph{sides} of the cut.

We call a loopless and 2-edge-connected graph $G$ a \emph{cactus} if each edge belongs to exactly one
cycle. This is equivalent to saying that all blocks are
cycles (allowing two-element cycles). For example,
a cactus may be obtained by duplicating each edge of
a tree.
Note that the minimum cuts of a cactus $C$ are exactly
those pairs of edges which belong to the same cycles
of $C$.

The following result states that the minimum cuts
of an arbitrary graph have the same structure as the
minimum cuts of a cactus.

\begin{theorem}[Dinits, Karzanov, and Lomonosov, \cite{DKL76}]
\label{cactus1}
Let $\lambda$ be an integer and
$G = (V,E)$ a loopless graph for which the cardinality of a minimum cut is $\lambda$. There is a cactus $C = (U, F)$ and a mapping $\phi$ from $V$ to $U$ so that the pre-images $\phi^{-1}(U_1)$ and $\phi^{- 1}(U_2)$ are the two sides of a minimum cut of $G$ for every 2-element cut of $C$ with sides
$U_1$ and $U_2$. Moreover, every minimum cut of $G$ arises this way. 
\end{theorem}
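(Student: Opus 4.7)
The plan is to follow the classical structural analysis of minimum cuts via submodularity. For $X \subseteq V$, write $d(X) = |\boundary X|$, and call a proper nonempty $X$ \emph{tight} if $d(X) = \lambda$. The essential tools are the submodular inequalities $d(X) + d(Y) \geq d(X \cap Y) + d(X \cup Y)$ and $d(X) + d(Y) \geq d(X \setminus Y) + d(Y \setminus X)$. Say that $X$ and $Y$ \emph{cross} if each of $X \cap Y$, $X \setminus Y$, $Y \setminus X$, $V \setminus (X \cup Y)$ is nonempty. The first step is the \emph{uncrossing lemma}: when two tight sets $X,Y$ cross, each of the four regions is a proper nonempty subset, so each has cut value $\geq \lambda$; summing with submodularity forces equality throughout. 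Hence $X\cap Y$, $X\cup Y$, $X\setminus Y$, $Y\setminus X$ are all tight, and equality in submodularity forces that there is no edge between $X\setminus Y$ and $Y\setminus X$, nor between $X\cap Y$ and $V\setminus(X\cup Y)$.

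Next I would define the atoms of the construction: declare $u \sim v$ iff no minimum cut separates them, let $U$ be the quotient $V/\!\sim$, and let $\phi : V \to U$ be the natural map. Every minimum cut respects $\sim$, so minimum cuts descend to partitions of $U$. I would then partition the collection of tight sides into \emph{crossing equivalence classes}: two tight sides are related if they are connected by a chain of pairwise-crossing tight sides. Outside of these classes, the tight sides form a laminar family, which naturally organizes into a tree over $U$ where each non-laminar inclusion gives a doubled edge (the 2-element cuts of a 2-cycle in the cactus). The heart of the proof is to show that each crossing equivalence class has a \emph{circular} structure: there is a cyclic ordering of the blocks of atoms involved in the class such that the tight sides in the class are exactly the contiguous arcs of the cycle. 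This is proved by repeatedly uncrossing and using the no-diagonal-edges conclusion of the uncrossing lemma to glue partial orderings into a global cyclic one, while checking that any new tight set produced by uncrossing stays inside the class.

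Finally, I would assemble the cactus $C=(U,F)$ by attaching, at each node, the cycle associated to each crossing class it participates in and the doubled edges associated to non-laminar inclusions, thereby realizing each tight set as an arc of exactly one cycle. Correctness has three parts: $C$ is loopless and $2$-edge-connected with every edge on exactly one cycle (immediate from the construction); every 2-element cut of $C$, via $\phi^{-1}$, is a minimum cut of $G$ (by construction, arcs of cycles correspond to tight sets); and every minimum cut of $G$ arises this way (each tight side belongs to exactly one equivalence class and is therefore recovered as some arc). The boundary cases (a tight set inside a single atom of $\sim$, or a laminar tight set that is not in any crossing class) are handled by the 2-cycle doubled edges.

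The main obstacle is the circular-structure lemma for a crossing equivalence class. Showing that pairwise-crossing or chain-crossing tight sets force a consistent cyclic order on the underlying atoms, and that the class is closed under the uncrossing operations, is the combinatorial core of the argument; everything else in the construction is routine once this structural fact is in hand.
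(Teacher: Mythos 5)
The paper does not prove Theorem~\ref{cactus1}: it is stated as a known result cited from Dinits, Karzanov, and Lomonosov~\cite{DKL76} and used purely as a black box (together with Gabow's algorithm~\cite{Gab91}) in Section~\ref{cactuss}. There is therefore no ``paper's own proof'' to compare against; what follows reviews your plan on its own merits.

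As an outline of the classical argument, you identify the right ingredients: the submodular and posimodular inequalities for the cut function, the uncrossing lemma with its no-edges-between-opposite-corners consequence, contraction to atoms via the ``no min-cut separates them'' equivalence, and the partition of tight sides into crossing-connected classes with circular structure. This is indeed the standard skeleton. But you explicitly defer the one step that carries essentially all of the weight: the lemma that each crossing-equivalence class induces a cyclic partition of the atoms into blocks so that the tight sides in the class are exactly the contiguous arcs. That lemma is not a routine consequence of the uncrossing facts you have stated; it requires a genuine induction (typically: start from two crossing tight sets, observe their four corners are tight and pairwise ``adjacent'' in a circular sense, then argue every further tight set in the class refines this consistently) and checking that newly produced corner sets stay inside the class. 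Naming the lemma and calling it the ``combinatorial core'' does not discharge the obligation; as written this is a plan, not a proof.

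There is also a confusion in the assembly step. You write that the tight sides outside the crossing classes form a laminar family ``where each non-laminar inclusion gives a doubled edge.'' That phrase does not parse: a laminar family has no crossings, so there are no non-laminar inclusions inside it. The intended picture is that each cross-free tight set contributes a $2$-cycle (doubled edge) to the cactus, and that the crossing classes, once each is contracted to a single cycle, sit laminarly with respect to one another and with the cross-free tight sets. That hierarchical compatibility --- that the cycles and $2$-cycles can be glued at shared nodes so that precisely the min-cuts of $G$, and no others, appear as $2$-element cuts of the resulting cactus --- is a second nontrivial claim you assert without argument. Until both the circular-structure lemma and the hierarchical-gluing step are carried out (or cited), the proposal has real gaps at exactly the places where this theorem is hard.
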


Gabow's cactus algorithm \cite{Gab16} can construct the
cactus and mapping of Theorem \ref{cactus1} in $\tO(\lambda m)$ time,
but here we will do the construction in near-linear time if the input
graph is simple.

\begin{theorem}
\label{main2}
There is a near-linear time algorithm that given
a simple graph $G=(V,E)$ constructs
a cactus $C = (U, F)$ and a mapping $\phi$ from $V$ to $U$
as described in Theorem \ref{cactus1}.
\end{theorem}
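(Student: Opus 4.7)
The plan is to apply Theorem~\ref{thm:main-tech} to shrink $G$ to a multigraph $\bbar G$ with $\tO(m/\delta)$ edges preserving all non-trivial min-cuts, run Gabow's cactus algorithm \cite{Gab91} on $\bbar G$, and finally attach \emph{needles} to the resulting cactus to account for the trivial min-cuts of $G$.

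First I compute the minimum degree $\delta$ in linear time. If $\delta \leq \lg^5 n$, then $\lambda \leq \delta$ is small and Gabow's cactus algorithm applied directly to $G$ runs in $\tO(\lambda m) = \tO(m)$ time, giving the answer. Otherwise $\delta \geq \lg^5 n$, and I apply Theorem~\ref{thm:main-tech} in near-linear time to produce $\bbar G$ together with the contraction map $\pi : V \to V(\bbar G)$. I then invoke Gabow's cactus algorithm on $\bbar G$, truncating at cut size $\delta$: this either returns a cactus $\bbar C$ with a mapping $\bbar \phi : V(\bbar G) \to U(\bbar C)$ encoding all cuts of $\bbar G$ of size at most $\delta$, or certifies no such cut exists. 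The running time is $\tO(\delta \cdot \bbar m) = \tO(m)$. In the first case, setting $\phi_0 := \bbar\phi \circ \pi$ gives a pair $(\bbar C, \phi_0)$ that represents all non-trivial min-cuts of $G$, since $\pi$ preserves them by Theorem~\ref{thm:main-tech} and $\bbar\phi$ faithfully encodes the min-cuts of $\bbar G$ by Theorem~\ref{cactus1}.

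Next I incorporate the trivial min-cuts. Using Corollary~\ref{cor:min-cut} I compute $\lambda$ in near-linear time. If $\lambda < \delta$, no trivial min-cut exists and I output $(\bbar C, \phi_0)$. If $\lambda = \delta$, then each vertex $v$ of minimum degree gives a trivial min-cut $(\{v\},V\setminus\{v\})$, and for each such $v$ I graft a needle onto the current cactus: a fresh cactus node $u_v$ joined to $\phi_0(v)$ by two parallel edges, reassigning $\phi(v) := u_v$ and leaving $\phi$ unchanged on every other vertex. In the degenerate case where there are no non-trivial min-cuts at all, I start from the single-node cactus obtained by collapsing all of $V$ to one node and then attach needles as above. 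The total cost of needle attachment is $O(n) = O(m)$.

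The main obstacle is verifying correctness of the needle construction. Concretely I must check that (a) grafting a 2-cycle at an existing node still yields a cactus, since the new cycle is a legitimate new block sharing only a single articulation vertex with the rest; (b) the pair of edges of each needle forms a 2-element cut of the extended cactus whose two sides pull back under $\phi$ to exactly $\{v\}$ and $V\setminus\{v\}$; (c) no previously represented 2-element cut of $\bbar C$ is disturbed, because each existing cycle is left intact and the only new cycle is the needle itself; and (d) no spurious min-cut is introduced, because the only new non-trivial 2-element cut of a needle isolates the fresh node $u_v$, matching a true trivial min-cut of $G$. Combined with the min-cut representation property from Gabow's algorithm, this ensures $(C,\phi)$ satisfies Theorem~\ref{cactus1}, and the whole construction runs in $\tO(m)$ time.
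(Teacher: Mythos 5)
Your proof follows essentially the same route as the paper's: contract $G$ to $\bbar G$ via Theorem~\ref{thm:main-tech}, run Gabow's algorithm on $\bbar G$ with a cutoff at $\delta$ (so the cost is $\tO(\delta\bbar m)=\tO(m)$), pull the cactus back through the contraction map, and then graft degree-two ``needles'' to account for trivial min-cuts of $G$ when $\lambda=\delta$, including the degenerate case $\bbar\lambda>\delta$ where the cactus collapses to a single node with needles. The only minor divergence is that you attach a needle for every min-degree vertex unconditionally, whereas the paper first checks whether the trivial cut around $v$ is already represented by a degree-two cactus node with preimage exactly $\{v\}$; your version is slightly redundant but still satisfies Theorem~\ref{cactus1}.
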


\drop{
In order to show this theorem we need some definitions.
Two subsets
$X$ and $Y$ of vertices are called \emph{crossing} if none of $X \backslash Y$, $Y \backslash X$,
$X \cap Y$, $V \backslash (X \cup Y)$
is empty. Two cuts $\partial(X)$ and $\partial(Y)$ are \emph{crossing} if $X$ and $Y$ are crossing.
}

\begin{proof}
Let $\delta$ be the minimum degree in $G$. First we apply our
main technical result, Theorem~\ref{thm:main-tech}, which contracts
vertex sets in near-linear time, producing a graph $\bbar G=(\bbar
V,\bbar E)$ with $\bbar m=\tO(m/\delta)$ edges such that all
non-trivial min-cuts of $G$ are preserved in $\bbar G$.

As in the proof of Corollary \ref{cor:min-cut}, we now run Gabow's
edge-connectivity algorithm \cite{Gab95} on $\bbar G$, asking it to fail if the
edge-connectivity $\bbar \lambda$ of $\bbar G$ is above $\delta$. This
takes $\tO(\delta \bbar m)=\tO(m)$ time, and now we compare $\bbar \lambda$
with the min-degree $\delta$.

If $\bbar \lambda>\delta$, then all min-cuts in $G$ are trivial and
then the edge connectivity $\lambda$ of $G$ is the minimum degree
$\delta$. Let $v_1,\dots, v_h$ denote the vertices of degree
$\delta$. Let $U = \{u_0, u_1,\dots, u_h\}$ be the vertex set of
a cactus $C=(U,F)$ in which $u_0$ and $u_i$ are connected by two
parallel edges in $F$ for each $i = 1,\dots, h$.  Let $\phi$ be the
mapping from $V$ to $U$ defined by $\phi(v_i)=u_i$ for $i = 1,\dots,
h$ and $\phi(v) = u_0$ for $v \in V \setminus \{v_1,\dots,
v_h\}$. Then $C$ and $\phi$ form a cactus for $G$ as described in
Theorem \ref{cactus1}.

Suppose instead that $\bbar \lambda\leq \delta$. Then $\bbar\lambda$
is also the edge connectivity $\lambda$ of $G$. We then
apply the cactus algorithm of Gabow \cite{Gab16} to $\bbar G$. In
$\tO(\bbar \lambda \bbar m)=\tO(m)$ time, it produces a cactus $C = (U, F)$ of $\bbar G$ and
a mapping $\bbar \phi$ from $\bbar V$ to $U$ as in Theorem \ref{cactus1}.
Next we turn $\bbar \phi$ into a mapping $\phi:V\fct U$ from the
original vertex set $V$ by reversing the contractions from
Theorem \ref{thm:main-tech}, that is, if $v$ got contracted into
the super vertex $\bbar v$, then $\phi(v)=\bbar \phi(\bbar v)$. Now
we have a cactus representing some min-cuts of $G$, including all non-trivial
min-cuts of $G$. If $\bbar\lambda< \delta$, then there are no trivial min-cuts
of $G$, and then our cactus $C$ is the final cactus for $G$.

Finally, if $\bbar\lambda=\delta$, there may be some trivial min-cut of $G$
that are not yet represented. The min-cut around a
min-degree vertex $v$ is represented if and only if
there is a vertex $u\in U$ such that $\{v\}=\phi^{-1}(u)$ and $u$ has only
two incident edge. If this
is not the case, let $u=\phi(v)$. To include the min-cut around $v$,
we introduce a new vertex $u'$ in $U$ and set $\phi(v)=u'$. The only
neighbor of $u'$ is $u$ and we add two parallel edges between them to $F$.
This adds the desired trivial min-cut but no other cuts to the
cactus representation.
We repeat this process for all min-degree vertices whose min-cut is not yet
represented. Now $C$ and $\phi$ is a cactus for $G$ as described in
Theorem \ref{cactus1}. Adding the trivial min-cuts took $O(m)$ time,
so the whole construction time is $\tO(m)$. This completes the proof of Theorem
 \ref{main2}.
\end{proof}

\section{Acknowledgments}
We would like to thank Hal Gabow and Yuzhou Gu as well as anonymous
referees from STOC'15 and J. ACM for patiently reading earlier
versions of this paper, pointing out issues and coming with useful
suggestions for the presentation.

\end{document}